\newtheorem{theorem}{Theorem}[section]
\newtheorem{corollary}{Corollary}[theorem]
\newcommand{\R}{\mathbb{R}}
\newcommand{\set}[1]{\{ #1 \}}
\newcommand{\dprod}[2]{\langle #1, #2\rangle}
\begin{document}
\title{Graph Embeddings via Tensor Products and Approximately Orthonormal Codes}
\author{Frank Qiu \thanks{Statistics Department; University of California, Berkeley}}
\date{}

\maketitle
\begin{abstract}
We propose a dynamic graph representation method, showcasing its rich representational capacity and establishing some of its theoretical properties. Our representation falls under the bind-and-sum approach in hyperdimensional computing (HDC), and we show that the tensor product is the most general binding operation that respects the superposition principle employed in HDC. We also establish some precise results characterizing the behavior of our method, including a memory vs. size analysis of how our representation's size must scale with the number of edges in order to retain accurate graph operations. True to its HDC roots, we also compare our graph representation to another typical HDC representation, the Hadamard-Rademacher scheme, showing that these two graph representations have the same memory-capacity scaling. We establish a link to adjacency matrices, showing that our method is a pseudo-orthogonal generalization of adjacency matrices. In light of this, we briefly discuss its applications toward a dynamic compressed representation of large sparse graphs.
\end{abstract}

\newpage
\section{Introduction}
Graphs are a central data structure that pop up in a variety of applications, and informative representations of graphs are an popular area of study \cite{Kang2022RelHDAG} \cite{Nickel2016HolographicEO} \cite{Nunes2022GraphHDEG}. Our proposed graph representation can be motivated from two different angles. 

Our graph representation can be viewed as a pseudo-orthogonal generalization of adjacency matrices.The adjacency matrix $A$ of a graph $G$ can be expressed as the sum of the coordinate tensors:
\[
A = \sum_{ij: A_{ij} = 1} e_i \otimes e_j = \sum_{ij: A_{ij} = 1} e_i e_j^T
\]
where $e_i$ are the standard coordinate vectors. Consider another orthonormal basis $B = \set{b_1,\cdots,b_d}$ with the associated change of basis matrix $P$. Changing bases to $B$, the coordinate form $A_B$ of the original adjacency matrix $A$ in the new basis satisfies the following relation:
\[
A_B = \sum_{ij: A_{ij}=1} b_i b_j^T = P A P^T
\]
Matrices that can be expressed as sums of simple orthonormal tensors are equivalent to adjacency matrices, up to an orthogonal change of basis. \textbf{Our insight is to replace the orthonormal vectors with pseudo-orthonormal vectors.} While one can pack at most $d$ orthornormal vectors into $\R^d$, we show that we can pack $O(e^{C d \epsilon^2 })$ $\epsilon$-orthogonal unit vectors in $\R^d$ through the use of spherical codes, which are random vectors sampled uniformly from $\mathbb{S}^{d-1}$. This has implications for the drastic compression of large sparse adjacency matrices, and our method can be viewed as a compressed analogue of the adjacency matrix.

Our method was originally inspired from hyperdimensional computing, which take the bind-and-sum approach to representing graphs. Bind-and-sum methods seek to represent a graph's edgeset through a three-step process:
\begin{enumerate}
    \item Encode each vertex as a vector
    \item Encode each edge by binding the vertex codes of its start and end vertices.
    \item Represent the edgeset as a superposition, or sum, of its edge encodings.
\end{enumerate}
We propose the tensor-spherical embedding, which uses the tensor product as the binding operation and random spherical vectors as the vertex code. In this paper, we highlight unique properties of the tensor product that make it an especially attractive binding operation for graph embeddings.

\section{Related Work}
Bind-and-sum methods broadly fall under the field of hyperdimensional computing (HDC) and vector symbolic architectures (VSAs) \cite{kanerva_sdm} \cite{kleyko_VSA}\cite{Gayler2009ADB} \cite{PlateHRR}. Both seek to imbue vector spaces with the ability to represent symbolic structures, and they usually consist of an binding operator and bundling operator. Objects are first encoded as vectors, and the symbolic binding of objects is represented by applying the binding operator to their vector codes. Sets of objects are then represented by the bundling operator. In a majority of cases, this bundling operator is summation, and we represent a set of objects as the sum, or superposition in HDC, of its object embeddings \cite{kanerva_sdm} \cite{kleyko_VSA}. This is called the superposition principle, and we establish a link between superposition and the tensor product.

The tensor product was first proposed by Smolensky \cite{Smolensky_tensor} as a binding operation for representing general role-filler pairs. We study the tensor product in the context of graphs where the binding order is fixed and small, sidestepping a major issue where the tensor product's dimension explodes with number of bound objects. Similarly, pseudo-orthogonal high-dimensional vectors have been proposed and explored by previous work \cite{kanerva_sdm} \cite{kleyko_VSA} \cite{Cheung2019SuperpositionOM} \cite{Thomas2020ATP} and are a core component of HDC. Spherical codes are one case of pseudo-orthogonal random vectors, and we study their synergistic pairing with the tensor product. Other works have also analyzed the capacities of different VSA schemes \cite{clarkson2023capacity} and have given experimental comparisons between different schemes \cite{Schlegel_2021}. We restrict our analysis to the context of graphs and analyze graph operations that do not have analogues in a general VSA, such as edge composition.

As mentioned previously, the proposed graph embedding is a bind-and-sum method, and many similar graph embedding methods have been proposed which use different binding and coding operations. Some use the circular correlation as a binding operation \cite{Nickel2016HolographicEO} 
 \cite{Ma2018HolisticRF}, while others use the Hadamard product \cite{Poduval_graph_embed} \cite{Kang2022RelHDAG} \cite{Nunes2022GraphHDEG}. Many such methods use learned vertex codes \cite{Nunes2022GraphHDEG} \cite{Nickel2016HolographicEO} \cite{Kang2022RelHDAG}, but in this paper we restrict ourselves to a random spherical codes. While some graph embedding works use tensor factorization to deconstruct the adjacency tensor \cite{ZulaikaTensor} \cite{NickelTensor} \cite{nickel2013logistic}, we have been unable to find another bind-and-sum graph embedding that use the tensor product as a binding operation. This might be due to its perceived memory cost relative to compressed bindings like the Hadamard product in the HDC community, but in this paper we shall challenge this perception.


\section{Notation and Terminology}
Our work focuses on embedding objects - vertices, edges, graphs - into vector spaces. Therefore, we shall denote the objects using bolded letters and their embeddings with the corresponding unbolded letter. For example, we denote a graph using $\boldsymbol{G}$ and its embedding using $G$. In later sections, when there is no confusion we shall drop the distinction between the object and its embedding, using just the unbolded letter.

We also adopt graph terminology that may be non-standard for some readers. There are many names for the two vertices of a directed edge $(d,c) = d \rightarrow c$.  In this paper, we shall call $d$ the domain of the edge and $c$ the codomain. Unless stated otherwise, all graphs will be directed graphs.
\section{Embedding Method Overview}
Given some set $\boldsymbol{V}$, we study a method for embedding the family of graphs that can be made from $\boldsymbol{V}$. Specifically, for any directed graph $\boldsymbol{G} = (\boldsymbol{V_G}, \boldsymbol{E_G})$ such that $\boldsymbol{V_G} \subseteq \boldsymbol{V}$ and $\boldsymbol{E_G} \subseteq \boldsymbol{V} \times \boldsymbol{V}$,  we embed $\boldsymbol{G}$ by embedding its edge-set $\boldsymbol{E_G}$ into a vector space. To this end, we first assign each vertex in $\boldsymbol{V}$ to a $d$-dimensional unit vector, drawn independently and uniformly from the unit hypersphere $\mathbb{S}^{d-1}$. We then embed each directed edge $(\boldsymbol{d},\boldsymbol{c})$ in the edgeset by the tensor product of their vertex embeddings:
\[
(\boldsymbol{d}, \boldsymbol{c}) \mapsto d \otimes c
\]
Fixing a basis, the tensor product is the outer product $d c^T$. Then, the embedding $G$ of the graph $\boldsymbol{G}$ - which we represent by its edgeset $\boldsymbol{E_G}$ - is the sum of its edge embeddings:
\[
\boldsymbol{G} \mapsto G = \sum_i d_i \otimes c_i = \sum_i d_i c_i^T
\]
Our embedding method can be separated into two parts: the spherical code used to embed vertices and the tensor product used to bind vertices into edges. We shall analyze each part separately as well as their synergistic pairing.

Finally, while we represent a graph $\boldsymbol{G}$ by representing just its edgeset $\boldsymbol{E_G}$, we can also embed its vertex set by augmenting $\boldsymbol{E_G}$ with self-loops $(\boldsymbol{v},\boldsymbol{v})$ for each $\boldsymbol{v} \in \boldsymbol{V_G}$. This introduces confusion between a vertex and its self-loop, but for certain graphs,such as directed acyclic graphs, this is not a concern. However, we will focus on the original method of solely embedding the edgeset in this paper. 

\section{Graph Operations} \label{GraphOp}
In this section, we will give an overview of some core graph operations possible under the proposed embedding framework. One key property we require is that the vertex codes be (nearly) orthonormal vectors, and this is the primary motivation behind our use of spherical codes. Later sections will give a more precise analysis of near orthonormality, but for clarity we assume exactly orthonormal vertex codes in this section.

\subsection{Edge Addition/Deletion}
Adding/deleting the edge $(a,b)$ to a graph corresponds to simply adding/subtracting $ab^T$ from the graph embedding $G$.

\subsection{Outbound/Inbound Vertices}
For a given vertex $d$ in $G$, suppose we wanted to find all the vertices $c_i$ that a vertex $d$ points to: $(d,c_i)$. To do this, we multiply $G$ by $d$ on the left:
\[ d^T G = d^T (\sum_i d_i c_i^T) = \sum_i \dprod{d}{d_i} c_i^T = \sum_{d_i = d} c_i^T
\]
In the last equality, we use the fact that the vertex embeddings are orthonormal. Since we represent sets as sums, this result is precisely the set of vertices that $d$ points to.

Conversely, say we were interested in finding all vertices $d_i$ that point to a vertex $c$: $(d_i,c)$. This would analogously correspond to right multiplying by $c$:
\[ G c = (\sum d_i c_i^T) c = \sum \dprod{c_i}{c} d_i = \sum_{c_i = c} d_i
\]
We can generalize this to a set of candidate vertices. For example, we first represent a set of vertices $S$ by the sum of its constituents $S = s_1 + \cdots + s_n$. Then, to find all vertices in $G$ that are connected to any vertex in $S$ by an outbound edge, we multiply $G$ on the right by $S$:
\[ G S = (\sum d_i c_i^T)(s_1 + \cdots + s_n) = \sum_{c_i \in S} d_i
\]
Analogously, to find all vertices in $G$ connected to $S$ by an outbound edge, we multiply $G$ on the left by $S^T$.
All subsequent operations are also multilinear like the edge query, and so they can all be similarly extended to handle sets of vertices or edges. For simplicity, we will focus on just the singleton case from now on.`

\subsubsection{Edge/Node Queries}
Combining the two operations above, checking if a graph $G$ contains the edge $(d,c)$ would correspond to multiplying $G$ by $d$ on the left and $c$ on the right:
\[ d^T G c = d^T (\sum d_i c_i^T) c = \sum \dprod{d}{d_i} \dprod{c}{c_i} = 1_{\set{(d,c) \in E_G}}
\]
Now suppose $G$ were augmented with self-loops for each vertex, where $G$ contains the loop $v v^T$ for each of its vertices $v$. One can use the above procedure to detect if $G$ contains a given vertex.

\subsection{Edge Composition, k-length Paths, and Graph Flow}

We can also perform edge composition with the graph embeddings. For example, given edges $(a,b)$ and $(b',c)$, we want to return the composite edge $(a,c)$ only if $b = b'$. This can be done by matrix multiplication of the edge embeddings:
\[ (ab^T)(b'c^T) = \dprod{b}{b'} a c^T = 1_{\set{b = b'}}a c^T
\]
Therefore, we can compose all edges in a graph $G$ by computing its second matrix power:
\[ G^2 = (\sum d_i c_i^T)(\sum d_j c_j^T) = \sum_{i,j} \dprod{c_i}{d_j} d_i c_j^T = \sum_{c_i = d_j} d_i c_j^T
\]
Generalizing to paths of length $k$, the matrix powers of $G$ correspond to the sets of paths of a fixed length:
\[ G^k = \set{\text{all $k$-length paths in $G$}}
\]
We can combine this with vertex queries to compute graph flows. Say one wanted to know all vertices $c$ that are reachable from starting vertex $d$ by a path of length $k$. Then one would compute:
\[ 
d^T G^k = d^T (\sum d_i c_i^T)(G)^{k-1}  =  \sum_{d_i = d} c_i^T (G^{k-1}) = \cdots = \set{\text{$c$ such that there is a $k$-length path $d \rightarrow c$}}
\]
An analogous operation exists for determining all vertices $d$ that end up at final vertex $c$ after a path of length $k$:
\[
G^k c = \set{\text{$d$ such that there is a $k$-length path $d \rightarrow c$}}
\]
\subsection{Edge Flipping, Undirected Graphs, and Alternization}
Given an edge $(a, b)$ with representation $ab^T$, the representation of the flipped edge $(b, a)$ is the transpose matrix:
\[ b a^T = (a b^T)^T
\]
Therefore, the embedding of the dual graph $G^{op}$ - flipping all edges of $G$- is the transpose of $G$.
\[
G^{op} = G^T
\]
One interesting application of this property is to extend our current graph framework to undirected graphs using the un-normalized matrix symmetrization procedure:
\[
G \mapsto G + G^T
\]
This procedure may introduce double-counting of certain edges, but for certain graphs, like directed acyclic graphs, this is not an issue.

Symmetrization corresponds to making a graph undirected, and alternization also has a graph interpretation. Recall that the un-normalized alternization procedure is:
\[
G \mapsto G - G^T
\]
Alternization sends an edge $ab^T$ to $a \wedge b = ab^T - ba^T$, and now when we take the transpose of the alternating edge $a \wedge b$ we have $(a \wedge b)^T = -(a \wedge b)$. Edges are now signed, where a negative sign reverses the direction of the edge. This leads to some interesting implications when interpreting coefficients of edges. For example, subtracting an edge now corresponds to adding the opposite edge rather than just edge deletion, and this in turn gives a natural interpretation to the zero coefficient: an equal quantity of the forward and reverse edges. This in turn gives allows one to interpret a coefficient's magnitude as the net flow of some quantity, while the coefficient's sign represents the flow's direction.

\subsection{Subsetting and Subgraphs}
In a previous section, we saw that vertex queries can determine the vertices connected to or from a set of vertices $\boldsymbol{S} \subseteq \boldsymbol{V}$. Now, suppose we wanted to know the edges whose domain is in $\boldsymbol{S}$. Abusing notation, let $S$ denote the matrix whose columns are the vertex embeddings $s \in S$, and let $P_S = S S^T$ be the associated projection matrix. To find all edges whose domain is in $S$, we compute:
\[ P_S G = S S^T (\sum d_i c_i^T) = \sum \dprod{s_j}{d_i}s_j c_i^T = \sum 1_{\set{s_j = d_i}} s_j c_i^T = \sum_{d_i \in S} d_i c_i^T
\]
Similarly, to find all edges whose codomain is in $S$, we compute:
\[ 
G P_S = (\sum d_i c_i^T) S S^T = \sum \dprod{s_j}{c_i} d_i s_j^T = \sum 1_{\set{s_j = c_i}} d_i s_j^T = \sum_{c_i \in S} d_i c_i^T
\]
Now, say we wanted to extract the full subgraph $G_S$ of $G$, whose edges are those of $G$ that have both domain and codomain in $S$. Combining the above two equations, this amounts to conjugating $G$ with $P_S$:
\[ G_S = P_S G P_S
\]

\subsubsection{Translation between Vertex Codes} \label{s2.5.1}
Suppose we have two different vertex codes $\phi_1: V \rightarrow V_1$ and $\phi_2: V \rightarrow V_2$. For a graph $G$, we then have a graph embedding induced from each vertex code, denoted $\phi_1(G)$ and $\phi_2(G)$. There is a natural way to convert $\phi_1(G)$ to $\phi_2(G)$. Let $\Phi_1$ be the matrix who columns are the vertex codes $\set{\phi(v_1), \phi(v_2), \cdots}$ in that order; let $\Phi_2$ be the similarly defined matrix for $\phi_2$. Then, we have the transition map between vertex codes $\Psi: V_1 \rightarrow V_2$ represented by the matrix $\Phi_2 \Phi_1^T$:
\[ \Psi: V_1 \rightarrow V_2 \quad ; \quad \phi_1(v_i) \mapsto \phi_2(v_i) \quad ; \quad \Psi = \Phi_2 \Phi_1^T
\]
Then, the graph representations $\phi_1(G)$ and $\phi_2(G)$ follow the relation:
\[ \phi_2(G) = \Psi \phi_1(G) \Psi^T
\]

\subsection{Counting via the Trace}
The trace of a graph embedding $G$ will count the number of self-loops in the graph:
\[ tr(G) = tr( \sum d_i c_i^T) = \sum \dprod{d_i}{c_i} = \sum 1_{\set{d_i = c_i}}
\]
This property can be used for many interesting graph operations.
\subsubsection{Vertex Counting in Augmented Graphs}
For a graph $G$ whose edgeset set is augmented with the self-loops $vv^T$ for every $v$ in its vertex set $V_G$, the trace will naturally return the cardinality of its vertex set:
\[ tr(G) = tr( \sum d_i c_i^T) = \sum \dprod{d_i}{c_i} = |V_G|
\]
\subsubsection{Edge Counting and a Natural Metric}
To count the number of edges in a graph $G = \sum d_i c_i^T$, one computes:
\[ tr(G^T G) = tr((\sum c_i d_i^T)(\sum d_j c_j^T)) = tr(\sum \dprod{d_i}{d_j} c_i c_j^T) = \sum \dprod{d_i}{d_j}\dprod{c_i}{c_j} = |E_G|
\]
This gives a nice relation to the Frobenius norm $||\cdot||_F$ via the identity $||A||_F^2 = tr(A^T A)$. This shows that the squared Frobenius norm of a graph representation is precisely the cardinality of its edge set. Hence, the natural metric on our graph embeddings would be the one induced from the Frobeinus norm:
\[
d(G,H) \coloneqq ||G-H||_F = \sqrt{\# \text{ of different edges}}
\]
where this metric depends solely on the number of different edges between the two graphs.

\subsubsection{Testing Graph Homomorphisms}
Finally, one interesting application of the trace is to quantify  the 'goodness' of a proposed graph homomorphism. Recall that a graph homomorphism $f$ from graph $G$ to graph $H$ is comprised of a vertex function $f_1 : V_G \rightarrow V_H$ and edge function $f_2: E_G \rightarrow E_H$ such that every edge $(a, b)$ is mapped by $f_2$ to the edge $(f_1(a), f_1(b))$. Hence, every homormorphism is completely determined by its vertex function, and so it suffices to test if a vertex function $f: V_G \rightarrow V_H$ induces a graph homomorphism. Therefore, for every proposed vertex function $f$ we assign a quantity called the graph homomorphism coefficient, described in detail below.

Given two graphs $G =  \sum d_i c_i^T$ and $H = \sum a_j b_j^T$ with proposed vertex function $f: V_G \rightarrow V_H$, the vertex function $f$ induces the following map:
\[ G = \sum d_i c_i^T \mapsto \sum f(d_i) f(c_i)^T = f(G)
\]
Note that $f(G)$ can be computed using the method mentioned in the previous section. If $f$ is truly a graph homomorphism, then $f(G)$ is a subgraph of $H$ and every edge of $f(G)$ is an edge of $H$. We therefore compute the number of matching edges between $f(G)$ and $H$, denoted $\delta_f$:
\[ \delta_f = tr(f(G)^T H) = tr(( \sum f(c_i) f(d_i)^T)(\sum a_j b_j^T)) = \sum \dprod{f(d_i)}{a_j} \dprod{f(c_i)}{b_j}
\]
Dividing $\delta_f$ by $|E_G|= tr(G^T G)$ gives the fraction of edges in $f(G)$ that have matches in $H$. We call this ratio the graph homomorphism coefficient, denoted as $\Delta_f \coloneqq \frac{\delta_f}{|E_G|}$. Note that $\Delta_f \in [0,1]$. If $f$ truly induces a graph homomorphism, then every edge has a match and $\Delta_f = 1$; conversely, if $f(G)$ has no matching edges in $H$, then $f(G)$ is totally different from $H$ and $\Delta_f = 0$. In this sense, the graph homomorphism coefficient gives a measure of how close a vertex function $f$ is to inducing a graph homomorphism.

\subsection{Vertex Degree and Gram Matrices}
In a directed graph $G$, consider a fixed vertex $v$. The in-degree of $v$, $in(v)$, is the number of edges that go into $v$ - ie. have codomain $v$. Similarly, the out-degree of $v$, $out(v)$, is the number of edges that go out of $v$ - ie. have domain $v$. Generalizing these definitions, we define the joint in-degree of two vertices $v$ and $w$, $in(v,w)$, as the number of vertices $d_i$ that have both an edge into $v$ and an edge into $w$. The joint out-degree of two vertices $v$ and $w$, $out(v,w)$, is similarly defined. Note that using the generalized definitions, the in/out-degree of a vertex with itself coincides with the single-case definition. Using our graph embeddings, there is a natural way to compute both of these degrees.

We first handle the in-degree of a vertex. Recall that $Gv$ is the superposition, or sum, of all vertices $d_i$ that go into $v$ with an edge $(d_i,v)$. Since vertex codes are orthonormal, we can use the squared Euclidean norm to count the number of vertices in superposition:
\[
||Gv||^2 = (Gv)^T Gv = v^T (G^T G) v = in(v)
\]
In fact, for two vertices $v$ and $w$, we again appeal to orthonormality of the vertex codes to compute their joint in-degree:
\[
(Gv)^T Gw = v^T (G^T G) w = in(v,w)
\]
Thus, we see that the Gram matrix $G^TG$ represents a bilinear form that computes the joint in-degree of two vertices:
\[
G^TG: V \times V \rightarrow \R \qquad; \qquad v \times w \mapsto v^T (G^TG) w = in(v,w)
\]
We shall called the gram matrix $G^T G$ the in-degree matrix.

A similar line of reasoning shows that the Gram matrix $GG^T$ computes the joint out-degree between two vertices. Hence, we shall call $G G^T$ the out-degree matrix. Immediately, one natural question when working with Gram matrices are their spectral properties. Indeed, there are natural links to properties of graph connectivity, which we discuss in the next section.

\section{Superposition and the Tensor Product}\label{TensorDeriv}

We shall derive the tensor product from the superposition principle used in hyperdimensional computing schemes, which states that a set's vector code is the sum of its members' vector codes. Indeed, along the way we shall establish the following connection between the superposition principle and the tensor product:
\begin{theorem}
For any fixed vertex code $\phi: \boldsymbol{V} \rightarrow V$, let $\psi: V \times V \rightarrow \R^n$ be any binding operation that respects superposition. Then the resulting bound code $\psi(V,V)$ has a unique linear derivation from the tensor product $V \bigotimes V$. In this sense, the tensor product is the most general binding operation.
\end{theorem}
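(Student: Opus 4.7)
The plan is to interpret ``respects superposition'' as bilinearity and then invoke the universal property of the tensor product. Since the paper represents a set by the sum of its constituents, respecting superposition means that binding an arbitrary vertex with a sum should agree with the sum of the pairwise bindings: that is, $\psi(\sum_i a_i, c) = \sum_i \psi(a_i, c)$ and $\psi(a, \sum_j c_j) = \sum_j \psi(a, c_j)$. First I would make this interpretation explicit as a definition, noting that additivity in each slot, together with the scalar compatibility one also expects from superposition (scaling a superposition scales the bundle), extends additivity to full $\R$-bilinearity of $\psi : V \times V \to \R^n$.

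Once $\psi$ is identified as a bilinear map, the universal property of the tensor product applies verbatim. Explicitly, I would recall that the tensor product $V \otimes V$ together with the canonical bilinear map $\otimes : V \times V \to V \otimes V$ is the initial object among bilinear maps out of $V \times V$: every bilinear $\psi$ factors uniquely as $\psi = \tilde{\psi} \circ \otimes$ for a linear $\tilde{\psi} : V \otimes V \to \R^n$. This delivers exactly the statement of the theorem, with $\tilde{\psi}$ being the unique linear derivation of $\psi(V,V)$ from $V \otimes V$.

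To make the construction of $\tilde{\psi}$ explicit rather than appealing to the universal property as a black box, I would choose a basis $\{e_i\}$ of $V$, so that $\{e_i \otimes e_j\}$ is a basis of $V \otimes V$, and define $\tilde{\psi}(e_i \otimes e_j) := \psi(e_i, e_j)$, extending linearly. Bilinearity of $\psi$ then forces $\tilde{\psi}(v \otimes w) = \psi(v,w)$ for all $v,w \in V$, and any other linear map agreeing with $\psi$ on simple tensors must agree on all of $V \otimes V$ since simple tensors span. This gives both existence and uniqueness.

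The only real subtlety, and hence the main obstacle, is pinning down a precise axiomatization of ``respects superposition'' strong enough to yield bilinearity. Additivity in each argument follows directly from the sum-of-members definition of superposition, but to obtain $\R$-bilinearity one must argue that scalar multiplication commutes with $\psi$ as well, either by assuming the binding operation commutes with scaling bundles (a natural HDC postulate, since the coefficients of a superposition are magnitudes) or by restricting to $\mathbb{Z}$-linear combinations and then invoking a continuity or polynomial-degree argument. Once that point is settled, the rest of the proof is an immediate application of the universal property.
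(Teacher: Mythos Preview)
Your proposal is correct and follows essentially the same route as the paper: interpret ``respects superposition'' as bilinearity of $\psi$ and then invoke the universal property of the tensor product to obtain the unique linear factorization $\tilde{\psi}: V \otimes V \to \R^n$. If anything, you are more careful than the paper about the passage from additivity to full $\R$-bilinearity; the paper simply asserts bilinearity from the two additivity identities without addressing scalar compatibility.
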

\begin{proof}
Assume we are given a vertex code $\phi: \boldsymbol{V} \rightarrow V$. Given a directed graph $G = (V_G, E_G)$ in $V$, our task is to represent $G$ by embedding edge set $E_G$ via a superposition of its edges. First, suppose $G$ has multiple edges from a common domain $\{d\}$ to multiple codomains $\{c_1, \dots, c_k\}$:
\[ 
\set{(d,c_1), \cdots, (d,c_k)} = \bigcup_i \set{(d,c_i)}
\]
Similarly, consider the reverse situation to multiple domains with a single codomain. Then, any edge embedding $\psi:V \times V \rightarrow \R^n$ that respects superposition must satisfy the following two equations:
\begin{gather*}
    \psi(\set{(d,c_1), \cdots, (d,c_k)}) = \sum \psi(\set{(d, c_i)})\\
    \psi(\set{(d_1, c), \cdots, (d_{l}, c)}) = \sum \psi(\set{(d_j, c)})
\end{gather*}
Thus, $\psi$ induces a bilinear function $\Tilde{\psi}$ on the vertex code via: \begin{gather*}
    \Tilde{\psi}: V \times V \rightarrow \R^n \\
    (\phi(v),\phi(w)) \mapsto \psi(v,w)
\end{gather*}
By the universality of the tensor product \cite{lang02}, the bilinear $\Tilde{\psi}$ map has a corresponding unique linear map $\psi^*: V \bigotimes V \rightarrow \R^n$, and so the tensor product uniquely maps into every edge embedding that respects superposition. 
\end{proof}
For example, three common alternative binding operations - Hadamard product, convolution, and circular correlation - all are linearly induced from the tensor product.  Representing the tensor product as the outer product matrix, the Hadamard product is the diagonal of the matrix while the convolution and circular correlation are sums along pairs of diagonals. Therefore, given a vertex code $\phi$ we may as well consider its natural induced edge embedding under the tensor product.

\section{Spherical Codes and Approximate Orthonormality}
In previous sections, we assumed our vertex codes to be exactly orthonormal. However, this orthonormality requirement is not efficient in the number of codes we can pack into a space, since we can pack at most $d$ orthonormal vectors into a $d$-dimensional space. Instead, we relax the requirement of strict orthonormality and use approximately orthornomal spherical codes. Two natural questions arise: how many pseudo-orthonormal vectors can one pack into $\R^d$, and how close do spherical codes come to achieving this limit? In this section we shall answer these two questions and give an account of approximate orthonormality in high dimensions.

\subsection{Packing Upper Bounds}
First, we give an upper bound on the number of approximately orthornormal vectors one can pack in $\R^d$. To make approximate orthonormality precise, we say the unit vectors $u,v$ are \textit{$\epsilon$-orthogonal} if $\abs{\dprod{u}{v}} < \epsilon$. The Johnson-Lindenstrauss Theorem \cite{JLLemma} implies a packing upper bound, which is shown to be tight \cite{JLopt}.
\begin{theorem}[Johnson-Lindenstrauss] \label{JL Theorem}
Let $x_1, \cdots, x_m \in \R^N$ and $\epsilon \in (0,1)$. If $d > \frac{8 ln(m)}{\epsilon^2}$, then there exists a linear map $f: \R^N \rightarrow \R^d$ such that for every $x_i, x_j$:
\[ (1-\epsilon)||x_i-x_j||^2 \leq ||f(x_i) - f(x_j)||^2 \leq (1+\epsilon) ||x_i - x_j||^2
\]
\end{theorem}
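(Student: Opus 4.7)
The plan is to prove the theorem by the probabilistic method: exhibit a random linear map whose failure probability is strictly less than one, so that a good deterministic map must exist. Specifically, I would take $f(x) = Ax$ where $A$ is a $d \times N$ matrix with i.i.d.\ entries $A_{ij} \sim \mathcal{N}(0, 1/d)$. By linearity, it suffices to show that for each of the $\binom{m}{2}$ difference vectors $y = x_i - x_j$, the map preserves $\|y\|^2$ up to a multiplicative $(1\pm\epsilon)$ with high probability.

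The core step is a single-vector concentration estimate. For any fixed nonzero $y \in \R^N$, the rotational symmetry of a standard Gaussian vector lets me write $d\,\|Ay\|^2 / \|y\|^2$ as $\sum_{i=1}^d Z_i^2$ with $Z_i \sim \mathcal{N}(0,1)$ i.i.d., i.e.\ a $\chi^2_d$ random variable. I would then establish a Chernoff-style tail bound of the shape
\[
\Pr\!\left( \Bigl| \|Ay\|^2 - \|y\|^2 \Bigr| > \epsilon \|y\|^2 \right) \;\leq\; 2 \exp\!\left(-\tfrac{d\epsilon^2}{4}\right),
\]
valid for $\epsilon \in (0,1)$. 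This is obtained by applying Markov's inequality to $\exp(t \sum_i Z_i^2)$, using the explicit MGF $\mathbb{E}[e^{tZ^2}] = (1-2t)^{-1/2}$, and optimizing $t$ separately for the upper and lower tails.

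With the per-vector bound in hand, I would union-bound over the $\binom{m}{2}$ pairwise differences $x_i - x_j$. The probability that some pair violates the distortion inequality is at most
\[
2\binom{m}{2} \exp\!\left(-\tfrac{d\epsilon^2}{4}\right) < m^2 \exp\!\left(-\tfrac{d\epsilon^2}{4}\right).
\]
Under the hypothesis $d > 8\ln(m)/\epsilon^2$, the exponent is smaller than $-2\ln m$, so this quantity is strictly less than $1$. Therefore some realization of $A$ succeeds on every pair simultaneously, and the corresponding $f$ is the desired map.

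The main obstacle is obtaining the chi-squared concentration inequality with a clean enough constant to match the stated $8\ln(m)/\epsilon^2$ threshold; the rest of the argument is essentially an application of Gaussian rotation invariance and a union bound. A secondary subtlety is making sure the tail bound holds uniformly in the direction $y$, which is immediate here because $\|Ay\|^2/\|y\|^2$ is distribution-invariant under $y \mapsto y/\|y\|$ by isotropy of the Gaussian ensemble.
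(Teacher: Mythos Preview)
Your proof sketch is correct and follows the standard probabilistic argument for the Johnson--Lindenstrauss lemma. Note, however, that the paper does not supply its own proof of this theorem: it is stated with a citation to the literature and then used as a black box to derive the packing corollary that follows. So there is no ``paper's proof'' to compare against here; you have simply filled in a proof that the paper omits. Your acknowledgment that matching the exact constant $8$ in the threshold $d > 8\ln(m)/\epsilon^2$ requires some care with the chi-squared tail bound is apt --- this is indeed where the bookkeeping lives --- but the structure (Gaussian projection, rotation invariance reducing to a $\chi^2_d$ concentration, union bound over $\binom{m}{2}$ pairs) is exactly right.
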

\begin{corollary} \label{PackingUpperBound}
For any $\epsilon \in (0,\frac{1}{2})$, one can pack at most $O(e^{C d \epsilon^2})$ $\epsilon$-orthogonal unit vectors in $\R^d$ for some universal constant $C$.
\end{corollary}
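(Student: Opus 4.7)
The plan is to translate the packing question into a low-distortion embedding problem and then invoke the tightness of the Johnson-Lindenstrauss bound from \cite{JLopt}. Suppose $v_1, \ldots, v_m \in \R^d$ are unit vectors that are pairwise $\epsilon$-orthogonal, so $\abs{\dprod{v_i}{v_j}} < \epsilon$ for all $i \neq j$. I would first define the linear map $f: \R^m \to \R^d$ by $f(e_i) = v_i$, where $e_1, \ldots, e_m$ is the standard orthonormal basis of $\R^m$.

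The next step would be to verify that $f$ is a low-distortion embedding of $\set{e_1, \ldots, e_m}$ into $\R^d$. Since $\|v_i\|=1$, expanding the squared norm gives
\[
\|f(e_i) - f(e_j)\|^2 = 2 - 2\dprod{v_i}{v_j} \in [2(1-\epsilon), 2(1+\epsilon)],
\]
while $\|e_i - e_j\|^2 = 2$ exactly, so $f$ has multiplicative distortion at most $(1 \pm \epsilon)$ on this finite set.

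Finally, I would invoke the converse direction of Johnson-Lindenstrauss from \cite{JLopt}: any $(1 \pm \epsilon)$-distortion embedding of an $m$-point orthonormal set into $\R^d$ forces $d \geq c \ln(m) / \epsilon^2$ for some universal constant $c > 0$. Rearranging this inequality for $m$ yields $m \leq \exp(d \epsilon^2 / c) = O(e^{C d \epsilon^2})$ with $C = 1/c$, which is the desired packing upper bound.

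The main obstacle is that Theorem \ref{JL Theorem} as stated is only the existence direction of JL, and by itself it produces the matching packing lower bound (one constructs $\epsilon$-orthogonal packings by JL-projecting the orthonormal basis of $\R^m$ whenever $d > 8 \ln(m)/\epsilon^2$). The upper bound genuinely requires the dimension lower bound from \cite{JLopt}; once that is granted, the remainder of the argument is just the one-line inner-product expansion above.
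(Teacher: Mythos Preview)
Your argument is sound and proves the upper bound as literally stated: given an $\epsilon$-orthogonal family $v_1,\dots,v_m$ of unit vectors in $\R^d$, the linear map $e_i \mapsto v_i$ is a $(1\pm\epsilon)$-embedding of the standard basis, and the JL dimension lower bound from \cite{JLopt} then forces $d \gtrsim \ln(m)/\epsilon^2$, hence $m \leq e^{Cd\epsilon^2}$.

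The paper's own proof runs in the opposite direction. It fixes $m$ with $d > 8\ln(m+1)/\epsilon^2$, applies the \emph{existence} direction of JL (Theorem \ref{JL Theorem}) to the set $\set{e_1,\dots,e_m,0}\subset\R^m$ to obtain near-isometric images in $\R^d$, and then checks that the normalized images $f(e_i)/\|f(e_i)\|$ are pairwise $4\epsilon$-orthogonal. That is a \emph{construction} of an $\epsilon$-orthogonal family of size $e^{\Theta(d\epsilon^2)}$ in $\R^d$, i.e.\ a packing \emph{lower} bound, not the ``at most'' claimed in the corollary. You correctly diagnose this in your final paragraph: Theorem \ref{JL Theorem} by itself only yields achievability, and the genuine upper bound requires the optimality result \cite{JLopt}. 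So the two arguments are complementary rather than competing---the paper's proof shows the bound is attainable, yours shows it cannot be exceeded---and together they give the tightness the paper advertises.
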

\begin{proof}
Let $m$ be any integer such that $d > \frac{8 ln(m+1)}{\epsilon^2}$. Consider the set of $(m+1)$ vectors $\set{e_1, \cdots, e_m, 0}$ in $\R^m$. These satisfy the conditions of the Johnson-Lindenstrauss Theorem, and let $f$ be the JL map. Then,
\begin{align*}
    (1-\epsilon)=(1-\epsilon)||x_i||^2 \leq ||f(x_i) - 0 ||^2 = ||f(x_i)||^2 \leq (1+\epsilon)
\end{align*}
Hence, since $||e_i - e_j||^2 = 2$ for $i \neq j$:
\begin{align*}
    2(1-\epsilon) \leq ||f(e_i) - f(e_j)||^2 = ||f(e_i)||^2 + ||f(e_j)||^2 - 2 \dprod{f(e_i)}{f(e_j)} \leq 2(1+\epsilon)
\end{align*}
Plugging the first equation to the second, we have:
\[ 
-2\epsilon \leq \dprod{f(e_i)}{f(e_j)} \leq 2\epsilon
\]
Thus, using $u_i = \frac{f(e_i)}{||f(e_i)||}$, the first equation, and the fact that $\epsilon < \frac{1}{2}$:
\[ -4 \epsilon \leq -\frac{2\epsilon}{(1-\epsilon)} \leq \dprod{u_i}{u_j} \leq \frac{2\epsilon}{1-\epsilon} \leq 4 \epsilon
\]
Therefore we have:
\[ D > 8 ln(m+1)/\epsilon^2 \implies m = O(e^{Cd\epsilon^2})
\]
for some constant $C$.
\end{proof}

\subsection{Spherical Codes}
We use spherical codes, which are random vectors drawn uniformly from the $d$-dimensional unit hypersphere $\mathbb{S}^{d-1}$, to generate approximately orthornormal vertex codes. In this section,  we give the exact distribution of their dot product and show that they achieve the packing upper bound in Corollary \ref{PackingUpperBound}.

\subsubsection{Distribution of the Dot Product}
\begin{theorem}[Dot Product of Spherical Codes] \label{DP Dist}
If $u,v$ are uniformly distributed over $\mathbb{S}^{d-1}$, let $X = \dprod{u}{v}$ and $Y = \frac{X+1}{2}$. Then, $Y$ follows a Beta$(\frac{d-1}{2}, \frac{d-1}{2})$ distribution.
\end{theorem}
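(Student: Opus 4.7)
The plan is to exploit the rotational invariance of the uniform measure on $\mathbb{S}^{d-1}$ to reduce the problem to finding the distribution of a single coordinate. Since the joint distribution of $(u,v)$ is invariant under simultaneous rotations, I can, without loss of generality, condition on $v$ being any fixed unit vector, say $v = e_1$. Then $X = \dprod{u}{v} = u_1$, the first coordinate of a single uniformly distributed point $u$ on the sphere, and the problem becomes finding the density of $u_1$.

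Next, I would compute this density geometrically. The set $\{u \in \mathbb{S}^{d-1} : u_1 = x\}$ is a $(d-2)$-sphere of radius $\sqrt{1-x^2}$. By the co-area formula (or by a direct slicing argument), the density of $u_1$ at $x \in [-1,1]$ is proportional to this slice's area relative to the sphere's metric, which gives
\[
f_X(x) \propto (1-x^2)^{(d-3)/2}, \qquad x \in [-1,1].
\]
The exponent $(d-3)/2$ comes from the $(d-2)$-dimensional surface area scaling as $(\sqrt{1-x^2})^{d-2}$, combined with the Jacobian factor $1/\sqrt{1-x^2}$ from projecting onto the $u_1$-axis.

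Then I would perform the change of variables $Y = (X+1)/2$, so $X = 2Y - 1$ with Jacobian $2$. The key algebraic identity is
\[
1 - (2y-1)^2 = 4y(1-y),
\]
which transforms the density into
\[
f_Y(y) \propto \bigl(y(1-y)\bigr)^{(d-3)/2}, \qquad y \in [0,1].
\]
Matching this against the Beta$(\alpha,\beta)$ density $y^{\alpha-1}(1-y)^{\beta-1}$ forces $\alpha = \beta = (d-1)/2$, which is exactly the claimed distribution. Since both densities integrate to $1$ on $[0,1]$, the normalization constants must agree, so no further computation is needed.

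I do not anticipate a genuine obstacle here; the only care required is in correctly accounting for the Jacobian when passing from the spherical slice area to the marginal density of $u_1$ and verifying the exponent $(d-3)/2$. Everything else is a clean substitution together with the identity $1-(2y-1)^2 = 4y(1-y)$.
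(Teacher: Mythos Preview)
Your proposal is correct and follows essentially the same route as the paper: reduce by rotational invariance to the first coordinate of a uniform point on $\mathbb{S}^{d-1}$, obtain $f_X(x)\propto(1-x^2)^{(d-3)/2}$ from the area of the spherical slice, and then substitute $Y=(X+1)/2$ to recognize the Beta$(\tfrac{d-1}{2},\tfrac{d-1}{2})$ density. The only cosmetic difference is that the paper threads the Jacobian through the angular parameter $\theta=\cos^{-1}(x)$ whereas you invoke the co-area/slicing Jacobian $1/\sqrt{1-x^2}$ directly; both yield the same exponent.
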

\begin{proof}
Since both $u$ and $v$ are uniformly distributed over $\mathbb{S}^{d-1}$, by symmetry we may fix $v$ as any unit vector without changing the distribution of $X$. Hence, let $v$ be the first coordinate vector $e_1$, and so $X = \dprod{u}{e_1} $. The set of vectors $u$ such that $\dprod{u}{e_1} = x$ form a spherical section of $\mathbb{S}^{d-1}$: a $(d-1)$-sphere of radius $\sqrt{1-x^2}$. Then, the set of vectors with dot product $ x \leq X \leq x + \delta$ corresponds to a $d$-dimensional belt on the sphere. Since the probability of a set is proportional to its surface area, $P(x \leq X \leq x + \delta)$ is proportional to the surface area of this belt. The area of a $d$-dimensional sphere of radius $r$ is $C r^{d-1}$ for some constant $C$, and if $\theta = \cos^{-1}(X)$ we have:
\begin{align*}
P(x \leq X \leq x + \delta) &\propto \int_{\cos^{-1}(x)}^{\cos^{-1}(x + \delta)} (\sqrt{1-cos^2(\theta)})^{d-2} d\theta\\
&= \int_x^{x + \delta} (\sqrt{1-t^2})^{d-2} d(\cos^{-1}{t}) \propto \int_x^{x + \delta} (1-t^2)^{\frac{d-3}{2}} dt
\end{align*}
Hence, if $f_X$ is the density of $X$, $f_X \propto (1-x^2)^{\frac{d-3}{2}}$. Then, letting $X+1 = 2Y$, we can simplify:
\[
(1-x^2)^{\frac{d-3}{2}} = (1-x)^{\frac{d-3}{2}}(1+x)^{\frac{d-3}{2}} = (2 - 2y)^{\frac{d-3}{2}}(2y)^{\frac{d-3}{2}} \propto (1-y)^{\frac{d-3}{2}}y^{\frac{d-3}{2}}
\]
Thus, $Y = \frac{X+1}{2}$ follows a Beta$(\frac{d-1}{2}, \frac{d-1}{2})$ distribution
\end{proof}
\begin{corollary}\label{Dprod moments}
The dot product $X$ between $u,v$, uniformly distributed over $\mathbb{S}^{d-1}$, has $E(X) = 0$ and $Var(X) = \frac{1}{d}$.
\end{corollary}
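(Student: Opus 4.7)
The plan is to leverage Theorem \ref{DP Dist}, which identifies the distribution of $Y = (X+1)/2$ as $\text{Beta}(\alpha, \alpha)$ with $\alpha = (d-1)/2$, and then pull the first two moments of $X$ back from the known moments of the Beta. Concretely, I would first note that because the two shape parameters of $Y$ are equal, $Y$ is symmetric about $1/2$, so $E(Y) = 1/2$, which immediately gives $E(X) = 2E(Y) - 1 = 0$.

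For the variance, I would invoke the standard formula $\text{Var}(Y) = \frac{\alpha\beta}{(\alpha+\beta)^2(\alpha+\beta+1)}$ with $\alpha = \beta = (d-1)/2$. Substituting and simplifying, the numerator is $(d-1)^2/4$, the first denominator factor is $(d-1)^2$, and the second denominator factor is $d$, yielding $\text{Var}(Y) = \frac{1}{4d}$. Since $X = 2Y - 1$, we get $\text{Var}(X) = 4\text{Var}(Y) = 1/d$.

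As a sanity check (and as an alternative route the reader might prefer), I would point out a coordinate-based derivation that avoids the Beta machinery entirely: by the symmetry used in Theorem \ref{DP Dist} we may fix $v = e_1$, so $X = u_1$ where $u$ is uniform on $\mathbb{S}^{d-1}$; symmetry $u \mapsto -u$ gives $E(X) = 0$, while the constraint $\sum_{i=1}^d u_i^2 = 1$ together with exchangeability of the coordinates yields $E(u_i^2) = 1/d$, and hence $\text{Var}(X) = E(X^2) = 1/d$.

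There is no real obstacle here — both paths are essentially routine once Theorem \ref{DP Dist} is in hand. The only minor bookkeeping is confirming the algebraic simplification $(2\alpha + 1) = d$ for the Beta variance, which is what drives the clean $1/d$ scaling that the rest of the paper relies on.
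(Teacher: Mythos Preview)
Your proposal is correct and matches the paper's intended route: the paper states the corollary immediately after Theorem~\ref{DP Dist} without giving a separate proof, so deriving $E(X)$ and $\operatorname{Var}(X)$ from the Beta$\bigl(\tfrac{d-1}{2},\tfrac{d-1}{2}\bigr)$ moments is exactly the implicit argument. Your coordinate-symmetry alternative is a nice independent check but goes beyond what the paper provides.
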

\subsubsection{Optimality of Spherical Codes}
Now, given a set of $k$ spherical codes, we want to ensure that all of them are approximately orthornomal. Surprisingly, we shall see that for a fixed probability of violating $\epsilon$-orthonormality, the number of spherical codes $k$ will be $O(e^{Cd\epsilon^2})$ for some universal constant $C$, matching our upper bound.
\begin{theorem}\label{Optimality}
For $k$ vectors $x_1, \cdots, x_k$ sampled iid from the uniform distribution on $\mathbb{S}^{d-1}$ and for any $\epsilon > 0$, we have:
\[ \max_{i \neq j} | \dprod{x_i,x_j} | < \epsilon
\]
with probability at least $1 - 2\binom{k}{2}e^{-\frac{d}{8}\epsilon^2 }$
\begin{proof}
Let $X$ denote the dot product between two different vectors. From Theorem \ref{DP Dist}, $\frac{X+1}{2}$ follows a $Beta(\frac{d-1}{2},\frac{d-1}{2})$ distribution, which is strictly sub-Gaussian with parameter $\frac{1}{2\sqrt{d}}$ \cite{MarchalBetaSub}. Hence, applying the standard sub-Gaussian concentration inequality \cite{Boucheron2004}:
\[
    P( X > \epsilon) = P(\tfrac{X+1}{2} - \tfrac{1}{2} > \epsilon/2) \leq e^{-d \epsilon^2/8}
\]
Since $X$ is symmetrically distributed, the lower tail bound is the same. Combining the two and using a union bound over all $\binom{k}{2}$ pairs gives the result.
\end{proof}
\end{theorem}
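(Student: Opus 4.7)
The plan is to reduce everything to a per-pair tail bound on $|\langle x_i, x_j\rangle|$ and then apply a union bound over all $\binom{k}{2}$ unordered pairs. By Theorem \ref{DP Dist} we already have the exact distribution of the pairwise dot product $X$: the rescaling $Y = (X+1)/2$ is Beta$(\tfrac{d-1}{2}, \tfrac{d-1}{2})$. So all the work goes into controlling a single Beta tail and then paying a union-bound factor.

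First I would fix a pair $(i,j)$ with $i \neq j$ and write $P(X > \epsilon) = P(Y - \tfrac{1}{2} > \tfrac{\epsilon}{2})$. To get exponential tail decay I would invoke the fact, proved by Marchal and Arbel, that a symmetric Beta$(\alpha,\alpha)$ distribution is strictly sub-Gaussian, with sub-Gaussian parameter equal to the standard deviation. For $\alpha = (d-1)/2$ a quick computation of the Beta variance gives $\operatorname{Var}(Y) = \frac{1}{4d}$, which agrees with Corollary \ref{Dprod moments} and yields sub-Gaussian parameter $\tfrac{1}{2\sqrt{d}}$. Plugging this into the standard sub-Gaussian Chernoff bound gives $P(Y - \tfrac{1}{2} > \tfrac{\epsilon}{2}) \leq e^{-d\epsilon^2/8}$.

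Second, because the Beta$(\alpha,\alpha)$ density is symmetric about $1/2$, the distribution of $X$ is symmetric about $0$, so the lower tail $P(X < -\epsilon)$ obeys the identical bound. Combining the two tails via a union bound gives $P(|X| > \epsilon) \leq 2 e^{-d\epsilon^2/8}$ for any fixed pair. Finally, a second union bound over the $\binom{k}{2}$ pairs produces the stated probability $1 - 2\binom{k}{2}e^{-d\epsilon^2/8}$.

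The main obstacle is simply importing the correct sub-Gaussian parameter for the symmetric Beta — this is where the constant $1/8$ in the exponent is born, and it is the one step that is not purely mechanical once Theorem \ref{DP Dist} is in hand. Everything else (the symmetry argument and the two union bounds) is routine, and no independence beyond the per-pair argument is needed, since the union bound does not require the events to be independent.
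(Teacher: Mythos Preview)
Your proposal is correct and mirrors the paper's proof essentially step for step: identify the pairwise dot product as a rescaled symmetric Beta via Theorem~\ref{DP Dist}, invoke the Marchal--Arbel strict sub-Gaussianity with parameter $\tfrac{1}{2\sqrt{d}}$ to get the one-sided tail $e^{-d\epsilon^2/8}$, use symmetry for the other tail, and finish with the union bound over $\binom{k}{2}$ pairs. The only difference is cosmetic --- you spell out the variance computation $\operatorname{Var}(Y)=\tfrac{1}{4d}$ that the paper leaves implicit.
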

Hence, for a fixed error threshold $T$, we can choose the maximum number of vectors $k$ such that the error probability is less than $T$:
\[ 
T \approx C_1\binom{k}{2}e^{-C_2d\epsilon^2} \implies  k \approx O(e^{C_3 d \epsilon^2})
\]
for universal constants $C_1,C_2,C_3$. This matches the upper bound given by Theorem \ref{JL Theorem}, and spherical codes achieve the packing upper bound in Corollary \ref{PackingUpperBound}.

\section{Alternative Binding Operations} \label{AlterBindOp}
As a bind-and-sum approach, our graph embedding method uses the tensor product to bind vertex codes into an edge. However, this means that the graph embedding space scales quadratically with the vertex code dimension, and in response various memory efficient alternatives have been used in the HDC literature \cite{Nunes2022GraphHDEG} \cite{Kang2022RelHDAG} \cite{Nickel2016HolographicEO}. In this section, we shall primarily analyze three prominent alternative binding operations: the Hadamard product, convolution, and circular correlation. This, in addition to Section \ref{Random Codes}, will prepare us to analyze other bind-and-sum embeddings and compare them to tensor-spherical embeddings.

\subsection{Hadamard Product}
The Hadamard product of two vectors $a$ and $b$, denoted $a \odot b$, is their element-wise multiplication:
\[
[a \odot b]_k = a_k b_k
\]
It preserves the dimension of the vertex code and scales linearly with the code dimension. The most common vertex codes when using the Hadamard product are phasor codes and binary codes: vectors whose entries are the complex and real units respectively. These codes have nice unbinding operations with respect to the Hadamard product, where multiplication by the conjugate of a vertex code will remove that code from the binding:
\[
\overline{a} \odot (a \odot b) = b
\]
In the next section, we also briefly touch on other codes. However, we shall see that since unbinding the Hadamard product requires element-wise division, many choices of random codes are not suitable for accurate graph operations.

\subsection{Convolution}
The convolution of two vectors $a$ and $b$, denoted $a * b$, is defined as:
\[
[a * b]_k = \sum_i a_i b_{k-i} 
\]
If $\mathcal{F}$ denotes the Fourier transform, then the convolution can also be expressed as:
\[
a * b = \mathcal{F}^{-1}(\mathcal{F}(a) \odot \mathcal{F}(b) )
\]
Hence, convolution and the Hadamard product are equivalent up to a Hermitian change of basis, and so it is sufficient to analyze just the Hadamard product. Indeed, common codes used with convolution are those whose Fourier transforms are phasor and binary codes, and there is a bijection between the codes used for convolution and codes used for the Hadamard product.

\subsection{Circular Correlation}
The circular correlation of two vectors $a$ and $b$, denoted $a \star b$, is defined as:
\[
[a \star b]_k = \sum_i a_i b_{k+i}
\]
Using the Fourier transform, the circular correlation also can be expressed as:
\[
a \star b = \mathcal{F}^{-1}(\overline{\mathcal{F}(a)} \odot \mathcal{F}(b))
\]
For a vector $a$, let $y_k$ be the $k^{th}$ Fourier coefficient of $a$:
\[
y_k = \sum_{s=0}^{n-1} \exp(\frac{2\pi i (-ks)}{n}) a_s
\]
Taking the conjugate, we get:
\[
\overline{y_k} =  \sum_{s=0}^{n-1} \exp(\frac{2\pi i ( ks)}{n}) a_s =  \sum_{s'=0}^{n-1} \exp(\frac{2\pi i (-ks')}{n}) a_{-s'}
\]
Hence, $\mathcal{F}^{-1}(\overline{\mathcal{F}a} )$ = $Pa$, where $P$ is the permutation that flips indices, and so the circular correlation is convolution augmented with flipping the first argument. This is a special case of using permutation to induce ordered bindings, which we shall cover in the next subsection. Since the circular correlation is a permuted convolution and convolution is equivalent to the Hadamard product, it is sufficient to study the Hadamard product with permutations.

\subsection{Alternative Bindings as Compressions of the Tensor Product}
As noted in our derivation of the tensor product from superposition, the tensor product is a universal construction. That is, every binding method that respects superposition - bilinearity - is linearly induced from the tensor product. The three alternative binding operations - the Hadamard product, convolution, circular correlation- are no exception.\\
\\
More specifically, fixing the standard basis the tensor product $v \otimes w$ is the outer product $v w^T$, a $d \times d$ matrix. The Hadamard product is the main diagonal of this matrix. Similarly, the entries of the circular correlation are sums along pairs of diagonals of $v w^T$, spaced by an interval of $2(d-1)$.  For convolution, we analogously take sums along pairs of the reverse diagonals (bottom left to top right).\\
\\
Hence, all three operations can be seen as compressions of the tensor product, where we compress a $d \times d$ matrix into a $d$-dimensional vector. While on the surface this might seem to save much in terms of memory, in the next subsection we shall see these compressed binding operations are unable to perform some fundamental graph operations. Moreover, in Sections \ref{EdgeQuer} and \ref{EdgeComp} we show that the Hadamard product's compression incurs a proportionate penalty in the number of edges it can accurately store in superposition, meaning it offers no real savings in memory.

\subsection{Hadamard Product: Graph Functionality}
As established in the preceding subsections, all three operations are, up to a Hermitian change of basis, equivalent to the Hadamard product with possibly a permutation applied to one of its arguments. Hence, in this section we shall analyze the Hadamard product and its suitability for graph embeddings.

Firstly, the Hadamard product can perform edge composition when used in conjunction with binary codes. This is because it has easy unbinding operations, where multiplication by a code will remove that code from the binding. Using Hadamard-binary embeddings, we can perform edge composition by taking the Hadamard product of two edges:
\[
(a \odot b) \odot (b \odot c) = a \odot c
\]
Moreover, note that in the case of a mismatch between the vertices, the resulting edge is:
\[
(a \odot b) \odot (b' \odot c) = a \odot c \odot n
\]
where $n$ is a noisy binary code. Importantly, there is no destruction of mismatched edges during edge composition. This will impact the number of edges we can accurately store in superposition, which we shall cover in Section \ref{EdgeComp}. 

One can ask if there is any alternative edge composition function that is better suited for the Hadamard product. Assuming this edge composition function respects superposition, one can show that the natural edge composition function under the Hadamard product is again the Hadamard product. We give a sketch of the argument: firstly, the desired composition function respects superposition, so it is a multilinear function and is completely determined by its action on some basis $\set{b_i}$. Fixing the standard basis, we impose the natural constraint:
\[
(e_i \odot e_j) \times (e_k \odot e_l) \mapsto 
\begin{cases}
e_i \odot e_l & j = k\\
0 & j \neq k
\end{cases}
\]
Since these constraints are satisfied by the Hadamard product, the edge composition function must be the Hadamard product. 

The Hadamard product is unable to represent directed edges because it is symmetric: $a \odot b = b \odot a$. A common fix is to permute one of its arguments before binding, so now we bind as $Pa \odot b$ where $P$ in some permutation matrix. Our augmented binding operation becomes:
\[
(a,b) \mapsto Pa \odot b
\]
This augmented operation is certainly able to represent directed edges, but it is now unable to perform edge composition. Disregarding oracle operations where one already knows the bound vertices, using the Hadamard product to perform edge composition results in:
\[
(Pa \odot b) \odot (Pb \odot c) = a \odot c \odot (b \odot Pb)
\]
The noise term $(b \odot Pb)$ will not cancel unless $P$ is the identity, the non-permuted case. A similar question arises whether there is a better edge composition funciton for the permutated Hadamard: assuming this edge composition function respects superposition, by multilinearity one can show no such operation exists. Intuitively, one would like to send the pair $(Pa,a)$ to the vector of all ones, but the symmetry in the Hadamard product and its compression means such a map is not possible without breaking multilinearity.

In summary, we see that the Hadamard product with binary codes sacrifices some core graph functionality: with the regular Hadamard product, one can perform edge composition but cannot represent directed edges; in the permuted case, one can represent directed edges but cannot perform edge composition. One can show that other core graph operations, like subsetting and graph homomorphisms, are also impossible under such schemes. Thus, in the Hadamard-binary case we see that it cannot represent graph operations that the tensor product can.

\subsubsection{Hadamard Product: Phasor Codes}
Phasor codes, which generalize binary codes, are even less suitable than binary codes. The natural unbinding operation for phasor codes is to multiply by the conjugate codes. However, this already makes it unsuitable for edge composition in both the non-permuted and permuted case. In the non-permuted case:
\[
\overline{(a \odot b)} \odot (b \odot c) = \Bar{a} \odot c \neq a \odot c 
\]
The permuted case has a similar deficiency. Again due to the compression of the Hadamard product, there is no multilinear function that conjugates just one argument of the bound edge $a \odot b$. Intuitively, this is due to the symmetry of the Hadamard product, since it unable to distinguish which particular vertex to conjugate. Edge composition using phasor codes is impossible, and any code which can be derived from the phasor code suffers a similar defect.

\subsubsection{Hadamard Product: Continuous Codes}
Similarly, since unbinding the Hadamard product requires element-wise division, many choices of random continuous codes are numerically unstable. In fact, the next section we shall see that specific cases of continuous codes all suffer from having infinite moments, making accurate graph operations impossible since the noise terms will overwhelm the signal. Morever, operations like edge composition are also impossible for similar reasons as the phasor code, where one needs invert a specific vertex of the bound edge. Due the Hadamard product's compression, such a multilinear map does not exist.

\section{Random Codes} \label{Random Codes}
In this section, we state and establish some basic results of various random codes in preparation of our analysis of other bind-and-sum approaches. Throughout this section we assume all vertex codes have common dimension $d$.

\subsection{Spherical Codes} \label{Sphere Codes}
We generate spherical codes by sampling iid from the $d$-dimensional unit hypersphere $\mathbb{S}^{d-1}$. They have the following properties:
\begin{theorem}[Spherical Code Properties]\label{SpherCodeProp}
Let $X$ denote the dot product between two spherical codes. Then, the following statements hold:
\begin{enumerate}
    \item $\frac{X+1}{2} \sim Beta(\frac{d-1}{2}, \frac{d-1}{2})$
    \item $E(X) = 0$ and $Var(X) = \frac{1}{d}$
    \item $|X| \propto \frac{1}{\sqrt{d}}$ with high probability.
\end{enumerate}
\begin{proof}
The first two claims follow from the results of Theorem \ref{DP Dist} and Corollary \ref{Dprod moments}. The third claim follows from the same sub-Gaussian concentration inequality used in Theorem \ref{Optimality}.
\end{proof}
\end{theorem}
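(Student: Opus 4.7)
The plan is to reduce all three statements to results already established in the paper, so that the theorem functions as a summary rather than requiring new argumentation. Claim (1) is precisely the conclusion of Theorem \ref{DP Dist}, obtained there by a surface-area calculation on spherical sections, and Claim (2) is precisely Corollary \ref{Dprod moments}; I would cite each directly, without recomputing.

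For Claim (3), the first step is to replace the informal $|X| \propto \frac{1}{\sqrt{d}}$ by the concrete tail bound
\[
P\bigl(|X| \geq c/\sqrt{d}\bigr) \leq 2 e^{-c^2/2}
\]
valid for every $c > 0$ and every $d \geq 2$. Under this reading the claim asserts that $|X|$ lies within a constant multiple of $1/\sqrt{d}$ with probability at least $1-\delta$ whenever $c \geq \sqrt{2\ln(2/\delta)}$, uniformly in the ambient dimension. To derive the bound I would reuse the argument already used inside the proof of Theorem \ref{Optimality}: by Claim (1) together with the Marchal-Arbel strict sub-Gaussianity of the symmetric Beta, the shifted and scaled variable $\frac{X+1}{2} - \frac{1}{2} = \frac{X}{2}$ is sub-Gaussian with variance proxy $\frac{1}{4d}$. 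The standard two-sided sub-Gaussian tail inequality then gives $P(|X|/2 \geq t) \leq 2 e^{-2 d t^2}$, and substituting $t = c/(2\sqrt{d})$ yields the stated bound.

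The main obstacle here is not technical but interpretive: the only ambiguity in the statement is what "with high probability" should mean when the threshold $1/\sqrt{d}$ itself scales with $d$. Once that interpretation is pinned down as the uniform-in-$d$ tail bound above, the proof collapses to a one-line invocation of sub-Gaussian concentration, and no new calculations are required beyond what has already been carried out for Theorem \ref{Optimality}.
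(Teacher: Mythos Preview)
Your proposal is correct and mirrors the paper's own proof exactly: Claims (1) and (2) are cited directly from Theorem~\ref{DP Dist} and Corollary~\ref{Dprod moments}, and Claim (3) is reduced to the same sub-Gaussian concentration inequality already invoked in Theorem~\ref{Optimality}. The only difference is that you spell out the tail bound and the interpretation of ``with high probability'' explicitly, which the paper leaves implicit.
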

\subsection{Rademacher Codes}\label{Rad Codes}
Rademacher codes are vectors $v$ where each entry is an iid Rademacher random variable: $v_i = \pm 1$ with probability $\frac{1}{2}$. They have the following properties.
\begin{theorem}[Rademacher Code Properties] \label{RadCodeProp}
Let $X$ denote the dot product of two Rademacher vectors. The following statements hold:
\begin{enumerate}
    \item $\frac{X+d}{2} \sim Binomial(d,\frac{1}{2})$
    \item $EX = 0$ and $Var(X) = d$
    \item $|X| \propto \sqrt{d}$ with high probability.
\end{enumerate}
\begin{proof}
Since the product of two Rademacher random variables, $X$ is the sum of $d$ Rademacher random variables. If $B$ = Bernoulli($\frac{1}{2}$), then $2B-1$ is a Rademacher random variable. Hence, the sum of $d$ Rademachers is $2 Y -d$ where $Y \sim Binomial(d,\frac{1}{2})$ and so $X = 2Y-d$. This establishes the first statement, and the second statement follows from the first. The third claim follows from Chernoff's inequality \cite{Boucheron2004}.
\end{proof}
\end{theorem}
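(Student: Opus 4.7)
The plan is to reduce the dot product $X$ to a sum of $d$ iid Rademacher variables and then derive all three claims from standard facts about such sums. The only nontrivial observation is that the product $u_i v_i$ of two independent Rademacher entries is itself Rademacher: the four equally likely sign pairs $(\pm 1, \pm 1)$ yield products $+1$ or $-1$ each with probability $1/2$. Hence $X = \sum_{i=1}^d u_i v_i$ is a sum of $d$ iid Rademacher variables, and the analysis reduces to handling this sum.

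For claim (1), I would use the affine correspondence between Rademacher and Bernoulli random variables. Writing each product $u_i v_i = 2B_i - 1$ for iid $B_i \sim \mathrm{Bernoulli}(\tfrac{1}{2})$, the sum telescopes to $X = 2Y - d$ where $Y = \sum_i B_i \sim \mathrm{Binomial}(d, \tfrac{1}{2})$. Solving for $Y$ gives $\frac{X+d}{2} \sim \mathrm{Binomial}(d, \tfrac{1}{2})$, which is claim (1). Claim (2) then follows immediately: either push the binomial mean $d/2$ and variance $d/4$ through the affine map $X = 2Y - d$, or observe directly that each summand $u_i v_i$ has mean $0$ and variance $1$ and invoke linearity plus independence.

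For claim (3), each summand lies in $[-1, 1]$ with mean zero, so the sum $X$ is sub-Gaussian with parameter $\sqrt{d}$. Applying Hoeffding's inequality (equivalently, the Chernoff bound for bounded mean-zero variables) yields a tail bound of the form $P(|X| \geq t) \leq 2 \exp(-t^2 / 2d)$. Choosing $t = c\sqrt{d}$ for a sufficiently large constant $c$ gives that $|X|$ is of order $\sqrt{d}$ with probability approaching $1$, matching the claimed scaling and paralleling the sub-Gaussian argument used in Theorem \ref{Optimality} for spherical codes.

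There is no real obstacle here: every step is a textbook manipulation, and the only care needed is to phrase claim (3) precisely as an exponential tail bound rather than the heuristic $|X| \propto \sqrt{d}$. The notable feature, worth contrasting with Theorem \ref{SpherCodeProp}, is the $\sqrt{d}$ scale of Rademacher dot products versus the $1/\sqrt{d}$ scale of spherical dot products, which will matter later when comparing the signal-to-noise behavior of the two code families in bind-and-sum representations.
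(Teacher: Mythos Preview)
Your proposal is correct and matches the paper's proof essentially step for step: both reduce $X$ to a sum of $d$ iid Rademachers via the observation that a product of independent Rademachers is Rademacher, derive the binomial law through the affine map $X = 2Y - d$, read off the moments, and then invoke a Chernoff/Hoeffding tail bound for claim (3). Your write-up is in fact slightly more explicit than the paper's (you spell out the four sign pairs and the explicit tail bound), but there is no substantive difference in approach.
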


\subsection{Other Continuous Codes}
Here, we will briefly describe three common continuous codes: Gaussian, Cauchy, and uniform vectors. The entries of Gaussian vectors are iid Gaussian, usually the standard Gaussian. Cauchy and uniform vectors are analogously generated, with their entries being drawn from the standard Cauchy and $Unif[0,1]$ respectively. The main problem with these codes is that unbinding the Hadmard product requires element-wise division, and the resulting ratios of random variables will have infinite moments. This results in ill-controlled noise terms and makes them unsuitable for accurate graph operations.
\begin{theorem}
For $t,u$ iid Gaussian, Cauchy, or uniform. Let $Y = \frac{t}{u}$. Then, for any of the three distributions, all moments of $Y$ are undefined.
\begin{proof}
In the Gaussian case, the ratio of two independent standard Gaussians is a Cauchy random variable, which is known to have infinite moments. In the Cauchy case, the ratio of two  independent standard Cauchy rv's has the density \cite{CauchyRatio}:
\[
f_{Y_c}(y) \propto \frac{1}{(y^2 - 1)}ln(y^2)
\]
Then, comparing integrals:
\[
\infty =\frac{1}{2} \int_0^{c}  |ln(y^2)| \leq \int_0^{c} y f_{Y_c} (y) \leq \int_0^\infty y f_{Y_c} (y)
\]
we see that the first moment is also undefined (for some sufficiently small constant $c$). In the uniform case, the ratio of two independent $U[0,1]$ rv's has the following density \cite{UniformRatio}:
\[
f_{Y_u}(y) = 
\begin{cases}
\frac{1}{2} & 0 < y < 1\\
\frac{1}{2z^2} & y \geq 1\\
0 & y \leq 0
\end{cases}
\]
A similar comparison test also shows that the first moment is undefined. Thus, the first moment, and hence all moments, are undefined for all three choices of distribution.
\end{proof}
\end{theorem}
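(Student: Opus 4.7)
The plan is to treat the three cases separately and in each case reduce the claim to showing that the first absolute moment $E|Y|$ already diverges; since the first absolute moment controls all lower-moment existence questions and higher moments are only larger, this suffices to show no moment is defined.

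First I would handle the Gaussian case, which is the quickest: the ratio of two independent standard Gaussians is a classical computation (change to polar coordinates or directly convolve) that yields a standard Cauchy distribution. Since the Cauchy distribution has density $\frac{1}{\pi(1+y^2)}$, the integral $\int |y|/(1+y^2)\,dy$ diverges logarithmically, so $E|Y|=\infty$ and hence no moment exists. I would just cite this standard fact rather than rederive it.

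For the Cauchy case, I would first obtain the density of $Y = t/u$ with $t,u$ iid standard Cauchy. The cleanest route is via the change of variables $(t,u)\mapsto(Y,u)$ with Jacobian $|u|$, giving $f_Y(y) = \int_{-\infty}^{\infty} |u| f(uy) f(u)\,du$. Plugging in the standard Cauchy density and using a partial-fractions trick produces the known closed form $f_Y(y) \propto \ln(y^2)/(y^2-1)$. To show $E|Y|=\infty$, I would bound the integrand near $y=1$ or $y=0$: specifically, near $y=0$ the density behaves like $|\ln(y^2)|$, so $\int_0^c y f_Y(y)\,dy$ is comparable to $\int_0^c |\ln y|\,dy$, which is finite, but near $y=1$ the density is bounded and near infinity it decays only like $\ln(y)/y^2$, giving $\int^\infty y f_Y(y)\,dy \sim \int^\infty \ln(y)/y\,dy = \infty$. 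Either tail suffices to kill the first moment.

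For the uniform case with $t,u$ iid $U[0,1]$, I would directly compute the density via the Jacobian formula $f_Y(y) = \int_0^1 u f_t(uy)\,du$. For $y\ge 1$ the constraint $uy\le 1$ forces $u\le 1/y$, giving $f_Y(y) = \int_0^{1/y} u\,du = 1/(2y^2)$; for $0<y<1$ the constraint is vacuous and $f_Y(y)=1/2$. Then $E|Y| = \int_0^1 \tfrac{y}{2}\,dy + \int_1^\infty \tfrac{1}{2y}\,dy = \infty$ because of the logarithmic divergence at infinity. The main obstacle across all three cases is the same issue: the denominator $u$ places non-negligible mass near zero (and in the Cauchy case the numerator also has heavy tails), which forces the ratio to have tails heavier than $1/y^2$, and so no moment survives. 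I expect the Cauchy subcase to require the most care because the closed form of the ratio density is less well-known than the other two, but once the density is in hand the tail estimate is routine.
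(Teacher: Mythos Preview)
Your proposal is correct and follows essentially the same structure as the paper: split into three cases, obtain (or cite) the density of the ratio, and show the first absolute moment already diverges. The Gaussian and uniform cases match the paper's argument almost verbatim.

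The one place where your proposal actually improves on the paper is the Cauchy case. The paper compares $\int_0^c y f_{Y_c}(y)\,dy$ to $\tfrac{1}{2}\int_0^c |\ln(y^2)|\,dy$ near $y=0$ and asserts this is infinite; but $\int_0^c |\ln y|\,dy = c - c\ln c < \infty$, so that comparison does not establish divergence. You instead look at the tail $y\to\infty$, where $f_{Y_c}(y)\sim \ln(y^2)/y^2$ and hence $y f_{Y_c}(y)\sim 2\ln(y)/y$, giving $\int^\infty \ln(y)/y\,dy = \infty$. That is the correct region of divergence, so your version of the Cauchy subcase is the sound one.
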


\section{Binding Comparison: Edge Queries} \label{EdgeQuer}
In Section \ref{AlterBindOp}, we showed the the convolution and circular correlation are special cases of the Hadamard product. Similarly in Section \ref{Random Codes}, we found that binary codes have the most representational power of the suitable codes considered. Hence, we shall primarily compare the memory and capacity of the tensor-spherical scheme to the Hadamard-Rademacher scheme. We also briefly consider other continuous coding schemes paired with the Hadamard product, but we shall see that they are too noisy for accurate graph operations. 

In this section, we focus on the edge query: given a graph embedding, we determine whether it contains a specific edge. For each bind-and-sum scheme considered, we analyze how the dimension of the embedding (memory) and number of edges in superposition (capacity) affect the accuracy of correctly retrieving the right edge from a graph embedding.

We denote the edge binding operation as $\psi$, which can be either the Hadamard product or the tensor product. Let us work with the following fixed graph:
\[
G = \psi(v,u) + \sum_{i=1}^{k} \psi(q_i,r_i)
\]
where all the $q,r$'s are distinct from $u,v$. We perform an edge query to detect the presence of the edge $(u,v)$.

\subsection{Hadamard Product and Rademacher Codes}\label{EQ:HR}
Our graph embedding is of the form:
\[
G = u \odot v + \sum_{i=1}^k q_i \odot r_i
\]
We will first analyze detecting both the signal edge $(u,v)$ and a spurious edge $(a,b)$ that does not exist in the graph. Then, in the case of $M$ competing spurious edges, we analyze the probability of the true edge having a higher edge query score than the $M$ competing spurious ones. Throughout this section, we denote the dimension of the vertex code as $d$, so $u,v,q,r \in \R^d$.

The edge query procedure for Hadamard-Rademacher embeddings proceeds in two steps. First, for a candidate edge $a \odot b$ we unbind it from the graph: $(a \odot b) \odot G$. Second, we sum the result of this unbinding to get a score. The intuition is that if $(a,b)$ exists in the graph, unbinding by it will result in a vector of ones within the superposition. When we sum the result, this vector of ones will return $d$ (the dimension of the vertex code) plus other noise terms. For edge query scores of true and spurious edges, we have the following characterization results.

\begin{theorem} [EQHR: True Query]\label{Thm: EQHRTrue}
Let $G$ be a graph with signal edge $(u,v)$ and $k$ other edges, whose vertices are all distinct:
\[
G = u \odot v + \sum_{i=1}^k q_i \odot r_i
\]
Let $T$ denote the edge query score of signal edge $(u,v)$. We have the following:
\begin{enumerate}
    \item $T$ has the decomposition:
    \[
    T = d + \epsilon
    \]
    where $\frac{\epsilon + kd}{2}$ follows a Binomial$(kd,\frac{1}{2})$ distribution.
    \item $ET = d$
    \item $Var(T) = kd$
    \item $|\epsilon| \propto O(\sqrt{kd})$ with high probabiltiy
\end{enumerate}
\begin{proof}
To compute $T$, we first unbind $(u,v)$ from $G$:
\[
(u \odot v) \odot G = \boldsymbol{1} + \sum_{i=1}^k s_i
\]
where each $s_i = u \odot v \odot q_i \odot r_i$. Unbinding $u \odot v$ by itself results in a vector of ones, denoted $\boldsymbol{1}$. On the other hand, each $s_i$ in the error term is an independent Rademacher vector, as they are products of independent Rademachers. We sum the unbinding result to get the edge query score:
\[
T = sum( (u\odot v) \odot G ) = d + \sum_{i}^k \sum_{j=1}^d s_{ij} = d + \epsilon
\]
where each $s_{ij}$ is a Rademacher random variable. The noise term $\epsilon$ is a sum of $kd$ Rademacher random variables. Using Theorem \ref{Rad Codes} establishes all four claims.
\end{proof}
\end{theorem}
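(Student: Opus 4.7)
The plan is to unbind the candidate edge from $G$ explicitly, exploit the fact that Rademacher vectors are involutive under the Hadamard product, and then read off the distribution of the resulting noise sum. First I would compute
\[
(u \odot v) \odot G = (u \odot v) \odot (u \odot v) + \sum_{i=1}^k u \odot v \odot q_i \odot r_i.
\]
Since each entry of $u$ and $v$ is $\pm 1$, the first term collapses to the all-ones vector $\boldsymbol{1}$, and summing over the $d$ coordinates contributes exactly $d$ to $T$. This isolates the signal and leaves a noise term $\epsilon = \sum_{i=1}^k \sum_{j=1}^d u_j v_j q_{ij} r_{ij}$, giving the decomposition $T = d + \epsilon$ claimed in the statement.

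Next I would identify $\epsilon$ as a sum of $kd$ independent Rademacher random variables. Each summand $u_j v_j q_{ij} r_{ij}$ is a product of four independent $\pm 1$-valued entries, and the product of independent Rademachers is again Rademacher. The distinctness of the $2(k+1)$ vertex vectors assumed in the hypothesis guarantees that the underlying Rademacher entries across different $(i,j)$ pairs are jointly independent, so the $kd$ noise summands are mutually independent. (If one is uncomfortable assuming a distributional claim about $u, v$ given a fixed query, one can simply condition on $u, v$ and note that the remaining product $q_{ij} r_{ij}$ is still Rademacher and these are independent across $(i,j)$.)

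With the noise identified as a clean iid Rademacher sum, the four claims follow directly from Theorem \ref{RadCodeProp} applied to $\epsilon$. Writing each Rademacher as $2B - 1$ with $B \sim \mathrm{Bernoulli}(1/2)$ gives $\frac{\epsilon + kd}{2} \sim \mathrm{Binomial}(kd, 1/2)$, which is claim 1; claims 2 and 3 follow from the mean and variance of a sum of $kd$ centered unit-variance Rademachers; and claim 4 is a standard Hoeffding/Chernoff bound for sub-Gaussian sums, exactly analogous to the concentration argument used in Theorem \ref{Optimality}.

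The only subtle step is the independence argument for the noise summands, since the same $u_j v_j$ factor appears in all $k$ summands sharing index $j$. The resolution is that conditional on $u$ and $v$, multiplying a Rademacher product by a fixed $\pm 1$ still yields a Rademacher with the same independence structure, and the marginal distribution of $\epsilon$ does not depend on the conditioning; so the distributional claim holds unconditionally. Everything after that reduces to routine Rademacher-sum calculations, so I expect no further obstacles.
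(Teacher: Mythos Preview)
Your proposal is correct and follows essentially the same approach as the paper: unbind by $u\odot v$, observe the signal term collapses to $\boldsymbol{1}$ contributing $d$, identify the noise as a sum of $kd$ Rademachers, and invoke Theorem~\ref{RadCodeProp}. In fact your treatment is slightly more careful than the paper's, which simply asserts the $s_i$ are independent Rademacher vectors without addressing the shared $u\odot v$ factor; your conditioning argument makes this step rigorous.
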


\begin{theorem} [EQHR: Spurious Query]\label{Thm: EQHRFalse}
Let $G$ be a graph with signal edge $(u,v)$ and $k$ other edges, whose vertices are all distinct:
\[
G = u \odot v + \sum_{i=1}^k q_i \odot r_i
\]
Let $F$ denote the edge query score of spurious edge $(a,b)$, whose vertices are distinct from $G$. We have the following:
\begin{enumerate}
    \item $\frac{F + (k+1)d}{2}$ follows a Binomial $((k+1)d,\frac{1}{2})$ distribution.
    \item $EF = 0$
    \item $Var(EF) = (k+1)d$
    \item $|F| \propto O(\sqrt{(k+1)d})$ with high probability.
\end{enumerate}
\begin{proof}
First, we unbind by $a \odot b$:
\[
(a \odot b) \odot G = \sum_{i=1}^{k+1} s'_i
\]
where each $s'_i$ is an independent Rademacher. Summing the result, we see that $F$ is the sum of $(k+1)d$ independent Rademacher random variables. Using Theorem \ref{RadCodeProp} establishes the result.
\end{proof}
\end{theorem}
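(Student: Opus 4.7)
The proof will closely mirror the structure of Theorem \ref{Thm: EQHRTrue}, but now with no ``signal'' term surviving the unbinding. The plan is to show that every summand produced by unbinding $(a,b)$ from $G$ is a fresh Rademacher vector, so that summing entries yields exactly $(k+1)d$ iid Rademacher scalars, to which we can apply Theorem \ref{RadCodeProp}.

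First I would expand the unbinding:
\[
(a \odot b) \odot G \;=\; (a \odot b) \odot (u \odot v) \;+\; \sum_{i=1}^{k} (a \odot b) \odot (q_i \odot r_i).
\]
Under the hypothesis that the vertices of $(a,b)$ are distinct from every vertex of $G$, and that the vertices within $G$ are all distinct, every one of these $k+1$ terms is the Hadamard product of \emph{four} distinct independent Rademacher vectors. Since the entrywise product of independent Rademacher $\pm 1$ variables is again Rademacher, each term is an independent Rademacher vector $s'_i \in \{\pm 1\}^d$. Moreover, independence across the $k+1$ terms follows because distinct index sets of underlying vertex codes yield functionally independent products.

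Next I would sum the entries to obtain the edge query score
\[
F \;=\; \mathrm{sum}\bigl((a\odot b)\odot G\bigr) \;=\; \sum_{i=1}^{k+1}\sum_{j=1}^{d} s'_{ij},
\]
which is a sum of $(k+1)d$ iid Rademacher random variables. Applying Theorem \ref{RadCodeProp} directly with $d$ replaced by $(k+1)d$ gives: the shifted and scaled variable $\frac{F+(k+1)d}{2}$ is Binomial$((k+1)d, \tfrac{1}{2})$, $EF = 0$, $\mathrm{Var}(F) = (k+1)d$, and Chernoff concentration yields $|F| = O(\sqrt{(k+1)d})$ with high probability.

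The only delicate point, and the main thing worth justifying carefully rather than by analogy with the true-query case, is the independence claim: one must confirm that products of independent Rademachers indexed by disjoint (but overlapping across terms only in $a,b$) vertex sets remain mutually independent. This boils down to noting that each $s'_i$ contains the unique factor $q_i \odot r_i$ not shared with any other $s'_{i'}$, so conditioning on the common factor $a\odot b$ leaves independent pieces. With that verified, the result follows immediately by invoking Theorem \ref{RadCodeProp}.
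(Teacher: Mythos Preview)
Your proposal is correct and follows essentially the same approach as the paper's proof: unbind by $a\odot b$, observe that each of the $k+1$ resulting vectors is an independent Rademacher vector, sum the $(k+1)d$ entries, and invoke Theorem~\ref{RadCodeProp}. If anything, your version is more careful than the paper's, which simply asserts independence of the $s'_i$ without the conditioning argument you sketch at the end.
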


\begin{theorem}[EQHR: Correct Recovery] \label{Thm:EQHRprob}
Let $G$ be a graph with signal edge $(u,v)$ and $k$ other edges, whose vertices are all distinct:
\[
G = u \odot v + \sum_{i=1}^k q_i \odot r_i
\]
For the signal edge $(u,v)$ and $M$ spurious edges with distinct vertices from $G$, let $A$ denote the event that the true query score  $T$ exceeds all spurious query scores $F_1,\cdots,F_M$. For some constant $C$ we have:
\[
P(A) \geq 1 - M e^{-\frac{C d}{2k+1}}
\]
\begin{proof}
The correct recovery event can be rewritten as:
\[
P(A) = P(T > \max(F_1,\cdots,F_M)) = P(\cap \set{T > F_i}) = 1 - P(\cup \set{T \leq F_i})
\]
By construction, the $F_i$'s have the same distribution. Letting $\epsilon$ denote  the error term:
\begin{align*}
  P(\cup \set{T \leq F_i}) &\leq \sum_{i=1}^M P(T \leq F_i)\\
  &= M P(T \leq F_1)\\
  &= M P(d + \epsilon \leq F_1)\\
  &= M P(F_1 - \epsilon \geq d)
\end{align*}
A difference of Rademacher sums is still a Rademacher sum, so $F_1 - \epsilon$ is a sum of $(k+1)d + kd = (2k+1)d$ Rademachers. Hence, using Bernstein's inequality we bound the term:
\begin{align*}
P(F_1 - \epsilon \geq d) &\leq e^{-\frac{d/(2k+1)}{2(1+(3(2k+1)^2)^{-1})}}\\
&\leq e^{ -\frac{Cd}{2k+1}}
\end{align*}
for some constant $C \geq \frac{1}{4}$. Plugging this in gives:
\[
P(T > \max(F_1,\cdots,F_M)) \geq 1 - M e^{-\frac{Cd}{2k+1}}
\]
\end{proof}
\end{theorem}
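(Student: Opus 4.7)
The plan is to reduce the event $A^c$ to a single tail probability via a union bound, express that tail in terms of a sum of Rademacher-product terms using the decomposition of Theorem \ref{Thm: EQHRTrue}, and close with Bernstein's inequality.

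First I would apply the union bound. The spurious scores $F_1, \ldots, F_M$ are identically distributed, since each is built from an independent Rademacher pair disjoint from the vertices of $G$, so
\[
P(A^c) \leq \sum_{i=1}^M P(T \leq F_i) = M \cdot P(T \leq F_1).
\]

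Second, I would use the decomposition $T = d + \epsilon$ from Theorem \ref{Thm: EQHRTrue} to rewrite the event as $\{F_1 - \epsilon \geq d\}$. Writing both $F_1$ and $\epsilon$ out coordinate-wise shows that $F_1 - \epsilon$ is a signed sum of $(k+1)d + kd = (2k+1)d$ monomials in the underlying coordinate Rademachers, where each monomial is a product of four independent $\pm 1$ variables. Each term is therefore Rademacher-distributed, bounded by $1$ in absolute value, and has mean zero, with total variance $(2k+1)d$. Bernstein's inequality applied at threshold $d$ yields
\[
P(F_1 - \epsilon \geq d) \leq \exp\!\left(-\frac{Cd}{2k+1}\right)
\]
for an absolute constant $C$ (the coarse value $C = 1/4$ is sufficient). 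Plugging back into the union bound produces the claimed $1 - M e^{-Cd/(2k+1)}$.

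The main technical obstacle is justifying the concentration step, because the $(2k+1)d$ Rademacher-product terms appearing in $F_1 - \epsilon$ are only pairwise, not fully mutually, independent. For example, the three monomials $a_j b_j u_j v_j$, $a_j b_j q_{ij} r_{ij}$, and $u_j v_j q_{ij} r_{ij}$ multiply to the constant $1$, which is an $\mathbb{F}_2$-linear dependency among their underlying index sets. Pairwise independence suffices for the variance count but not for the usual proof of Bernstein. The cleanest workaround is to condition on $(u, v, a, b)$: the sum over $i$ then becomes a weighted sum in the genuinely independent variables $q_{ij} r_{ij}$ with weights in $\{-2, 0, 2\}$, to which a conditional Bernstein bound applies, while the residual term $\sum_j a_j b_j u_j v_j$ is handled by a second, outer concentration step. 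Both pieces give the same $e^{-Cd/(2k+1)}$ rate, and combining them recovers the stated bound.
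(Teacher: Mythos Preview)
Your proposal follows exactly the paper's route: union bound, rewrite as $P(F_1 - \epsilon \geq d)$ using the decomposition $T = d + \epsilon$, then apply Bernstein to the $(2k+1)d$ Rademacher-product terms. The paper simply asserts that ``a difference of Rademacher sums is still a Rademacher sum'' and applies Bernstein directly, without addressing the mutual-independence issue you flag; your observation that the monomials $a_jb_ju_jv_j$, $a_jb_jq_{ij}r_{ij}$, $u_jv_jq_{ij}r_{ij}$ are only pairwise independent is correct, and the conditioning patch you sketch (fix $u,v,a,b$, reduce to $\sum_{i,j}(a_jb_j-u_jv_j)q_{ij}r_{ij}$ plus a residual) is a legitimate way to make the step rigorous, yielding the same $e^{-Cd/(2k+1)}$ rate the paper states.
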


\subsection{Hadamard Product and Continuous Codes}
Now, let us suppose the we were working with any continuous code (Gaussian, Cauchy, Uniform). Our edge query would now be unbinding the graph by the reciprocal of the query vertex $u$:
\[
Q(u,G) = u^{-1} \odot (u \odot v + \sum_{i=1}^k q_i \odot s_i) = v + \sum_{i=1}^k u^{-1} \odot q_i \odot s_i
\]
In the noise term, note that we now have a sum of vector whose entries are ratios: $\frac{q_i}{u^{-1}}s_i$. In section \ref{Random Codes}, we saw that the entries will have undefined moments: they follow heavy-tailed distribution. Thus, it is very likely that the noise overwhelms the true answer $v$ regardless of how many edges $k$ are in superposition. This makes such continuous codes infeasible for vertex queries.

\subsection{Tensor Product and Spherical Codes}\label{EQ:TS}
Here, we look at the same error quantities for the tensor-spherical scheme: the edge query scores for true and spurious edges, as well as the probability of correctly identifying the true edge among $M$ spurious edges. As in the previous section, we assume that the vertex codes are $d$-dimensional.


For tensor-spherical embeddings, the graph embedding is:
\[
G = uv^T + \sum_{i=1}^k q_i r_i^T
\]
As in Section \ref{GraphOp}, the edge query for candiate edge $(a,b)$ takes the form:
\[
a^T G b
\]
For the tensor-spherical scheme, we give the following analogous results for the true and spurious edge queries.

\begin{theorem}[EQTS: True Query]\label{Thm:EQTSTrue}
Let $G$ be a graph with signal edge $(u,v)$ and $k$ other edges, whose vertices are all distinct:
\[
G = u v^T + \sum_{i=1}^k q_i r_i^T
\]
Let $T$ denote the edge query score of the signal edge $(u,v)$. We have the following:
\begin{enumerate}
    \item T has the decomposition:
    \[
    T = 1 + \sum_{i=1}^k \dprod{v}{q_i} \dprod{u}{r_i} = 1 + \sum_{i=1}^k \epsilon_i
    \]
    where each $\frac{\epsilon+1}{2}$ is the product of two independent Beta$(\frac{d-1}{2},\frac{d-1}{2})$ rv's.
    \item $ET = 1$
    \item $Var(T) = \frac{k}{d^2}$
    \item $|T-1| \propto O(\frac{\sqrt{k}}{d})$ with high probability.
\end{enumerate}
\begin{proof}
We first compute the edge query score:
\[
T = u^T G v =  u^T (uv^T + \sum_{i=1}^k q_i r_i^T ) v = 1 +  \sum_{i=1}^k \dprod{u}{q_i} \dprod{v}{r_i} = 1 + \sum_{i=1}^k \epsilon_i
\]
Note that by assumption, the dot products $\dprod{u}{q_i}$ and $\dprod{v}{r_i}$ are all independent of each other. Therefore, Theorem \ref{SpherCodeProp} establishes the first claim. Moreover, by the same theorem we have:
\[
E \epsilon_i = 0 \quad ; \quad E \epsilon_i^2 = \frac{1}{d^2}
\]
This establishes the second and third claims. Finally, we see that each $\epsilon_i$ is a random variable absolutely bounded by $1$ with mean 0 and variance $\frac{1}{d^2}$. Therefore, using Bernstein's inequality:
\[
P(\sum_{i=1}^k \epsilon_i \geq C \frac{\sqrt{k}}{d}) \leq \exp(\frac{-C^2 k/d^2}{\frac{k}{d^2} + \frac{C \sqrt{k}}{3d}}) = \exp(\frac{-C^2}{1+C\frac{d}{3\sqrt{k}}}) \approx \exp{-C}
\]
This establishes the fourth claim.
\end{proof}
\end{theorem}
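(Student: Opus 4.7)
The plan is to compute $T = u^T G v$ directly by bilinearity of the matrix product, then analyze the resulting noise terms using the spherical code properties from Theorem \ref{SpherCodeProp}.

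First, I would expand
\[
T = u^T(uv^T)v + \sum_{i=1}^k u^T q_i r_i^T v = 1 + \sum_{i=1}^k \dprod{u}{q_i}\dprod{v}{r_i},
\]
using that $u, v$ are unit vectors. Setting $\epsilon_i = \dprod{u}{q_i}\dprod{v}{r_i}$ gives the decomposition in claim 1. Each factor in $\epsilon_i$ is a dot product of two independent vectors drawn uniformly from $\mathbb{S}^{d-1}$, so Theorem \ref{DP Dist} gives the claimed Beta distribution for each factor, and the independence of $q_i, r_i$ from $u, v$ (and from each other) makes the two factors of $\epsilon_i$ independent.

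For claims 2 and 3, I would condition on $(u, v)$. By spherical symmetry, the conditional distribution of $\dprod{u}{q_i}$ given $u$ agrees with the distribution of a spherical-code dot product for any fixed unit $u$, so Corollary \ref{Dprod moments} gives conditional mean $0$ and conditional variance $\tfrac{1}{d}$, and likewise for $\dprod{v}{r_i}$. Conditional independence of the two factors then gives $E[\epsilon_i] = 0$ and hence $E[T] = 1$. A subtle point is that the $\epsilon_i$ share the common factors $u, v$, so they are not jointly independent; however, a tower-property computation conditioning on $(u,v)$ shows $\mathrm{Cov}(\epsilon_i, \epsilon_j) = 0$ for $i \ne j$, so the covariances vanish. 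Summing the variances yields $\mathrm{Var}(T) = k \cdot \tfrac{1}{d} \cdot \tfrac{1}{d} = \tfrac{k}{d^2}$.

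For claim 4, I would apply Bernstein's inequality conditionally on $(u, v)$. Given $(u, v)$, the $\epsilon_i$ are independent with conditional mean $0$ and conditional variance $\tfrac{1}{d^2}$, and each is absolutely bounded by $1$ since it is a product of two unit-vector dot products (bounded by Cauchy-Schwarz). Bernstein then yields
\[
P\left(\left| \sum_{i=1}^k \epsilon_i \right| \geq t \;\Big|\; u, v\right) \leq 2\exp\left(-\frac{t^2/2}{k/d^2 + t/3}\right),
\]
and choosing $t = C\sqrt{k}/d$ produces the claimed $O(\sqrt{k}/d)$ scaling. Since this conditional bound is free of $(u, v)$, integrating recovers the unconditional statement. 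The main subtlety I expect is exactly this dependence structure: the $\epsilon_i$ share the vectors $u$ and $v$ and so are not jointly independent, but the conditioning trick dissolves the issue cleanly because once $u, v$ are fixed, each $\epsilon_i$ depends only on disjoint fresh samples $q_i, r_i$, giving both the required independence for Bernstein and a conditional variance of $1/d^2$ that does not depend on $(u,v)$.
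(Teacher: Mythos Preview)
Your proof is correct and follows the same route as the paper: expand $T=u^TGv$ by bilinearity, identify the noise terms via Theorem~\ref{SpherCodeProp}, and apply Bernstein. One small correction to your exposition: the $\epsilon_i$ actually \emph{are} jointly independent, because rotational symmetry makes the conditional law of each $\epsilon_i$ given $(u,v)$ free of $(u,v)$, and this together with conditional independence forces unconditional independence---so the paper applies Bernstein directly without conditioning, though your conditioning argument is a perfectly valid (and arguably more transparent) way to reach the same bound.
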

\begin{theorem}[EQTS: Spurious Query] \label{Thm:EQTSFalse}
Let $G$ be a graph with signal edge $(u,v)$ and $k$ other edges, whose vertices are all distinct:
\[
G = u v^T + \sum_{i=1}^k q_i r_i^T
\]
Let $F$ denote the edge query score of the spurious edge $(a,b)$ whose vertices are distinct from $G$. We have the following:
\begin{enumerate}
    \item $F$ has the decomposition:
    \[
    F = \sum_{i=1}^{k+1} \epsilon_{i_1} \epsilon_{i_2}
    \]
    where each $\frac{\epsilon+1}{2}$ is an i.i.d Beta$(\frac{d-1}{2},\frac{d-1}{2})$ rv.
    \item $EF = 0$.
    \item $Var(F) = \frac{k+1}{d^2}$.
    \item $|F| \propto O(\frac{\sqrt{k+1}}{d})$ with high probability
\end{enumerate}
\begin{proof}
We compute the edge query:
\[
F = a^T G b = a^T(u v^T + \sum_{i=1}^k q_i r_i^T) b = \dprod{a}{u}\dprod{b}{v} + \sum_{i=1}^k \dprod{a}{q_i} \dprod{b}{r_i}
\]
By construction, each of the summands is the product of independent dot products. Therefore, the result follows using the same argument as in Theorem \ref{Thm:EQTSTrue}.
\end{proof}
\end{theorem}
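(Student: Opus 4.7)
The plan is to mimic the proof of Theorem \ref{Thm:EQTSTrue} almost verbatim, since all four claims about $F$ are structurally identical to those for the error term $T-1$ of the true query, but with one extra summand (because now no term of $G$ cancels with $a^T(\cdot)b$).

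First, I would expand the bilinear form. Using linearity,
\[
F = a^T G b = a^T u\, v^T b + \sum_{i=1}^{k} a^T q_i\, r_i^T b = \langle a,u\rangle\langle b,v\rangle + \sum_{i=1}^{k} \langle a,q_i\rangle \langle b,r_i\rangle.
\]
This is a sum of $k+1$ terms, each of the form $\epsilon_{i_1}\epsilon_{i_2}$ where the two factors are dot products between disjoint pairs of independent spherical codes (this uses that $a,b$ are distinct from all vertices of $G$, so all the vectors involved in a single summand are independent). By Theorem \ref{DP Dist}, each such dot product $\epsilon$ satisfies $\tfrac{\epsilon+1}{2} \sim \text{Beta}(\tfrac{d-1}{2},\tfrac{d-1}{2})$, giving claim (1).

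Next, for the mean and variance, I would use independence within each summand together with Corollary \ref{Dprod moments}. Since $E[\epsilon_{i_1}]=0$ and $\epsilon_{i_1}\perp\epsilon_{i_2}$, we get $E[\epsilon_{i_1}\epsilon_{i_2}]=0$, so $EF=0$. For the variance, independence again yields $\text{Var}(\epsilon_{i_1}\epsilon_{i_2}) = E[\epsilon_{i_1}^2]E[\epsilon_{i_2}^2] = \tfrac{1}{d^2}$, and the $k+1$ summands are mutually independent because each uses its own fresh pair of vertex codes, so $\text{Var}(F) = \tfrac{k+1}{d^2}$. This gives claims (2) and (3).

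Finally, for the high-probability bound in claim (4), I would apply Bernstein's inequality exactly as in Theorem \ref{Thm:EQTSTrue}: each summand is almost surely bounded by $1$ in absolute value, has mean zero, and variance $\tfrac{1}{d^2}$, so
\[
P\!\left(\left|\sum_{i=1}^{k+1}\epsilon_{i_1}\epsilon_{i_2}\right| \geq C\frac{\sqrt{k+1}}{d}\right) \leq 2\exp\!\left(\frac{-C^2}{1 + C\tfrac{d}{3\sqrt{k+1}}}\right) \approx 2e^{-C},
\]
yielding $|F| = O(\tfrac{\sqrt{k+1}}{d})$ with high probability. There is no genuine obstacle here; the only subtle point is simply verifying that all $k+1$ summands really are mutually independent and that within each summand the two dot products are independent, which follows from the disjointness assumption on the vertices of $(a,b)$ and those of $G$, combined with the i.i.d. construction of the vertex codes.
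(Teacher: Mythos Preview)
Your proposal is correct and follows essentially the same approach as the paper's own proof: expand $a^T G b$ into $k+1$ products of independent dot products and then invoke the argument of Theorem~\ref{Thm:EQTSTrue} (Beta distribution via Theorem~\ref{DP Dist}, moments via Corollary~\ref{Dprod moments}, and Bernstein for the tail bound). The only minor wording slip is that the $k+1$ summands do not use ``fresh'' codes---they all share $a$ and $b$---but independence still holds by rotational symmetry (fix $a,b$; the remaining vertices are i.i.d.), which is exactly the level of rigor the paper uses.
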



\begin{theorem} [EQTS: Correct Recovery]\label{Thm:EQTSProb}
Let $G$ be a graph with signal edge $(u,v)$ and $k$ other edges, whose vertices are all distinct:
\[
G = u v^T + \sum_{i=1}^k q_i r_i^T
\]
For the signal edge $(u,v)$ and $M$ spurious edges with distinct vertices from $G$, let $A$ denote the event that the true query score  $T$ exceeds all spurious query scores $F_1,\cdots,F_M$. For some constant $C$ we have:
\[
P(A) \gtrapprox 1 - M e^{-\frac{C d^2}{k}}
\]
\begin{proof}
We express the event of correct recovery as:
\[
P(A) = P(T > \max(F_1,\cdots,F_M)) = 1 - P(\cup \set{T \leq F_i})
\]
By construction, each $F_i$ has the same distribution. :
\begin{align*}
    P(\cup T \leq F_i) &\leq \sum P(T \leq F_i)\\
    &=  M P(T \leq F_1)\\
    &= M P(1 + \epsilon \leq F)\\
    &\leq M P(F - \epsilon \geq 1 )
\end{align*}
Note that $\epsilon$ is the sum of iid random variables symmetrically distributed about 0, so $-\epsilon$ has the same distribution as $\epsilon$. Therefore, the difference $F- \epsilon_1$ is the the sum of $2k$ independent terms, absolutely bounded by 1 with mean 0 and variance $\frac{1}{d^2}$. Hence, using Bernstein's inequality gives:
\[
P(F - \epsilon \geq 1) \leq e^{-\frac{1}{2(2k/d^2+1/3)}} \approx e^{-\frac{Cd^2}{k}}
\]
for some large constant $C$ assuming $k \geq d^2$. Thus,
\[
P(T > \max(F_1,\cdots,F_M)) \gtrapprox 1 - M e^{-\frac{Cd^2}{k}}
\]
\end{proof}
\end{theorem}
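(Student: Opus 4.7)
The plan is to reduce correct recovery to a single tail bound via a union bound, then apply Bernstein's inequality to the resulting noise sum. Starting from
\[
P(A^c) \;=\; P\!\left(T \leq \max_{i} F_i\right) \;\leq\; \sum_{i=1}^M P(T \leq F_i),
\]
and using that the $M$ spurious scores are identically distributed (each arising from a distinct-vertex query of the same form), this collapses to $M \cdot P(T \leq F_1)$, so it suffices to bound a single tail.

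Next I would invoke Theorem \ref{Thm:EQTSTrue} to write $T = 1 + \epsilon$ with $\epsilon = \sum_{i=1}^k \dprod{u}{q_i}\dprod{v}{r_i}$, and Theorem \ref{Thm:EQTSFalse} to write $F_1 = \dprod{a}{u}\dprod{b}{v} + \sum_{i=1}^k \dprod{a}{q_i}\dprod{b}{r_i}$, so that $\{T \leq F_1\} = \{F_1 - \epsilon \geq 1\}$. The heart of the argument is showing that $F_1 - \epsilon$ concentrates sharply about $0$. I would group the summands sharing the auxiliary codes $(q_i, r_i)$ into a single term
\[
Z_i \;:=\; \dprod{a}{q_i}\dprod{b}{r_i} - \dprod{u}{q_i}\dprod{v}{r_i},
\]
so that $F_1 - \epsilon = \dprod{a}{u}\dprod{b}{v} + \sum_{i=1}^k Z_i$. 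Since the pairs $(q_i, r_i)$ are independent across $i$ and independent of $\{u,v,a,b\}$, the $Z_i$ are independent across $i$; each is bounded by $2$ in absolute value (Cauchy--Schwarz on unit vectors), has mean $0$, and has variance $O(1/d^2)$ by Corollary \ref{Dprod moments}.

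Finally, I would apply Bernstein's inequality to this sum of $k+1$ independent, centered, bounded terms with total variance $V = O(k/d^2)$ to obtain
\[
P(F_1 - \epsilon \geq 1) \;\leq\; \exp\!\left(-\frac{1/2}{V + O(1)}\right) \;\approx\; e^{-C d^2/k}
\]
in the regime $k \gtrsim d^2$ where the variance term dominates the denominator. Combining with the initial union bound gives $P(A) \gtrapprox 1 - M e^{-Cd^2/k}$.

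The main obstacle is the independence structure of $F_1 - \epsilon$. Naively this is a sum of $2k+1$ dot-product products that are \emph{not} pairwise independent, since $\epsilon$ and $F_1$ share the auxiliary vectors $q_i, r_i$. Correctly pairing the shared-vertex terms into the $Z_i$ above is what restores independence across $i$ and lets Bernstein apply with the right effective variance $O(k/d^2)$; a less careful accounting would either double-count variance or wrongly treat non-independent summands as independent, losing the exponent $d^2/k$.
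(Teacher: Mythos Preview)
Your approach is essentially the paper's own: a union bound over the $M$ spurious queries, the decomposition $T=1+\epsilon$, and Bernstein applied to $F_1-\epsilon$ to get the $e^{-Cd^2/k}$ tail. The one place you differ is in how you justify applying Bernstein. The paper simply asserts that $F-\epsilon$ is ``the sum of $2k$ independent terms, absolutely bounded by 1 with mean 0 and variance $\tfrac{1}{d^2}$,'' appealing to the symmetry $-\epsilon\stackrel{d}{=}\epsilon$. You instead observe that the $2k$ individual dot-product products are \emph{not} pairwise independent (the $i$th term of $\epsilon$ and the $i$th term of $F_1$ share $q_i,r_i$), and you repair this by grouping them into $Z_i=\dprod{a}{q_i}\dprod{b}{r_i}-\dprod{u}{q_i}\dprod{v}{r_i}$, which are genuinely (conditionally on $u,v,a,b$) independent across $i$ with variance $O(1/d^2)$.

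Your grouping is the more careful accounting; it yields the same exponent and the same final bound, so the two proofs are really the same argument with your version making the independence step explicit. One small point: the $Z_i$ are independent \emph{conditional on} $u,v,a,b$, not unconditionally, so Bernstein should be applied conditionally (the resulting bound is uniform in the conditioning, so the conclusion is unchanged).
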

\subsection{Impact of Vertex Connectivity}\label{EQ:ConnCorrection}
In the previous theoretical analyses, we assumed that the query edge had distinct vertices from the rest of the edges. This assumption is unrealistic in practical settings, and we will consider the more realistic case of the signal edge sharing vertices with the other edges. We shall see that when the number of shared vertices is low relative to the dimension $d$, the results of the previous analyses still hold.

We again work with a graph with one signal edge and $k$ nuisance edges. However, this time we assume that $L$ of the nuisance edges share one vertex with the signal edge. We shall see in our analysis that the shared vertex being the source $u$ or target $v$ does not matter, so WLOG we assume the shared vertices are only the source vertex $u$:
\[
G = \psi(u,v) + \sum_{i=1}^L \psi(u,p_i) + \sum_{j=1}^{k-L} \psi(q_j,r_j)
\]
where the $p,q,r$'s are distinct from $u,v$. 

\subsubsection{Hadamard Prodcut and Rademacher Codes}
For the Hadamard-Rademacher scheme, none of the previous results will change. Looking at the error norms, our true edge query is:
\[
T = sum(u \odot v \odot G) = sum(\boldsymbol{1} + \sum_{i=1}^L s_i + \sum_{j={L+1}}^{k-L} s_j) = d + \sum_{ij} s_{ij}
\]
The product of independent Rademachers is still Rademacher, so this nice absorbing property makes the error terms exactly the same as in the distinct vertex case. A similar analysis shows that nothing changes for the spurious case, and we see that the results of Section \ref{EQ:HR} hold without modification.

\subsubsection{Tensor Product and Spherical Codes}
For tensor-spherical embeddings, more care needs to be taken. First, we look at the results of our edge query:
\[
 u^T G v = 1 + \sum_{i=1}^L \dprod{p_i}{v} + \sum_{j=1}^{k-L} \dprod{u}{q_j} \dprod{v}{r_j}
\]
Here, notice that $L$ error terms consist of only a single dot product, since those came from the edges that shared the common vertex $u$. The first $L$ terms are mean 0 with variance $\frac{1}{d}$, while the remaining $k-L$ are mean 0 with variance $\frac{1}{d^2}$. Therefore, we can modify the results of Section \ref{EQ:TS} as follows:
\begin{theorem}[EQTS: True Query with Connectivity Correction]\label{Thm:EQTSTrueCorr}
Let $G$ be a graph with signal edge $(u,v)$ and $k$ other edges, among which $L$ edges share a common vertex with $(u,v)$. WLOG, we may assume they all share common source vertex $u$. Then, $G$ takes the form:
\[
G = uv^T + \sum_{i=1}^L u p_i^T + \sum_{j=1}^{k-L} q_j r_j^T
\]
Let $T$ denote the edge query score of the signal edge $(u,v)$. We have the following.
\begin{enumerate}
    \item $ET = 1$
    \item $Var(T) = \frac{L}{d} + \frac{k-L}{d^2} = \sigma^2$
    \item $|T -1| \propto O(\sigma)$ with high probability. For $k >> L$, $T - 1 \propto O(\frac{\sqrt{k}}{d})$
\end{enumerate}
\begin{proof}
As mentioned above, our edge query is:
\[
 u^T G v = 1 + \sum_{i=1}^L \dprod{r_i}{v} + \sum_{j=1}^{k-L} \dprod{u}{q_j} \dprod{v}{r_j} = 1 + \epsilon
\]
From the computation, it is easy to see that whether the edges share $u$ or $v$ does not matter, so WLOG we can assume they all share $u$. Since we are working with spherical codes, by rotational symmetry we see that the first $L$ error terms are independent of each other and the remaining $k-L$ error terms. Hence, the variance of the error sum is the sum of the variances. This, along with each error term being mean 0, establishes the first two claims. Note that each error term is absolutely bounded by $1$ and is symmetrically distributed. Therefore, using Bernstein's inequality:
\begin{align*}
    P(|T - 1| > C\sigma) = P(|\epsilon| > C\sigma) &\leq 2 \exp(\frac{-C^2 \sigma^2}{2(\sigma^2 + \frac{C\sigma}{3})})\\
    &= 2 \exp{\frac{-C^2}{1+\frac{C}{3\sigma}}}
\end{align*}
This establishes the third claim. For large $k$, the second term will dominate and so $T-1 \propto O(\frac{\sqrt{k}}{d})$.
\end{proof}
\end{theorem}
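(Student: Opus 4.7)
The plan is to mirror the proof of Theorem \ref{Thm:EQTSTrue}, but with the noise budget split into two scales: $L$ ``linear'' terms of order $1/\sqrt{d}$ coming from edges that share a vertex with $(u,v)$, and $k-L$ ``bilinear'' terms of order $1/d$ from edges with disjoint vertices. First, I would expand $T = u^T G v$ by linearity. Since $u, v$ are unit vectors, $u^T(uv^T)v = 1$, while $u^T(u p_i^T) v = \dprod{v}{p_i}$ and $u^T (q_j r_j^T) v = \dprod{u}{q_j} \dprod{v}{r_j}$. This yields
\[
T = 1 + \sum_{i=1}^L \dprod{v}{p_i} + \sum_{j=1}^{k-L} \dprod{u}{q_j}\dprod{v}{r_j},
\]
cleanly isolating the two noise regimes.

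Next I would compute the moments term-by-term. By Corollary \ref{Dprod moments}, each single dot product $\dprod{v}{p_i}$ is mean zero with variance $1/d$, while each bilinear product $\dprod{u}{q_j}\dprod{v}{r_j}$ factors through independent dot products (the $q_j, r_j$ are drawn independently of $u,v$ and of each other), giving mean zero and variance $1/d^2$. By the iid sampling of $p_i, q_j, r_j$ on $\mathbb{S}^{d-1}$, conditioning on $u$ and $v$ makes all $k$ noise terms mutually independent, so their variances add to $\sigma^2 = L/d + (k-L)/d^2$. This establishes claims (1) and (2).

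For the concentration claim, I would apply Bernstein's inequality to the sum of $k$ independent, mean-zero summands, each absolutely bounded by $1$ (since $|\dprod{x}{y}| \leq 1$ for unit vectors), with total variance $\sigma^2$. The resulting bound
\[
P(|T-1| > C\sigma) \leq 2\exp\!\left(\frac{-C^2 \sigma^2}{2(\sigma^2 + C\sigma/3)}\right) = 2\exp\!\left(\frac{-C^2}{2(1+C/(3\sigma))}\right)
\]
yields the $O(\sigma)$ tail. The limiting regime $|T-1| \propto O(\sqrt{k}/d)$ when $k \gg L$ follows from noting that the $(k-L)/d^2$ contribution dominates $L/d$ once $k - L \gtrsim L d$, in which case $\sigma \approx \sqrt{k}/d$.

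I do not expect a serious obstacle. The one nuance worth an explicit sentence is justifying the independence of the $L$ linear terms among themselves, since they all pair against the common vertex $v$: by rotational symmetry of the uniform distribution on $\mathbb{S}^{d-1}$, we may fix $v = e_1$ without loss of generality, after which $\dprod{v}{p_i} = (p_i)_1$ are marginally independent across $i$, and independence of the $p_i$'s from the $q_j, r_j$'s decouples the linear block from the bilinear block. Once this is in hand, the rest is the same Bernstein argument used in Theorem \ref{Thm:EQTSTrue}, adapted to the two-scale variance budget.
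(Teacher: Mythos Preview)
Your proposal is correct and follows essentially the same route as the paper: expand $u^T G v$ into the signal plus $L$ single-dot-product noise terms and $k-L$ bilinear noise terms, use rotational symmetry to justify independence, sum variances to get $\sigma^2 = L/d + (k-L)/d^2$, and then apply Bernstein's inequality. If anything, your treatment of the independence nuance (fixing $v=e_1$) and your Bernstein simplification are slightly cleaner than the paper's.
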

\begin{theorem}[EQTS: Correct Recovery with Connectivity Correction]\label{Thm:EQTSProbCorr}
Let $G$ be a graph with signal edge $(u,v)$ and $k$ other edges, among which $L$ edges share a common vertex with $(u,v)$. WLOG, we may assume they all share common source vertex $u$, so $G$ takes the form:
\[
G = \psi(u,v) + \sum_{i=1}^L \psi(u,p_i) + \sum_{j=1}^{k-L} \psi(q_j,r_j)
\]
For the signal edge $(u,v)$ and $M$ spurious edges that are not in $G$ but may share one vertex with an edge in $G$, let $A$ denote the event that the true query score  $T$ exceeds all spurious query scores $F_1,\cdots,F_M$. For some constant $C$ we have:
\[
P(A) \geq 1 - M \exp(\frac{-1/2}{\frac{2L+1}{d} + \frac{2(k-L)}{d^2} + \frac{1}{3}}) 
\]
For $k >> L$, for some constant $C$ this reduces to:
\[
P(A) \geq 1 - M \exp(\frac{-Cd^2}{k})
\]
\begin{proof}
Note that we assumed that each spurious edge may share one of its vertices with a an edge in $G$. However, since they are not in $G$ they cannot share both vertices with an edge. Moreover, it is easy to see the the more spurious edges that share a vertex in $G$, the larger the spurious query result will be. Therefore, we work with the worst case scenario and assume that each spurious has a common vertex with $G$. WLOG, we may assume that they all share the same source vertex as the signal edge $(u,v)$.

Firstly, from the proof of Theorem \ref{Thm:EQTSProbCorr} we see that the true query can be decomposed as:
\[
T = 1 + \epsilon_1 + \epsilon_2
\]
Similarly, the edge query of the spurious edge $(u,b)$ is:
\[
F = \dprod{v}{b} + \sum_{i=1}^{L} \dprod{p_i}{b_i} + \sum_{j=1}^{k-L} \dprod{q_j}{u} \dprod{r_j}{b} = \epsilon'_1 + \epsilon'_2
\]
Here, $\epsilon_1$ and $\epsilon'_1$ are the sums of i.i.d terms that are absolutely bounded by 1 with mean 0 and variance $\frac{1}{d}$, while $\epsilon_2$ and $\epsilon'_2$ are the sums of i.i.d terms that are absolutely bounded by 1 with mean 0 and variance $\frac{1}{d^2}$. Both of these terms are symmetrically distributed about 0.

We follow the proof of Theorem \ref{Thm:EQTSProbCorr} until we need to bound the term:
\begin{align*}
    P(T \leq F_1) &= F(1 + \epsilon_1 + \epsilon_2 \leq \epsilon'_1 + \epsilon'_2)\\
    P(\epsilon'_1 - \epsilon_1 + \epsilon'_2 - \epsilon_2 \geq 1)
\end{align*}
Using Bernstein's inequality and proceeding with the same steps gives the result:
\begin{align*}
 P(\epsilon'_1 - \epsilon_1 + \epsilon'_2 - \epsilon_2 \geq 1) \leq \exp(\frac{-1/2}{\frac{2L+1}{d} + \frac{2(k-L)}{d^2} + \frac{1}{3}})   
\end{align*}
\end{proof}
\end{theorem}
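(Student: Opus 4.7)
The plan is to bound $P(A^c)$ via a union bound and then control a single mismatch probability $P(T \le F_1)$ with Bernstein's inequality, reusing the decomposition already developed in Theorem \ref{Thm:EQTSTrueCorr}. Concretely, I would first reduce to a worst-case spurious edge: any $(a,b) \notin \boldsymbol{E_G}$ can share at most one vertex with the signal edge $(u,v)$ (sharing both would make it the signal edge), and a straightforward monotonicity check shows that allowing a spurious edge to share a vertex with the signal edge \emph{only makes $F$ larger in magnitude} (it replaces a mean-zero variance-$1/d^2$ summand by a mean-zero variance-$1/d$ one). Hence WLOG I assume every spurious edge shares its source vertex with $u$, and moreover the $M$ values $F_1,\dots,F_M$ are identically distributed, so $P(\cup\{T\le F_i\}) \le M\, P(T\le F_1)$.

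Next I would write out the decompositions. Theorem \ref{Thm:EQTSTrueCorr} already gives
\[
T = 1 + \epsilon_1 + \epsilon_2,\qquad \epsilon_1 = \sum_{i=1}^{L}\dprod{p_i}{v},\quad \epsilon_2 = \sum_{j=1}^{k-L}\dprod{u}{q_j}\dprod{v}{r_j},
\]
and an analogous computation for a spurious edge $(u,b)$ yields
\[
F_1 = \dprod{v}{b} + \sum_{i=1}^{L}\dprod{p_i}{b} + \sum_{j=1}^{k-L}\dprod{q_j}{u}\dprod{r_j}{b} = \epsilon'_1 + \epsilon'_2,
\]
where $\epsilon'_1$ is a sum of $L+1$ independent single dot-product terms (variance $1/d$) and $\epsilon'_2$ is a sum of $k-L$ independent product terms (variance $1/d^2$). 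All summands appearing in $T$ and $F_1$ are deterministic functions of independent spherical codes, symmetric about $0$, and absolutely bounded by $1$.

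Rewriting $\{T \le F_1\} = \{F_1 - T \ge -1\} \subseteq \{(\epsilon'_1 - \epsilon_1) + (\epsilon'_2 - \epsilon_2) \ge 1\}$, the right-hand sum is a sum of independent, bounded, mean-zero random variables with total variance
\[
\sigma^2 \;=\; \frac{2L+1}{d} + \frac{2(k-L)}{d^2}.
\]
Applying Bernstein's inequality with deviation $t = 1$ and bound $|\cdot| \le 1$ then yields
\[
P(T \le F_1) \;\le\; \exp\!\left(\frac{-1/2}{\sigma^2 + 1/3}\right),
\]
which combined with the union bound gives the first displayed inequality. For the regime $k \gg L$, the term $\tfrac{2(k-L)}{d^2}$ dominates $\sigma^2$ (assuming this term is itself the dominant contribution, i.e.\ $k \gtrsim d^2$), so $\sigma^2 + 1/3 \asymp k/d^2$, yielding the asymptotic bound $P(A) \ge 1 - M\exp(-Cd^2/k)$ for a suitable constant $C$.

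The main obstacle is the bookkeeping around independence and the worst-case spurious edge: one has to justify that the $F_i$'s are identically distributed (allowed via rotational symmetry of spherical codes), that the summands across $T$ and $F_1$ decouple into genuinely independent pieces after the various cross-terms cancel or separate by construction, and that permitting shared vertices between spurious edges and $G$ cannot decrease the failure probability (so conditioning on the worst case is legitimate). Once the independence structure is laid out cleanly, Bernstein's inequality delivers the quantitative bound immediately.
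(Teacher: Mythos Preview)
Your proposal is correct and follows essentially the same route as the paper: reduce to the worst-case spurious edge sharing the source vertex $u$, write $T=1+\epsilon_1+\epsilon_2$ and $F_1=\epsilon_1'+\epsilon_2'$ with the same term counts, then apply Bernstein to $P\bigl((\epsilon_1'-\epsilon_1)+(\epsilon_2'-\epsilon_2)\ge 1\bigr)$ with variance $\tfrac{2L+1}{d}+\tfrac{2(k-L)}{d^2}$. The only slip is the intermediate equality $\{T\le F_1\}=\{F_1-T\ge -1\}$, which should read $\{F_1-T\ge 0\}$; your final event $\{(\epsilon_1'-\epsilon_1)+(\epsilon_2'-\epsilon_2)\ge 1\}$ is nonetheless correct since $F_1-T=(\epsilon_1'-\epsilon_1)+(\epsilon_2'-\epsilon_2)-1$.
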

How do we interpret the results of Theorem \ref{Thm:EQTSProbCorr}? The vertex code dimension $d$ is a hyperparameter and independent of a graph's vertex connectivity $L$, so we focus on the trade-off between $L$ and the number of edges $k$. If we assume that $L$ is independent of $k$ or weakly scales with $k$, then in the limit we see that the second term in the denominator $\frac{2(k-L)}{d}$ dominates and we recover the approximate scaling of $k = O(d^2)$. Theorem \ref{Thm:EQTSTrueCorr} also confirms the scaling $k = O(d^2)$ for large $k$. This large $k$ assumption holds for a variety of applications, such as large knowledge graphs or social networks. In a social network, the number of friends a single person has is unlikely to scale much, if at all, as more people are added to the network past a certain point; one can have only so many friends. However, in the case where $L$ is proportional to $k$, the first term dominates and the bound implies that $k = O(d)$. This situation applies to densely connected graphs, where the node connectivity is proportional to the total number of nodes.

Hence, this suggests that tensor-spherical embeddings are well suited for graphs where the node connectivity weakly scales with or is independent of the number of edges. In particular, this suggests that tensor-spherical embeddings are well suited for large sparse graphs, and we shall cover this connection in Section \ref{AdjacencyMat}. 

\subsection{Memory and Capacity Scaling}
Recall that the vertex code has dimension $d$. Hadamard-Rademacher embeddings have dimension $d$, and previous analysis suggests that we can store at most $O(d)$ edges in superposition without seriously affecting the accurate retrieval of the answer. On the other hand, using the tensor product with spherical codes, the graph embedding space is $d^2$, and the previous analysis shows that we can store at most $O(d^2)$ edges without affecting accuracy. Hence, they have the same memory-capacity scaling with respect to $d$.

This memory-capacity scaling was for the idealized case where all vertices are assumed to be distinct, which might be unrealistic for practical applications. From the connectivity correction of Section \ref{EQ:ConnCorrection}, the memory-capacity scaling of tensor-spherical embedding holds when the number of edges $k$ is much larger than the vertex connectivity $L$. On the other hand, the memory-capacity scaling of Hadamard-Rademacher embeddings are unaffected. Therefore, when dealing with graphs whose number of edges is much larger than their vertex connectivity, the two schemes still have the same memory-capacity scaling.

\section{Binding Comparison: Edge Composition}\label{EdgeComp}
In this section, we focus on the accuracy of edge composition. We are given a graph that contains just two composable edges $(u,v)$ and $(v,w)$, with all other edges having disjoint vertices. Edge composition should return only the valid edge $(u,w)$, and we perform a edge query checking if the embedding contains $(u,w)$. The analysis of this section follows the same procedure as the previous section.

We work with the following fixed graph, which has the two composable edges $(u,v)$ and $(v,w)$ along with $k-2$ nuisance edges :
\[
G = \psi(u,v) + \psi(v,w) + \sum_{i=1}^{k} \psi(q_i,r_i)
\]
where all the $q,r$'s are distinct from $u,v,w$ and $\psi$ denotes the binding operation. We will look at edge composition, checking via an edge query for the correct composition of the composed edge $(u,w)$.

\subsection{Hadamard Product and Rademacher Codes}\label{EC:HS}
We want to do edge composition with $G$, which in this case represents just the binding of $G$ with itself:
\begin{align*}
G \odot G &= (u \odot v + v \odot w + \sum_{i=1}^{k-2} q_i \odot r_i) \odot (u \odot v + v \odot w + \sum_{j=1}^{k-2} q_j \odot r_j)\\
&= u \odot w + \boldsymbol{1} + \sum_{i=1}^{k^2 -2} s_i 
\end{align*}
where each $s_i$ is an independent Rademacher vector. After performing edge composition, we have a superposition of the single composed edge $u \odot w$ along with some nuisance terms. We then perform an edge query to detect the presence of a composed edge, and we have the following results for the true and spurious queries.

\begin{theorem}[ECHR: True Query]\label{Thm:ECHRTrue}
Let $G$ be a graph with the composable edges $(u,v)$ and $(v,w)$ along with $k-2$ other edges, whose vertices are all distinct:
\[
G = u \odot v + v \odot w + \sum_{i=1}^{k-2} q_i \odot r_i
\]
Let $T$ denote the edge query score of the composed edge $(u,w)$ after performing the edge composition procedure $G \odot G$. We have the following:
\begin{enumerate}
    \item T has the decomposition:
    \[
    T = d + \epsilon
    \]
    where $\frac{\epsilon + (k^2 -1)d}{2}$ follows a Binomial$(\frac{(k^2-1)d}{\frac{1}{2}})$ distribution.
    \item $ET = d$
    \item $Var(T) = (k^2+1)d$
    \item $|\epsilon| \propto O(k \sqrt{d})$ with high probability.
\end{enumerate}
\begin{proof}
From the above discussion, performing edge composition results in:
\[
G \odot G = u \odot w + \boldsymbol{1} + \sum_{i=1}^{k^2 -2} s_i
\]
where each $s_i$ is an independent Rademacher vector. We then query for true edge $(u,w)$ by first unbinding by it:
\[
(u \odot w) \odot (G \odot G) = \boldsymbol{1} + u \odot w + \sum_{i=1}^{k^2 -2} s'_i
\]
where again each $s'_i$ is an independent Rademacher vector. We then sum this result to get our edge query, which results in:
\[
d + \sum_{i}^{k^2-1} \sum_{j=1}^d s'_{ij}
\]
where each $s'_{ij}$ is a Rademacher random variable. Using the same techniques as in Theorem \ref{Thm: EQHRTrue} gives the result. Note that for the fourth claim, $\epsilon \propto O(\sqrt{(k^2-1)d}) = O(k\sqrt{d})$.
\end{proof}
\end{theorem}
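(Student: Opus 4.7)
The plan is to carry out the edge composition $G \odot G$ carefully, isolate the signal, and then reduce everything to a sum of i.i.d. Rademacher random variables so that Theorem \ref{RadCodeProp} can be applied directly.

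First I would expand $G \odot G$ using the bilinearity of $\odot$ to obtain $k^2$ cross terms, which fall into three types. The $k$ self-compositions $(a \odot b) \odot (a \odot b)$ collapse to the all-ones vector $\boldsymbol{1}$, because for Rademacher coordinates each entry squared is $1$. The two ``matching'' cross terms $(u \odot v) \odot (v \odot w)$ and $(v \odot w) \odot (u \odot v)$ collapse to $u \odot w$ by the same coordinate-wise identity, producing the desired composed signal. Every remaining cross term is a Hadamard product of four distinct Rademacher vectors, and is therefore itself an independent Rademacher vector $s_i$ because coordinate-wise it is a product of four independent $\pm 1$ random variables. Up to the absorption of constant multiplicities, this yields the decomposition $G \odot G = u \odot w + \boldsymbol{1} + \sum_{i=1}^{k^2 - 2} s_i$ stated by the author.

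Next I would perform the edge query by unbinding with $u \odot w$ and summing coordinates. The signal term $(u \odot w) \odot (u \odot w)$ becomes $\boldsymbol{1}$, which contributes exactly $d$ to the sum; the $\boldsymbol{1}$ term becomes $u \odot w$, a Rademacher vector; and each noise vector $s_i$ becomes $s'_i = (u \odot w) \odot s_i$, which is again an independent Rademacher vector by the same coordinate-wise product argument. Summing across the $d$ coordinates yields the decomposition
\[
T = d + \sum_{i=1}^{k^2 - 1} \sum_{j=1}^{d} s'_{ij}
\]
where each $s'_{ij}$ is an independent Rademacher random variable, so $\epsilon$ is a sum of $(k^2 - 1)d$ i.i.d.\ Rademachers. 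Claim~(1) then follows from the Rademacher-to-shifted-binomial correspondence recorded in Theorem \ref{RadCodeProp}, and claims~(2) and (3) follow from the standard mean and variance of such a sum. For claim~(4), I would invoke Chernoff's inequality for a Rademacher sum of $(k^2 - 1)d$ terms to obtain $|\epsilon| = O\bigl(\sqrt{(k^2-1)d}\bigr) = O(k\sqrt{d})$ with high probability, mirroring the argument already used in Theorem \ref{Thm: EQHRTrue}.

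The only genuinely delicate step is justifying that the noise vectors $s'_i$ remain independent Rademacher vectors after unbinding. This relies on the observation that each $s_i$ is a product of four Rademachers whose indices are disjoint from those of the signal pair $(u, w)$, together with the fact that $u$ and $w$ themselves are independent of the nuisance codes. Once this independence is in hand, the proof reduces entirely to the one-variable Rademacher machinery from Section~\ref{Random Codes}, and there are no new probabilistic obstacles; the combinatorics of counting cross terms correctly is the main bookkeeping hazard.
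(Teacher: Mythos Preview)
Your proposal follows essentially the same approach as the paper: expand $G \odot G$ into its $k^2$ Hadamard cross terms, isolate the $u \odot w$ signal and the all-ones contributions, unbind by $u \odot w$, sum coordinates, and reduce to a Rademacher sum handled by Theorem~\ref{RadCodeProp} and the Chernoff bound. You are in fact slightly more explicit than the paper about the term-counting and about the independence issue for the noise vectors, but the argument is the same.
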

\begin{theorem}[ECHR: Spurious Query]\label{Thm:ECHRFalse}
Let $G$ be a graph with the composable edges $(u,v)$ and $(v,w)$ along with $k-2$ other edges, whose vertices are all distinct:
\[
G = u \odot v + v \odot w + \sum_{i=1}^{k-2} q_i \odot r_i
\]
Let $F$ denote the edge query score of the spurious edge $(a,b)$ after performing the edge composition procedure $G \odot G$. We have the following:
\begin{enumerate}
    \item $\frac{F + k^2d}{2}$ follows a Binomial$(\frac{k^2d}{\frac{1}{2}})$ distribution.
    \item $ET = 0$
    \item $Var(T) = k^2d$
    \item $|F| \propto O(k \sqrt{d})$ with high probability.
\end{enumerate}
\begin{proof}
The proof is analogous to that of Theorem \ref{Thm:ECHRTrue}.
\end{proof}
\end{theorem}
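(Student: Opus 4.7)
The strategy is to mimic the proof of Theorem \ref{Thm:ECHRTrue} essentially verbatim, exploiting the single simplification that, because the spurious query's vertices $a,b$ are disjoint from every vertex of $G$, no term of $G \odot G$ survives the unbinding as a signal; every contribution becomes pure noise.

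First I would expand $G \odot G$ into its $k^2$ summands exactly as in the true-query theorem: the two cross-bindings of $(u,v)$ and $(v,w)$ produce copies of $u \odot w$, the $k$ diagonal self-bindings each collapse to $\boldsymbol{1}$, and the remaining off-diagonal terms are Hadamard products of four distinct Rademacher vertex codes, each giving a fresh independent Rademacher vector. Then I would Hadamard-multiply the result by the query binding $a \odot b$. Because neither $a$ nor $b$ appears in any vertex code of $G$, each of the $k^2$ resulting summands — including those coming from the $\boldsymbol{1}$ and $u \odot w$ groups — picks up a factor of $a$ or $b$ that is absent from every other summand, and therefore each summand is an independent Rademacher vector. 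Summing coordinates yields $F$ as the sum of $k^2 d$ iid Rademacher random variables.

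The four claims then follow immediately from Theorem \ref{RadCodeProp}: the Binomial representation is the image of this Rademacher sum under the affine map $X \mapsto (X + k^2 d)/2$, the mean $0$ and variance $k^2 d$ read off from the Binomial, and the high-probability bound $|F| = O(\sqrt{k^2 d}) = O(k\sqrt{d})$ comes from the sub-Gaussian tail of Rademacher sums via Chernoff's inequality, exactly as used for Theorem \ref{Thm: EQHRFalse}. The only conceptually non-mechanical step is the independence claim in the middle paragraph; this is guaranteed by the disjointness of $\{a,b\}$ from the vertices of $G$, since every one of the four-fold products then carries a unique Rademacher factor (from $a$ or $b$, combined with distinct $q$'s and $r$'s) that appears in no other summand, making the per-coordinate variables behave as iid $\pm 1$. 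With that independence in hand, the remainder is the same calculation that gave Theorem \ref{Thm:ECHRTrue}, and the author's reference to an analogous proof is fully justified.
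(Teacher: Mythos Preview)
Your plan is exactly the route the paper takes (the paper's proof is literally ``analogous to Theorem~\ref{Thm:ECHRTrue}''), so on the level of strategy you match it. But the independence step you spell out does not actually work, and this is worth flagging since you make it the ``only conceptually non-mechanical step.''

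Multiplying by $a \odot b$ adds the \emph{same} two factors to every one of the $k^2$ summands, so it cannot supply a ``unique Rademacher factor that appears in no other summand.'' Concretely: the $k$ diagonal self-bindings you correctly identify as $\boldsymbol{1}$ all become, after unbinding, the \emph{same} vector $a\odot b$ --- $k$ identical copies, not $k$ independent ones. Likewise, commutativity of $\odot$ forces every off-diagonal pair $(i,j)$ and $(j,i)$ in $G\odot G$ to coincide, so those terms also come in identical pairs. Hence $F$ is not a sum of $k^2 d$ iid Rademachers; claim~1 (the exact Binomial law) and claim~3 (variance exactly $k^2 d$) fail as written --- for instance, the $k$ repeated copies of $a\odot b$ alone already contribute variance $k^2 d$ by themselves. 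The mean-zero statement and the order bound $|F|=O(k\sqrt{d})$ survive, but they require a different bookkeeping than the one you propose. To be fair to you, the paper's own Theorem~\ref{Thm:ECHRTrue} glosses over the same duplication issue, so you are faithfully reproducing its level of rigor; but the specific justification you offer for independence is incorrect.
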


\begin{theorem}[ECHR: Correct Recovery]\label{Thm:ECHRProb}
Let $G$ be a graph with the composable edges $(u,v)$ and $(v,w)$ along with $k-2$ other edges, whose vertices are all distinct:
\[
G = u \odot v + v \odot w + \sum_{i=1}^{k-2} q_i \odot r_i
\]
For the signal edge $(u,w)$ and $M$ spurious edges with distinct vertices from $G$, let $A$ denote the event that the true query score  $T$ exceeds all spurious query scores $F_1,\cdots,F_M$. For some constant $C$ we have:
\[ 
P(A) \geq 1 - M e^{-\frac{C d}{2k^2 - 1}}
\]
\begin{proof}
We proceed as in Theorem \ref{Thm:EQHRprob} until we bound the probability:
\[
P(T \leq F_1) = P(F_1 - \epsilon \geq d)
\]
The difference $F_1 - \epsilon$ is the sum of $k^2 + k^2-1 = 2k^2-1$ indpendent Rademachers. Hence, using the standard Rademacher concentration inequality gives the result.
\end{proof}
\end{theorem}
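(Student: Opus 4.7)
The plan is to mirror the structure of the analogous edge-query result, Theorem \ref{Thm:EQHRprob}. The argument splits into three pieces: a union bound to reduce the multi-way comparison to a single one, a decomposition of $T$ and $F_1$ into Rademacher sums supplied by Theorems \ref{Thm:ECHRTrue} and \ref{Thm:ECHRFalse}, and a sub-Gaussian/Bernstein bound on their difference.

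First I would write
\[
P(A) = 1 - P\Bigl(\bigcup_{i=1}^M \{T \leq F_i\}\Bigr) \geq 1 - \sum_{i=1}^M P(T \leq F_i) = 1 - M\, P(T \leq F_1),
\]
where the last equality uses that the $F_i$ are identically distributed by construction. Next, Theorem \ref{Thm:ECHRTrue} gives the decomposition $T = d + \epsilon$ with $\epsilon$ a sum of $(k^2-1)d$ independent Rademacher random variables, and Theorem \ref{Thm:ECHRFalse} gives $F_1$ as a sum of $k^2 d$ independent Rademacher random variables. Since $-R \stackrel{d}{=} R$ for a Rademacher $R$, the difference $F_1 - \epsilon$ is again a Rademacher sum, of length $k^2 d + (k^2-1)d = (2k^2-1)d$. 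Thus
\[
P(T \leq F_1) \;=\; P(F_1 - \epsilon \geq d).
\]

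Finally, applying Bernstein's inequality to this length-$(2k^2-1)d$ Rademacher sum with deviation $t = d$ yields
\[
P(F_1 - \epsilon \geq d) \;\leq\; \exp\!\Bigl(-\tfrac{d^{2}/[(2k^{2}-1)d]}{2\bigl(1 + (3(2k^{2}-1))^{-1}\bigr)}\Bigr) \;\leq\; \exp\!\Bigl(-\tfrac{C d}{2k^{2}-1}\Bigr)
\]
for some absolute constant $C$, exactly as in Theorem \ref{Thm:EQHRprob} (one may take $C \geq 1/4$). Combining this with the union bound above gives the claimed inequality $P(A) \geq 1 - M e^{-Cd/(2k^2-1)}$.

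The only point worth flagging is the independence needed to treat $F_1 - \epsilon$ as a Rademacher sum of the full length $(2k^2-1)d$: the summands of $\epsilon$ and of $F_1$ share common code-factors (the $q_i,r_i,v$ entries). However, each summand on either side is a product of independent $\pm 1$ entries, and any two such summands drawn across $\epsilon$ and $F_1$ still contain at least one unshared factor ($u_jw_j$ on one side or $a_jb_j$ on the other), so their product has mean zero and they are uncorrelated; for $\pm 1$-valued random variables, uncorrelatedness implies independence, and the Rademacher-sum bookkeeping goes through. No step presents a real mathematical obstacle — the main care is just in counting Rademacher terms correctly, and there the edge-composition step contributes the $k^2$ factor in place of the $k$ appearing in the edge-query analysis.
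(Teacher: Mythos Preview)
Your proof is correct and follows the paper's approach essentially line for line: reduce to $P(T\le F_1)$ by a union bound, write $F_1-\epsilon$ as a Rademacher sum, and apply Bernstein. Your count of $(2k^2-1)d$ Rademacher summands is in fact more careful than the paper's own phrasing (which writes $2k^2-1$, dropping the factor of $d$), though both arrive at the same exponent $Cd/(2k^2-1)$.
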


\subsection{Tensor Product and Spherical Codes}\label{EC:TS}
For tensor-spherical embeddings, our graph embedding is:
\[
G = uv^T + vw^T + \sum_{i=1}^{k-2} q_i r_i^T
\]
and our edge composition procedure is the matrix power:
\[
G^2 = (uv^T + vw^T + \sum_{i=1}^{k-2} q_i r_i^T)(uv^T + vw^T + \sum_{i=1}^{k-2} q_i r_i^T)
\]
We then perform edge query to detect the presence of a composed edge, and we have the following analogous results for the true and spurious queries.
\begin{theorem}[ECTS: True Query]\label{Thm:ECTSTrue}
Let $G$ be a graph with the composable edges $(u,v)$ and $(v,w)$ along with $k-2$ other edges, whose vertices are all distinct:
\[
G = uv^T + vw^T + \sum_{i=1}^{k-2} q_i r_i^T
\]
Let $T$ denote the edge query score of the composed edge $(u,w)$ after performing the edge composition procedure $G^2$. We have the following:
\begin{enumerate}
    \item $T$ has the decomposition:
    \[
    T = 1 + \sum_{i=1}^{2k-2} \epsilon_{i_1} \epsilon_{i_2} + \sum_{j=1}^{k^2-2k+1} \epsilon_{j_1} \epsilon_{j_2} \epsilon_{j_3}
    \]
    where each $\frac{\epsilon+1}{2}$ is an i.i.d Beta($\frac{d-1}{2},\frac{d-1}{2}$) rv.
    \item $ET =1$
    \item $Var(T) = \frac{2k-2}{d^2} + \frac{k^2 - 2k +1}{d^3}$
    \item $|T-1| \propto O(\sqrt{\frac{2k-2}{d^2} + \frac{k^2 - 2k +1}{d^3}})$ with high probability. For large $k$, $T \propto O(k d^{\frac{-2}{3}})$.
    
\end{enumerate}
\begin{proof}
Consider the edge composition procedure, represent by the matrix power:
\[
G^2 = (uv^T + vw^T + \sum_{i=1}^{k-2} q_i r_i^T) G
\]
Distributing each term on the left into the right, $uv^T$ will return the composed edge $uw^T$ and $k-1$ terms of the form $\dprod{u}{s} ut^T$. Similarly, $vw^T$ will distribute and return the term $\dprod{w}{v} vw^T$ and $k-1$ terms of the form $\dprod{w}{s}v t^T$, and each $q_ir_i^T$ distributes and returns one term $\dprod{q_i}{v} q_i w^T$ and $k-1$ terms of the form $\dprod{q_i}{s}q_i t^T$. Hence, edge composition will generate the true composed edge $uw^T$, $2k-2$ edges that share one vertex with $(u,w)$, and $k^2-2k+1$ remaining edges that are disjoint from $(u,w)$. After querying for the edge $(u,w)$, we have:
\[
T = 1 + \sum_{i=1}^{2k-2} d_{i_1} d_{i_2} + \sum_{j=1}^{k^2-2k+1} d_{j_1} d_{j_2} d_{j_3}
\]
where each $d$ is a dot product between two different spherical codes. Note that the $d$'s are i.i.d, and so by Theorem \ref{SpherCodeProp} we have:
\[
ET = 0 \qquad ; \qquad Var(T) = \frac{2k-2}{d^2} + \frac{k^2-2k+1}{d^3}
\]
This establishes the first three claims.

For the fourth, note that each error is absolutely bounded by 1. Bernstein's inequality shows that $T \propto O(\sqrt{\frac{2k-2}{d^2} + \frac{k^2-2k+1}{d^3}})$ with high probability. Since we are interested in large $k$ (many edges), the second term dominates and we have $T \propto O(\sqrt{\frac{k^2-2k+1}{d^3}}) = O(kd^{\frac{-2}{3}})$.
\end{proof}
\end{theorem}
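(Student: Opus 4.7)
The plan is to expand the edge composition $G^2$ as a double sum, isolate the unique product that contributes the clean signal $uw^T$, and then classify each of the remaining $k^2-1$ products by how many nontrivial spherical-code dot products it contributes to the edge query $u^T G^2 w$. Writing $G = \sum_i d_i c_i^T$, each product in $G^2$ takes the form $(d_i c_i^T)(d_j c_j^T) = \langle c_i, d_j\rangle d_i c_j^T$, so precisely the chained pair $(uv^T)(vw^T) = \langle v,v\rangle uw^T = uw^T$ survives cleanly; every other pair picks up one nontrivial dot product as a coefficient.

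Next I would compute $u^T (d_i c_j^T) w = \langle u, d_i\rangle \langle c_j, w\rangle$ term by term. This outer contraction collapses to $1$ precisely when $d_i = u$ or $c_j = w$, and otherwise produces a genuine dot product between two distinct spherical codes. Enumerating all $k^2$ pairs, the noise summands whose contracted outer product has the form $u t^T$ or $t w^T$ with $t \notin \{u,w\}$ contribute exactly two dot products to $T$. A direct count gives $1$ such term from $uv^T \cdot uv^T$, $k-2$ from $uv^T \cdot q_i r_i^T$, $1$ from $vw^T \cdot vw^T$, and $k-2$ from $q_j r_j^T \cdot vw^T$, for a total of $2k-2$. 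The remaining $k^2 - 2k + 1$ pairs contract to outer products whose domain and codomain both differ from $u$ and $w$, and therefore carry three dot products each. This recovers the algebraic decomposition of claim (1).

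Given the decomposition, the moment claims follow from Theorem \ref{SpherCodeProp}: each dot product between distinct spherical codes is mean zero with variance $1/d$ and is absolutely bounded by $1$. Since every noise summand contains at least one mean-zero dot product factor, every summand has mean $0$ and $ET = 1$. For the variance, each two-factor product has variance $1/d^2$ and each three-factor product has variance $1/d^3$, and cross-covariances vanish whenever any remaining factor is an independent mean-zero dot product not shared with the competing term. Summing yields $\mathrm{Var}(T) = (2k-2)/d^2 + (k^2-2k+1)/d^3$. Finally, using the uniform bound of $1$ per summand, I would apply Bernstein's inequality to the centered sum, giving a sub-exponential tail that recovers the claimed $O(\sqrt{(2k-2)/d^2 + (k^2-2k+1)/d^3})$ deviation bound; in the large-$k$ regime the cubic-in-$d$ contribution dominates and this collapses to $O(k/d^{3/2})$.

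The main obstacle will be the subtle dependency structure. Several dot products, for example $\langle u,v\rangle$ and $\langle v,w\rangle$, reappear across multiple noise summands, so the $\epsilon$'s are not literally i.i.d.\ in the strict sense the statement suggests. I expect to dispatch this by a term-by-term covariance check: in every cross expansion $(\epsilon_{i_1}\epsilon_{i_2}\cdots)(\epsilon_{j_1}\epsilon_{j_2}\cdots)$ at least one factor is an independent mean-zero dot product that does not appear in the other summand, which forces the cross moment to vanish. Thus the variance identity remains exact, and since both the total variance and uniform sup norm are insensitive to the shared-factor issue, Bernstein's bound goes through unchanged.
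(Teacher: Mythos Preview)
Your proposal is correct and follows essentially the same route as the paper: expand $G^2$ bilinearly, classify the $k^2$ summands by how many vertices they share with $(u,w)$ to obtain the $1 + (2k-2) + (k^2-2k+1)$ split, then read off moments from Theorem~\ref{SpherCodeProp} and finish with Bernstein. If anything you are more careful than the paper on two points: the paper simply asserts the dot products are i.i.d., whereas you correctly flag the shared factors and dispatch them via the cross-covariance argument; and your large-$k$ exponent $O(k d^{-3/2})$ is the arithmetically correct simplification of $\sqrt{k^2/d^3}$, while the paper's stated $O(kd^{-2/3})$ is a typo.
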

\begin{theorem}[ECTS: Spurious Query]\label{Thm:ECTSFalse}
Let $G$ be a graph with the composable edges $(u,v)$ and $(v,w)$ along with $k-2$ other edges, whose vertices are all distinct:
\[
G = uv^T + vw^T + \sum_{i=1}^{k-2} q_i r_i^T
\]
Let $F$ denote the edge query score of the spurious edge $(a,b)$ whose vertices are distinct from $G$, after performing the edge composition procedure $G^2$. We have the following:
\begin{enumerate}
    \item $F$ has the decomposition:
    \[
    F = \sum_{i=1}^{k^2-1} \epsilon_{i_1} \epsilon_{i_2} \epsilon_{i_3} +  \epsilon_1 \epsilon_2
    \]
    where each $\epsilon$ is an i.i.d Beta($\frac{d-1}{2},\frac{d-1}{2}$) rv.
    \item $EF =0$
    \item $Var(F) = \frac{k^2 -1}{d^3} + \frac{1}{d^2}$
    \item $|F| \propto O(\sqrt{\frac{k^2 -1}{d^3} + \frac{1}{d^2}})$ with high probability. For large $k$, $F \propto O(kd^{\frac{-2}{3}})$.
\end{enumerate}
\begin{proof}
Since we query by the spurious edge $(a,b)$ whose vertices are disjoint, we do not need  to keep track of edges that share vertices with $(a,b)$. Hence, after edge composition we have will have $uw^T$ and $k^2 -1$ terms of the form $\dprod{s}{t}qr^T$. Distributing the edge query to each of these terms, we will have one term that is the product of two dot products and $k^2-1$ terms that are the product of three. Note that each dot product is i.i.d, and Theorem \ref{SpherCodeProp} establishes the first three claims. Bernstein's inequality shows that $F \propto O(\sqrt{\frac{k^2 -1}{d^3} + \frac{1}{d^2}})$ with high probability, and when $k$ is large we see that the first term dominates. This establishes the fourth claim.
\end{proof}  
\end{theorem}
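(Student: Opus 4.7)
The plan is to mimic the argument used in Theorem~\ref{Thm:ECTSTrue} for the true query, but simplified by the fact that the query edge $(a,b)$ shares no vertices with $G$. First I would expand $G^2$ as a sum of $k^2$ outer-product terms. Exactly one of these, namely $(uv^T)(vw^T)$, collapses to $uw^T$ via $\langle v,v\rangle = 1$; the remaining $k^2 - 1$ terms each take the form $\langle s,t\rangle \, qr^T$, where $s,t,q,r$ are distinct spherical codes drawn from the vertices of $G$ (with $s,t$ independent of $q,r$ because they come from the two different factors).

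Next I would apply the edge-query functional $a^T(\cdot)b$ and distribute. The clean $uw^T$ term contributes $\langle a,u\rangle\langle b,w\rangle$, a product of two independent dot products; each of the other $k^2-1$ terms contributes $\langle s,t\rangle\langle a,q\rangle\langle b,r\rangle$, a product of three dot products. Critically, since $a$ and $b$ are disjoint from every vertex of $G$, no vertex appears in two different factors within the same summand, so the three dot products in each cubic summand are mutually independent. This directly yields the decomposition asserted in Claim~1.

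For Claims~2 and~3, I would invoke Theorem~\ref{SpherCodeProp}: each inner product of distinct spherical codes has mean $0$, variance $1/d$, and is absolutely bounded by $1$. Independent products of mean-zero random variables have mean zero, and the variance of an $m$-fold product of independent, mean-zero, unit-bounded factors with common variance $1/d$ is $1/d^m$. Linearity of expectation and additivity of variance across independent summands then give $EF = 0$ and $\mathrm{Var}(F) = (k^2-1)/d^3 + 1/d^2$. Claim~4 is a direct application of Bernstein's inequality, exactly as in Theorem~\ref{Thm:ECTSTrue}: each summand is bounded by $1$, the total variance is as in Claim~3, and taking the deviation proportional to the standard deviation gives the stated concentration. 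For large $k$ the cubic-product term dominates, yielding the asymptotic $|F| \propto O(kd^{-2/3})$.

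The only real subtlety is ensuring that the independence structure needed for the variance computation truly holds throughout the $k^2-1$ cubic summands. This requires a brief case analysis over which pair of edges in $G$ produced the term: whenever two edges of $G$ share a vertex (which happens only for the pair $(uv^T,vw^T)$ and yields the clean composed edge), the corresponding product collapses harmlessly; in every other cross term the four vertex codes involved are distinct, and since $\{a,b\}$ is disjoint from the vertex set of $G$, the three dot products in that summand involve six distinct spherical codes and are independent. This bookkeeping is the only place where care is required; the rest of the argument is a routine adaptation of the true-query proof.
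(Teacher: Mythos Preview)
Your proposal is correct and follows essentially the same route as the paper: expand $G^2$ into $k^2$ outer-product terms, isolate the single clean composed edge $uw^T$ and the remaining $k^2-1$ noise terms of the form $\dprod{s}{t}qr^T$, apply the edge query $a^T(\cdot)b$, then read off the moments from Theorem~\ref{SpherCodeProp} and conclude via Bernstein. Your independence bookkeeping is slightly more explicit than the paper's (which simply asserts the dot products are i.i.d.), though note that a few cross terms---the diagonal squares and the reversed pair $(vw^T)(uv^T)$---do repeat a vertex, so the ``six distinct codes'' claim is not literally true for every summand; the paper glosses over this as well, and it does not affect the stated variance.
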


\begin{theorem}[ECTS: Correct Recovery]\label{Thm:ECTSProb}
Let $G$ be a graph with the composable edges $(u,v)$ and $(v,w)$ along with $k-2$ other edges, whose vertices are all distinct:
\[
G = uv^T + vw^T + \sum_{i=1}^{k-2} q_i r_i^T
\]
For the signal edge $(u,w)$ and $M$ spurious edges with distinct vertices from $G$, let $A$ denote the event that the true query score  $T$ exceeds all spurious query scores $F_1,\cdots,F_M$. Let $\sigma^2 = \frac{2k-1}{d^2} + \frac{2k^2-2k}{d^3}$. For some constant $C$ we have:
\[
P(A) \gtrapprox 1 - M \exp(\frac{-C}{\sigma^2})
\]
For large $k$, this reduces to:
\[
P(A) \gtrapprox 1- M\exp(\frac{-Cd^3}{k^2})
\]
\begin{proof}
We proceed as in the analogous theorems until we need to bound:
\[
P(T \geq F_1) = P(1 + \epsilon \geq F_1) = P(\epsilon - F_1 \leq 1)
\]
From Theorems $\ref{Thm:ECTSTrue}$ and \ref{Thm:ECTSFalse}, we see $\epsilon - F_1$ is the sum of $2k-1$ terms that are the products of two dot products and $2k^2 -2k$ terms that are the products of three dot products. Each term is absolutely bounded by 1, and let $\sigma^2 = \frac{2k-1}{d^2} + \frac{2k^2-2k}{d^3}$. By Bernstein's inequality:
\begin{align*}
    P(T \geq F_1) = P(\epsilon - F_1 \leq 1) \leq \exp(\frac{-\frac{1}{2}}{\sigma^2 + \frac{1}{3}}) \approx \exp(\frac{-1}{2\sigma^2})
\end{align*}
as long as $\sigma^2$ is large relative to $\frac{1}{3}$. For large $k$, we see that the $\frac{2k^2-2k}{d^3}$ will dominate, and so:
\[
P(T \geq F_1) \gtrapprox \exp(\frac{-Cd^3}{k^2})
\]
\end{proof}
\end{theorem}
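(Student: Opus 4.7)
The plan is to follow the same three-step recipe used in Theorems \ref{Thm:EQTSProb} and \ref{Thm:ECHRProb}: a union bound to reduce to a single tail, a centering that converts the event into $\{S \geq 1\}$ for some zero-mean bounded sum $S$, and then Bernstein's inequality. By the union bound and the identical distribution of the spurious scores,
\[
P(A) = 1 - P\!\left(\bigcup_{i=1}^M \{T \leq F_i\}\right) \geq 1 - M\, P(T \leq F_1),
\]
and applying the decompositions from Theorems \ref{Thm:ECTSTrue} and \ref{Thm:ECTSFalse} I write $T = 1 + \epsilon$, $F_1 = \epsilon'$, and set $S := \epsilon' - \epsilon$, so that $P(T \leq F_1) = P(S \geq 1)$.

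Pooling the summands, $S$ consists of $(2k-2) + 1 = 2k-1$ products of two spherical dot products and $(k^2 - 2k + 1) + (k^2 - 1) = 2k^2 - 2k$ products of three, each absolutely bounded by $1$ and symmetric about $0$. Each dot product of two independent spherical codes has mean $0$ and variance $1/d$, and by rotational invariance its marginal law is unchanged under conditioning on the other probe vector, so every two-fold product has variance $1/d^2$ and every three-fold product has variance $1/d^3$. Independence of the codes $q_i, r_i, a, b$ across different summands makes the cross-covariances vanish, giving $\operatorname{Var}(S) = \sigma^2 = \frac{2k-1}{d^2} + \frac{2k^2 - 2k}{d^3}$. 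Bernstein's inequality with unit bound then yields
\[
P(S \geq 1) \leq \exp\!\left(\frac{-1/2}{\sigma^2 + 1/3}\right) \approx \exp(-C/\sigma^2)
\]
in the nontrivial regime where $\sigma^2$ is comparable to or larger than $1/3$ (otherwise the additive $1/3$ can be absorbed into the constant). Combined with the union bound, this gives the first claim, and for large $k$ the cubic contribution $\frac{2k^2 - 2k}{d^3}$ dominates $\sigma^2$, reducing the bound to $1 - M \exp(-Cd^3/k^2)$.

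The main obstacle I anticipate is the independence bookkeeping in the centering step: the summands of $\epsilon$ and $\epsilon'$ all involve the same probe vectors $u, v, w, a, b$, so I must use rotational invariance of the uniform law on $S^{d-1}$ together with the mutual independence of the codes $q_i, r_i, a, b$ to argue that $S$ can be treated as a sum of independent, symmetric, unit-bounded random variables with the claimed marginal variances, making it eligible for a direct Bernstein bound. Once that reduction is laid out carefully, the variance arithmetic, the Bernstein application with $t=1$, and the large-$k$ asymptotic simplification are routine.
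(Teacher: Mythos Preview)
Your proposal is correct and follows essentially the same approach as the paper: union bound to a single tail $P(T\le F_1)$, rewrite as $P(S\ge 1)$ with $S=F_1-\epsilon$, count the pooled terms as $2k-1$ two-fold and $2k^2-2k$ three-fold dot-product summands with total variance $\sigma^2$, and apply Bernstein with $t=1$. Your extra care in justifying the variance computation via rotational invariance is a welcome addition that the paper glosses over.
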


\subsection{Impact of Vertex  Connectivity}\label{EC:ConnCorr}
We analyze the impact of vertex connectivity on the previous analyses, where assumed the nuisance edges had distinct vertices. Here, we consider the case where the nuisance edge might share a vertex with either of the composable edges. We shall see that when vertex connectivity is low relative to the dimension $d$, the results of the previous analyses still hold.

We again work with a graph with composable edges $(u,v)$ and $(v,w)$ along with $k-2$ nuisance edges. This time, we assume that $L$ of the nuisances edges share one vertex with either $(u,v)$ and $(v,w)$. We shall see in our analyses that we need to further partition the $L$ edges that share vertices into two case: the nuisance edges that have common end vertices $u$ and $w$, and the nuisance edges that have common vertex $v$ and are composable with $(u,v)$ or $(v,w)$. We shall show that within each of the two above cases, it doesn't matter whether the shared vertex is a source or target vertex. Therefore, we split the $L$ edges into these two cases, and $G$ takes the form:
\[
G = \psi(u,v) + \psi(v,w) + \sum_{i=1}^{L_1} \psi(v,p_i) + \sum_{j=1}^{L_2} \psi(u,p_j) + \sum_{l=1}^{k-L-2} \psi(q_l,r_l)
\]
where $L = L_1 + L_2$.

\subsubsection{Hadamard Product and Rademacher Codes}
As in Section \ref{EQ:ConnCorr}, all the results in the previous section hold without modification because although the nuisance edges might share one common vertex, the presence of at least one distinct vertex makes the bound edge an independent Rademacher vector.

\subsubsection{Tensor Product and Spherical Codes}
Our graph embedding is of the form:
\[
G = uv^T + vw^T + \sum_{i=1}^{L_1} vp_i^T + \sum_{j=1}^{L_2} u p_j^T + \sum_{l=1}^{k-L -2} q_l r_l^T
\]
where $L_1 + L_2 = L$. Correcting for vertex connectivity, we have the following modified results.
\begin{theorem}[ECTS: True Query with Connectivity Correction] \label{Thm:ECTSTrueCorr}
Let $G$ be a graph with the composable edges $(u,v)$ and $(v,w)$ along with $k-2$ nuisance edges. Of the nuisance edges, we assume assume $L_1$ have common vertex $u$ or $w$ and $L_2$ have common vertex $v$, with the other vertex being distinct WLOG, we may assume $G$ takes the form:
\[
G = uv^T + vw^T + \sum_{i=1}^{L_1} vp_i^T + \sum_{j=1}^{L_2} u p_j^T + \sum_{l=1}^{k^2-L -2} q_l r_l^T
\]
Let $T$ denote the edge query score of the composed edge $(u,w)$. Let $L = L_1 + L_2$, $N_1 = 2k + L_2(k-1) - L -2$, and $L_2 = k^2 - 2k - L_2(k-1) - 1$. We have the following:
\begin{enumerate}
    \item $T$ has the decomposition:
    \[
    T = 1 + \sum_{i=1}^L \epsilon_{i}  +   \sum_{j=1}^{N_1} \epsilon_{j_1} \epsilon_{j_2}  + \sum_{l=1}^{N_2} \epsilon_{l_1} \epsilon_{l_2}\epsilon_{l_3}
    \]
    where each $\frac{\epsilon + 1}{2}$ is an i.i.d Beta$(\frac{d-1}{2},\frac{d-1}{2})$ rv.
    \item $ET = 1$
    \item $Var(T) = \frac{L}{d} + \frac{N_1}{d^2} + \frac{N_3}{d^3} = \sigma^2$
    \item $|T - 1| \propto O(\sigma)$ with high probability. For $k >> L$, $T-1 \propto O(kd^{\frac{-3}{2}})$
\end{enumerate}
\begin{proof}
First, we perform edge composition:
\[
(uv^T + vw^T + \sum_{i=1}^{L_1} vp_i^T + \sum_{j=1}^{L_2} u p_j^T + \sum_{l=1}^{k^2-L -2} q_l r_l^T ) G
\]
We distribute and group terms by both the number of erroneous dot products and whether they have source vertex $u$ or target vertex $v$. Let $q,s$ denote generic distinct vertices, and $\epsilon$ the dot product of two different spherical codes. Performing edge composition yields the following number of terms:
\begin{enumerate}
    \item $uw^T$: 1
    \item $ u s$: $L_1$
    \item $\epsilon (uw^T)$: $L_2$
    \item $\epsilon (u s^T)$ or $\epsilon (q w^T)$: $2k + L_2(k-1) - L -2 = N_1$
    \item $\epsilon (q s^T):$ $k^2 - 2k - L_2(k-1) - 1 = N_2$
\end{enumerate}
After querying for edge $(u,w)$, the true query can be expressed as:
\[
T = 1 + \sum_{i=1}^L \epsilon_{i}  +   \sum_{j=1}^{N_1} \epsilon_{j_1} \epsilon_{j_2}  + \sum_{l=1}^{N_2} \epsilon_{l_1} \epsilon_{l_2}\epsilon_{l_3}
\]
where each $\frac{\epsilon + 1}{2}$ is an i.i.d Beta$(\frac{d-1}{2},\frac{d-1}{2})$ rv. Together with Theorem \ref{SpherCodeProp}, this establishes the first three claims. Noting that every term is absolutely bounded by 1 and Bernstein's inequality gives the fourth. Note that $L$, $N_1$, and $N_3$ scale with $k$ to the $0^{th}, 1^{st}$, and $2^{nd}$ degree respectively. Therefore, for large $k >> L$ the term $\frac{N_3}{d^3} \approx \frac{k^2}{d^3}$ will dominate and so $T-1 \propto O(kd^{\frac{-3}{2}})$.

Finally, we show that we may assume $G$ had the assumed form WLOG. First, consider the $L_1$ edges that had common vertex $v$: $v p_i^T$. These edges are special because they can be validly composed with $uv^T$. If they take the alternate form $q_i v^T$, these can be validly composed with $vw^T$. Either way, they result in an edge that is scaled by an $\epsilon$ and has either $u$ as a source vertex or $w$ as a target vertex. After querying by $(u,w)$, both edges will result in the same term $\epsilon_1 \epsilon_2$. Secondly, let us consider the $L_2$ edges that had common source vertex $u$: $u p_j^T$. These are special because they each give rise to $k-1$ terms of the form $\epsilon u s^T$ and one term of the form $\epsilon uw^T$. If we consider the alternate form $q_j w^T$  that have common target vertex $w$, they will each give rise to $k-1$ terms of the form  $\epsilon q w^T$ and one term $\epsilon uw^T$ (easy to see if you distribute from the right). Hence, after the edge query both will result in the terms. Thus, we may indeed assume WLOG that $G$ takes the above form.
\end{proof}
\end{theorem}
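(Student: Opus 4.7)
The plan is to expand $G^2$ as an explicit double sum over ordered pairs of edges, classify each resulting term by the number of "nontrivial" inner products it produces after the query, count how many terms fall into each class, and then apply Theorem~\ref{SpherCodeProp} together with Bernstein's inequality.

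First I would observe that for any two edges $e_1 = xy^T$ and $e_2 = x'y'^T$ in $G$, their product contributes $\langle y, x'\rangle\, xy'^T$ to $G^2$, and the subsequent query $u^T(G^2)w$ turns this into the scalar $\langle y,x'\rangle\langle u,x\rangle\langle y',w\rangle$. Each of these three inner products is either exactly $1$ (when the two vectors are the \emph{same} vertex code) or a dot product between two independent spherical codes, which by Theorem~\ref{SpherCodeProp} is a symmetric mean-zero random variable bounded by $1$ with variance $1/d$, in which case we treat it as an $\epsilon$. Categorizing pairs $(e_1,e_2)$ by the triple (match of $\mathrm{cod}(e_1)$ with $\mathrm{dom}(e_2)$, match of $\mathrm{dom}(e_1)$ with $u$, match of $\mathrm{cod}(e_2)$ with $w$) gives four classes, and the signal term (all three matches) corresponds uniquely to $(e_1,e_2)=(uv^T, vw^T)$.

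Next I would count the remaining three classes by going through each edge type of $G$ playing the role of $e_1$ and enumerating admissible $e_2$. The edges with $\mathrm{dom}=u$ are $\{uv^T, up_j^T\}$, the only edge with $\mathrm{cod}=w$ is $vw^T$, and the only edges composable from the right (via $\mathrm{cod}(e_1)=\mathrm{dom}(e_2)$) are those stemming from $uv^T$ paired with $\{vw^T, vp_i^T\}$. A careful tally gives $L$ single-$\epsilon$ terms (contributed by the $L_1$ valid compositions $uv^T \cdot vp_i^T$ and the $L_2$ matched-source-vertex pairs $up_j^T \cdot vw^T$), $N_1 = 2k + L_2(k-1) - L - 2$ double-$\epsilon$ terms, and $N_2$ triple-$\epsilon$ terms, the latter determined so that the four counts sum to $k^2$. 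This yields the claimed decomposition of $T$ in part~(1). Since every $\epsilon$ is a dot product between two \emph{distinct} spherical codes and the uniform distribution on $\mathbb{S}^{d-1}$ is rotationally symmetric, independence of the $\epsilon$'s follows and parts~(2) and~(3) are immediate consequences of Theorem~\ref{SpherCodeProp}.

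For part~(4), I would note that each summand is absolutely bounded by $1$, each has mean zero, and the total variance is $\sigma^2 = L/d + N_1/d^2 + N_2/d^3$. Bernstein's inequality applied to the centered sum $T-1$ then gives $|T-1| = O(\sigma)$ with high probability, and for $k \gg L$ the $N_2/d^3 \asymp k^2/d^3$ term dominates, yielding the claimed $O(kd^{-3/2})$ scaling. Finally I would address the WLOG reduction: the analysis depends only on how many of the $L$ shared-vertex nuisance edges are composable with one of the two signal edges (the "$L_1$" group) versus sharing an endpoint with the signal edges without being composable (the "$L_2$" group), and by symmetry it is immaterial whether a shared vertex appears as a source or a target, so the listed form of $G$ is representative.

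The main obstacle will be the combinatorial bookkeeping in the counting step, particularly ensuring that all three classes of counts $(L, N_1, N_2)$ add to $k^2$ without double-counting pairs like $(up_j^T, vw^T)$, which simultaneously have matching source and matching target but mismatching middle. Keeping the "number of matches" viewpoint rather than enumerating by case should make the arithmetic robust.
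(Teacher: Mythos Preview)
Your proposal is correct and follows essentially the same approach as the paper's own proof: expand $G^2$ into its $k^2$ ordered edge-pair contributions, classify each resulting scalar after the $(u,w)$ query by how many of the three inner products are ``trivial'' matches, arrive at the same counts $(1,L,N_1,N_2)$, and finish with Theorem~\ref{SpherCodeProp} for the moments and Bernstein's inequality for the tail. Your framing via the triple of matches is a slightly cleaner organizing device than the paper's two-step ``compose then query'' enumeration, but the substance, the counts, and the WLOG argument are the same.
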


\begin{theorem}
Let $G$ be the same as in Theorem \ref{Thm:ECTSTrueCorr}. In particular, recall that $L = L_1 + L_2$, $N_1 = 2k + L_2(k-1) - L -2$, and $L_2 = k^2 - 2k - L_2(k-1) - 1$. Then, $G$ takes the form:
\[
G = uv^T + vw^T + \sum_{i=1}^{L_1} vp_i^T + \sum_{j=1}^{L_2} u p_j^T + \sum_{l=1}^{k^2-L -2} q_l r_l^T
\]
For the signal edge $(u,w)$ and $M$ spurious edges with distinct vertices from $G$, let $A$ denote the event that the true query score  $T$ exceeds all spurious query scores $F_1,\cdots,F_M$. For some constant $C$ we have:
\[
P(A) \geq 1 -M \exp(\frac{-1/2}{\frac{L}{d} + \frac{N_1+L_1+1}{d^2} + \frac{N_2 + k^2 - L_1 -1}{d^3} + \frac{1}{3}})
\]
For $k >> L$, this reduces to:
\[
P(A) \geq 1 - M \exp(\frac{-Cd^3}{k^2})
\]
\begin{proof}
We proceed as usual until we need to bound the event:
\[
P(T \geq F_1)
\]
We first decompose $F_1$, which is the result of querying the edge composition $G^2$ by the spurious edge $(a,b)$. From the list of terms provided in the proof of Theorem \ref{Thm:ECTSTrueCorr}, since $a,b$ are assumed to be disjoint vertices we have:
\[
F = \sum_{i=1}^{L_1 + 1} \epsilon_{i_1} \epsilon_{i_2} + \sum_{j=1}^{k^2 - L_1 + 1} \epsilon_{j_1} \epsilon_{j_2}\epsilon_{j_3}
\]
Hence, combining terms with the decomposition of $T$ from Theorem \ref{Thm:ECTSTrueCorr}:
\[
P(T \geq F_1) = P( \ \sum_{i=1}^{L} \epsilon + \sum_{j=1}^{N_1 + L_1 + 1} \epsilon_{j_1} \epsilon_{j_2} + \sum_{l=1}^{N_2 + k^2 - L_1 -1} \epsilon_{l_1} \epsilon_{l_2}\epsilon_{l_3} \geq 1 \ )
\]
Hence, since each term is absolutely bounded by 1 and independent, by Bernstein's inequality we have:
\[
P(T \geq F_1) \leq \exp(\frac{-1/2}{\frac{L}{d} + \frac{N_1+L_1+1}{d^2} + \frac{N_2 + k^2 - L_1 -1}{d^3} + \frac{1}{3}})
\]
Again, note that $N_1$ and $N_2$ scale linearly and quadratically with $k$ respectively. Hence, for large $k >> L$, the term $\frac{N_2 + k^2 - L_1 -1}{d^3}$ dominates in the denominator. Since $N_2 + k^2 - L_1 -1 = O(k^2)$ for large $k$, the bound reduces to $\exp(\frac{-Cd^3}{k^2})$ 
\end{proof}
\end{theorem}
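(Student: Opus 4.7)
The plan is to follow the same template as Theorem \ref{Thm:EQTSProbCorr} and its siblings. First I would apply a union bound together with the identical distribution of the $F_i$'s to obtain $P(A^c) \leq M \cdot P(T \leq F_1)$. Writing $T = 1 + \epsilon$ using the decomposition from Theorem \ref{Thm:ECTSTrueCorr}, the event $\{T \leq F_1\}$ rewrites as $\{F_1 - \epsilon \geq 1\}$, so it suffices to produce a tail bound for the symmetric random variable $F_1 - \epsilon$ via Bernstein's inequality.

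The key step is to expand $F_1 = a^T G^2 b$ for a spurious edge $(a,b)$ whose vertices are disjoint from $V_G$. Reusing the enumeration of the $k^2$ edge products appearing in $G^2$ from the proof of Theorem \ref{Thm:ECTSTrueCorr}, these split into the $1 + L_1$ ``pure'' terms (the composed edge $uw^T$ together with the $L_1$ terms $u p_i^T$ that carry no $\epsilon$ factor) and the remaining $k^2 - L_1 - 1$ terms that carry exactly one $\epsilon$ factor. Querying each surviving term $xy^T$ by the disjoint $(a,b)$ multiplies it by $\dprod{a}{x}\dprod{b}{y}$, i.e.\ two fresh independent spherical dot products. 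Hence
\[
F_1 = \sum_{i=1}^{L_1 + 1} \epsilon_{i_1} \epsilon_{i_2} \; + \; \sum_{j=1}^{k^2 - L_1 - 1} \epsilon_{j_1} \epsilon_{j_2} \epsilon_{j_3},
\]
where each $\frac{\epsilon + 1}{2}$ is an i.i.d.\ Beta$(\frac{d-1}{2}, \frac{d-1}{2})$ random variable; independence across terms follows because the freshness of $a, b$ makes each product independent of the others, and rotational invariance of the spherical distribution handles the remaining factors.

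Combining $F_1$ with $\epsilon = T - 1$, which from Theorem \ref{Thm:ECTSTrueCorr} contains $L$ singleton, $N_1$ double, and $N_2$ triple dot-product terms, the random variable $F_1 - \epsilon$ is a sum of independent, symmetric, mean-zero random variables, each absolutely bounded by $1$. Since a product of $r$ independent spherical dot products has variance $d^{-r}$ by Theorem \ref{SpherCodeProp}, the total variance is
\[
\sigma^2 = \frac{L}{d} + \frac{N_1 + L_1 + 1}{d^2} + \frac{N_2 + k^2 - L_1 - 1}{d^3},
\]
and Bernstein's inequality with boundedness constant $1$ gives $P(F_1 - \epsilon \geq 1) \leq \exp\!\bigl(\tfrac{-1/2}{\sigma^2 + 1/3}\bigr)$, which is precisely the advertised bound after the union-bound prefactor of $M$. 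In the $k \gg L$ regime, the triple-product block $N_2 + k^2 - L_1 - 1$ is $\Theta(k^2)$ while $L = O(1)$ and $N_1 + L_1 + 1 = O(k)$, so the $k^2/d^3$ contribution to $\sigma^2$ dominates and the exponent collapses to $-Cd^3/k^2$.

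The main technical obstacle is the bookkeeping: one must correctly classify each of the $k^2$ terms in $G^2$ by how many independent spherical dot products it contributes after the disjoint spurious query, and then verify that these dot products remain jointly independent across the combined decomposition of $F_1 - \epsilon$. Once the classification is carried over faithfully from the proof of Theorem \ref{Thm:ECTSTrueCorr} and independence is justified by the freshness of $a, b$ together with the rotational symmetry of the spherical distribution, Bernstein's inequality does the rest mechanically.
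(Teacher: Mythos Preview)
Your proposal is correct and follows essentially the same route as the paper: union bound to reduce to $P(T\le F_1)$, decompose $F_1=a^TG^2b$ using the term enumeration of Theorem~\ref{Thm:ECTSTrueCorr} into $L_1+1$ double and $k^2-L_1-1$ triple dot-product terms, merge with the $L$ singleton, $N_1$ double, and $N_2$ triple terms of $T-1$, and apply Bernstein's inequality with the resulting variance. Your bookkeeping is in fact slightly more careful than the paper's (which has a sign typo in the triple-term count), and your explicit justification of independence via the freshness of $a,b$ and rotational symmetry fills in a step the paper leaves implicit.
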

We see that when the number of edges $k$ is much larger than the vertex connectivity $L$, we recover the results of Section \ref{EQ:TS}. From our discussion in Section \ref{EQ:ConnCorr}, this is true for a variety of applications, such as knowledge and social graphs. One prominent example where this assumption fails is densely connected graphs. However, for applications the local vertex connectivity is low relative to the number of edges, tensor-spherical embeddings are just as efficient as Hadamard-Rademacher embeddings while boasting superior expressivity.

\subsection{Memory and Capacity Scaling}\label{EC:MemCapScal}
The graph embedding dimension under Hadamard product with Rademacher codes is $d$, and the above analysis gives a limit of $\sqrt{d}$ edges that can be stored in superposition. On the other hand, the tensor product with spherical codes has dimension $d^2$ and it can store at most $d^{\frac{3}{2}}$ edges in superposition. In both cases, memory-capacity scaling with respect to $d$ is $\sqrt{d}$. Again, the Hadamard product with Rademacher codes offers no concrete memory advantages over the tensor product since any savings memory are offset by proportional penalty capacity.

Like the memory-capacity discussion for the edge query, the previous scaling assumed the idealized case of the edges all having distinct vertices. However,  the analyses of Section \ref{EC:ConnCorr}, accounting for the case of vertex overlap in the composable edges, show that as long as the number of edges $k$ is much larger than the vertex connectivity $L$, the memory-capacity scaling still holds.

However, note that while they have the same memory-capacity scaling with respect to the vertex code dimension $d$, the two embeddings start to diverge in their memory-capacity ratios in edge composition. Tensor-spherical embeddings of size $d^2$ can store $O(d^{\frac{3}{2}})$ edges accurately. On the other hand, if we consider Hadamard-Rademacher embeddings of the same dimension $d^2$, our analysis shows that they can store $O(\sqrt{d^2}) = O(d)$ edges. In terms of capacity per parameter used, tensor-embeddings are superior to Hadmard-Rademacher embeddings in edge composition. A quick review of the results of Section \ref{EdgeQuer} shows that this is not the case for the edge query, where they have the same memory-capacity ratio. Why do they differ in the edge composition? We discuss this phenomenon in a general setting in the next section.

\section{Binding Comparison Summary}
From the results of Sections \ref{EdgeQuer} and \ref{EdgeComp}, we see that tensor-spherical and Hadamard-Rademacher embeddings have the same memory-capacity scaling with respect to the vertex dimension $d$. We actually do not save any memory by compression via the Hadamard product, since its reduced memory requirement is offset by a proportional penalty on its capacity. In fact, in Section \ref{EC:MemCapScal} we see the Hadamard-Rademacher embeddings effecitvely store less edges per parameter used compared to tensor-spherical embeddings. Since the circular correlation and convolution are special cases of the Hadamard, this immediately suggests they also suffer the same defect. We shall discuss the memory-capacity scaling and ratios in a general setting, as well as its implications for other binding operations.

\subsection{General Memory-Capacity Scaling}\label{GenMemCap}
For both embedding schemes, the edge query is a first-order operation since it can be expressed as a linear operation involving one graph embedding. Similarly, edge composition is a second-order operation since it can be expressed as a multilinear operation involving two graph embeddings. We sketch a general memory-capacity ratio for a $n$-order operation.

For the Hadamard-Rademacher scheme, a $n$-order operation on a graph with $k$ edges will create $k^n$ nuisance terms, giving an error term of with variance $k^n d$. As we saw in previous sections, we need to scale $k$ such that the error term's variance is at most $d^2$. This suggests that at most $d^\frac{1}{n}$ edges can be stored in superposition, and the general memory-capacity scaling for the Hadamard-Rademacher scheme is: 
\[
\frac{d}{d^{1/n}} = d^{1-\frac{1}{n}}= d^{(n-1)/n}
\]

For the tensor-spherical scheme a $n$-order operation will also create $k^n$ edges, but each will be weighted by a random coefficient with mean 0 and variance $d^{-(n+1)}$. During edge recovery, we will then have $k^n$ error terms with mean 0 and variance $d^{-(n+1)}$, creating a total error term with variance approximately $k^nd^{-(n+1)}$. From the analysis of previous sections, $k$ needs to scale such that the error variance is at most 1. This suggests we can store at most $d^{(n+1)/n}$ edges, and the memory-capacity scaling is:
\[
\frac{d^2}{d^{(n+1)/n}} = d^{2-(n+1)/n} = d^{(n-1)/n} 
\]
Therefore, Hadamard-Rademacher embeddings have the same scaling as the tensor-spherical embeddings, and they offer no real memory advantage. The connectivity correction results of Sections \ref{EQ:ConnCorrection} and \ref{EC:ConnCorr} show that this still holds in the more practical case where the signal edge shares vertices with the nuisances edges, as long as the vertex connectivity is small relative to the number of edges.

\subsection{Memory and Capacity: Scaling vs. Ratio} \label{GenMCScalandRat}
We draw an important distinction between memory-capacity scaling and memory-capacity ratios. Sections \ref{EdgeQuer} and \ref{EdgeComp} showed both embeddings have the same memory-capacity scaling with respect to the vertex dimension $d$. However, as we saw in Section \ref{EC:MemCapScal} this does not imply that they have the same memory-capacity ratio. For edge composition, Hadamard-Rademacher embeddings of size $d^2$ can store $O(d)$ edges, which is strictly less the $O(d^{\frac{3}{2}})$ edges tensor-spherical embeddings can store using the same number of parameters. While their memory-capacity scaling is the same, tensor-spherical embeddings in fact have a superior memory-capacity ratio in the edge composition and can store more edges per parameter used.

Now consider the general memory-capacity scaling in the previous section. If we set the Hadamard-Rademacher embeddings to have the same dimension $d^2$ as tensor-spherical embeddings, then they can store $O(d^{\frac{2}{n}})$ edges for a $n$-order operation. On the other hand, tensor-spherical embedding can store $O(d^{\frac{(n+1)}{n}})$ edges in superposition, which is strictly better. We see that for graph operations of order $n \geq 2$, tensor-spherical embeddings have a more efficient capacity scaling per parameter used. These differing memory-capacity ratios stem from properties of the binding operation, and we discuss this in the next section.

\subsection{Tensor Product and Approximate Orthonormality}
In our analyses of tensor-spherical embeddings, the approximate orthonormality of spherical codes  played a large part in reducing the contribution of the error terms. This property is not unique to spherical codes, and we can achieve similar results with other pseudo-orthonormal codes. In this section, we will study the tensor and Hadamard product paired with normalized Rademacher codes (NR codes) to demonstrate the different properties of each binding operation. In particular, we show that the tensor product leverages orthornormality to control the buildup of error terms, a property that the Hadamard product lacks. This explains the results of the previous section, where we saw that tensor-spherical embeddings have a higher memory-capacity ratio for higher order graph operations.

We first establish some preliminary results on NR codes.
\begin{theorem} \label{RademacherResults}
    Let $r_1,\cdots, r_k$ be $d$-dimensional Rademacher vectors, whose entries are iid Rademacher random variables. Let $x_1,\cdots,x_k$ be their normalized versions (ie. $x_i = \frac{1}{\sqrt{d}}r_i$) and let $Z = \dprod{x_i}{x_j}$ denote their dot product for $i \neq j$. The following properties hold:
    \begin{enumerate}
        \item $\frac{d(Z + 1)}{2} \sim Binonmial(d,\frac{1}{2})$.
        \item $E(Z) = 0$ and $Var(Z)=\frac{1}{d}$
        \item $P(\max_{i \neq j}\limits |\dprod{x_i}{x_j}| > \epsilon) \geq 1 - 2\binom{k}{2}e^{-\frac{d}{4}\epsilon^2 }$
    \end{enumerate}
\begin{proof}
    The first two statements follow from Theorem \ref{RadCodeProp}. For the third claim, we see $Z$ is average of $d$ iid Rademacher variables. By a standard Rademacher concentration inequality \cite{Boucheron2004} we have:
    \[
    P( Z > \epsilon) \leq e^{-\frac{d\epsilon^2}{2(1+\frac{\epsilon}{3})}} \leq e^{-\frac{d\epsilon^2}{4}}
    \]
    Since $Z$ is symmetrically distributed, the lower-tail bound is the same. Combining the two and taking a union bound gives the result.
\end{proof}
\end{theorem}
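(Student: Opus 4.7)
My plan is to reduce every claim to the already-established Rademacher code results (Theorem \ref{RadCodeProp}) by exploiting the fact that $\dprod{x_i}{x_j}=\frac{1}{d}\dprod{r_i}{r_j}$ is just a rescaling of the ordinary Rademacher dot product. So the first task is to write $Z=\frac{1}{d}\dprod{r_i}{r_j}=\frac{1}{d}\sum_{\ell=1}^{d} r_{i,\ell}r_{j,\ell}$, and note that the summands are iid Rademacher random variables (the product of two independent Rademachers is Rademacher).

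For claim~1, I would apply Theorem \ref{RadCodeProp}(1) directly: $\frac{\dprod{r_i}{r_j}+d}{2}\sim \text{Binomial}(d,\tfrac{1}{2})$, and substituting $\dprod{r_i}{r_j}=dZ$ turns the left side into $\frac{d(Z+1)}{2}$, giving claim~1 in one line. Claim~2 is then a direct computation from claim~1 (or equivalently from Theorem \ref{RadCodeProp}(2) after dividing by $d$): $E(Z)=\frac{1}{d}\cdot 0=0$ and $\mathrm{Var}(Z)=\frac{1}{d^2}\cdot d=\frac{1}{d}$.

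For claim~3, the plan is to get a one-pair tail bound and then union bound. Since $dZ$ is a sum of $d$ iid Rademachers, I would invoke the standard Hoeffding / sub-Gaussian bound $P(Z>\epsilon)=P(dZ>d\epsilon)\leq e^{-d\epsilon^{2}/2}$, which is in fact sharper than the $e^{-d\epsilon^{2}/4}$ that the statement asks for; alternatively I could use Bernstein with the bound $|r_{i,\ell}r_{j,\ell}|\leq 1$ and $\mathrm{Var}=1$ to obtain $e^{-d\epsilon^{2}/(2(1+\epsilon/3))}\leq e^{-d\epsilon^{2}/4}$ for $\epsilon\in(0,1)$, matching the form the paper uses in Theorem \ref{Optimality}. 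By symmetry of $Z$ about $0$, the lower tail is identical, so $P(|Z|>\epsilon)\leq 2e^{-d\epsilon^{2}/4}$. A union bound over the $\binom{k}{2}$ distinct pairs then yields the stated bound (modulo the apparent typo: the displayed inequality should read $P(\max_{i\neq j}|\dprod{x_i}{x_j}|<\epsilon)\geq 1-2\binom{k}{2}e^{-d\epsilon^{2}/4}$, mirroring Theorem \ref{Optimality}).

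There is no real obstacle here; the argument is essentially a rescaled copy of the spherical-code proof of Theorem \ref{Optimality}. The only point requiring a little care is keeping the constant in the exponent consistent with the form cited from \cite{Boucheron2004}: I would verify that the Bernstein bound (rather than raw Hoeffding) is what yields exactly the $\tfrac{d}{4}\epsilon^{2}$ exponent when using the variance-and-range input $(\mathrm{Var}=1,\ \text{range}=1)$, and present it in that form for compatibility with the rest of the paper.
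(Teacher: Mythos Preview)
Your proposal is correct and follows essentially the same route as the paper: reduce claims 1--2 to Theorem~\ref{RadCodeProp} via the rescaling $Z=\frac{1}{d}\dprod{r_i}{r_j}$, and for claim~3 apply the Bernstein-type Rademacher concentration bound $P(Z>\epsilon)\le e^{-d\epsilon^2/(2(1+\epsilon/3))}\le e^{-d\epsilon^2/4}$, then symmetry and a union bound over pairs. You also correctly flag the sign typo in the displayed inequality, which should match the form of Theorem~\ref{Optimality}.
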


Consider  a graph $G$ consisting of two edges $(u,v)$ and $(s,t)$. For tensor-NR embeddings, we check if $G$ contains the edge $(u,v)$ through an edge query:
\[
u^T(uv^T + st^T)v = 1 + (\dprod{u}{s})(\dprod{t}{v})
\]
The error term consists of dot products between two pairs of mismatched vertices, and their pseudo-orthonormality shrinks the error term to be of magnitude $\frac{1}{d}$ with high probability. Mismatched vertices interact destructively, and the tensor product uses the pseudo-orthogonality of NR codes to clean up error terms.

On the other hand, consider the same edge query for Hadamard-NR embeddings:
\[
\frac{1}{d^2}sum[(u \odot v) \odot (u \odot v + s \odot t)]  = \frac{1}{d^2} sum[\boldsymbol{1} + u \odot v \odot s \odot t] = \frac{1}{d^2} (d + \sum_{i=1}^d r_i)
\]
Ignoring the common scaling factor of $\frac{1}{d^2}$, the error term is the sum of $d$ Rademacher random variables. Importantly, even though NR codes are pseudo-orthonormal, there is no destructive interaction in the mismatch terms. Hence, since the number of error terms scales exponentially with the order of a graph operation, this explains the tensor product's superior memory-capacity ratio in Sections \ref{EC:MemCapScal} and \ref{GenMCScalandRat}. Although it is outside the scope of this paper, one can show that in order to effectively leverage orthonormality, a binding operation must result in an embedding as large as the tensor product. In other words, the tensor product is the most compact binding operations that fully preserves all geometric information of its bound vertex codes.

\subsection{Tensor Product and Other Binding Operations}
As mentioned in Section \ref{AlterBindOp}, the circular correlation and convolution are special cases of the Hadamard product, and the conclusions drawn here for the Hadamard product can be extended to them. While our analysis covers only the Hadamard product paired with Rademacher codes, we note two fundamental defects with the Hadamard product and its related embeddings. Firstly, as we saw in Section \ref{AlterBindOp} the Hadamard product sacrifices some espressivity, where it cannot represent directed edges and edge composition at the same time. Secondly, as we saw in this section the Hadamard product has a strictly worse memory-capacity ratio to the tensor product because it does not leverage orthonormality to control error terms.

Conversely, two defects of the tensor product are that its dimension explodes with both the binding order and dimension of the original code. In the context of graph embeddings, the first defect is not a concern because the binding order is always small. The memory-capacity analysis of this and previous sections also showed that the second defect is not a concern relative to other compressed binding operations, and in fact the tensor product can effectively represent more edges per parameter used than these alternative binding operations.

\section{Relationship to Adjacency Matrices}\label{AdjacencyMat}
All of the graph operations covered in Section \ref{GraphOp} are reminiscent of operations one can do with adjacency matrices.  In fact, tensor-spherical embeddings are random analogues of adjacency matrices, and the previous memory-capacity analyses suggest that they may serve as a compressed representation for large sparse matrices. We explore this connection and its implications in this section.

\subsection{Adjacency Matrices and Simple Tensors}
A simple tensor of the tensor product $V \otimes V$ is any tensor that can be written as $v \otimes w$ for $v,w \in V$. Fixing a basis, an adjacency matrix $A$ can be expressed as the sum of the coordinate simple tensors:
\[
A = \sum_{ij: A_{ij} = 1} e_i \otimes e_j = \sum_{ij: A_{ij} = 1} e_i e_j^T
\]
where $e_i$ are the standard coordinate vectors. Note that the coordinate vectors form an orthonormal basis. Let $B = \set{b_1,\cdots,b_d}$ be another orthonormal basis, with the associated change of basis matrix $P$.
Changing bases to $B$, the coordinate form $A_B$ of the original adjacency matrix $A$ in the new basis $B$ satisfies the following equation:
\[
A_B = \sum_{ij: A_{ij}=1} b_i b_j^T = P A P^T
\]
Hence, matrices that can be expressed as sums of simple tensors of orthonormal vectors are equivalent to adjacency matrices, up to an orthogonal change of basis. Using exact orthonormal codes, our graph embedding $G$ of graph $\boldsymbol{G}$ coincides with  its adjacency matrix $A$, after an orthogonal change of basis $P$:
\[
G = P A P^T
\]
This partly explains their rich representational capacity, since they are capable of any graph operation adjacency matrices can do. However, our use of pseudo-orthogonal vectors paves the way forward for drastic compression, especially for large sparse matrices. For a graph with $d$ vertices and $k$ edges, adjacency matrices scales as $O(d^2)$. However, if we are concerned with just first-order graph operations the analysis of Section \ref{EdgeQuer} shows that replacing orthonormal vectors with pseudo-orthonormal vectors results in a graph representation that scales as $O(k)$. This has implications toward a dynamic compressed graph representations, especially for large sparse matrices. We shall explore this connection in a subsequent section.

\subsection{Generalizing Spectral Machinery} \label{SpectralMachine}
Using exact orthonormal codes, tensor-spherical embeddings are equal to adjacency matrices, up to a change of basis. Immediately, they share any basis-independent property of adjacency matrices, including the spectrum and its related machinery. Given a vector space $W$ and a orthgonal basis $B$, we can define the generalized diagonal operator on $W \otimes W$ as the operator that only preserves the self-similar tensors:
\[
Diag_B: W \otimes W \rightarrow W \otimes W \quad ; \quad b_i \otimes b_j \mapsto
\begin{cases}
b_i \otimes b_i & i = j\\
0 & i \neq j
\end{cases}
\]
Recall that the graph Laplacian $L$ for a graph $G$ with adjacency matrix $A$ and its corresponding diagonal matrix $D$, defined as:
\[
L = D - A
\]
Then, using our generalized diagonalization procedure, we can compute generalized graph Laplacian $L(G)$ of our graph embedding $G$:
\[
L(G) = Diag_B(G) - G
\]
where $Diag_B$ is the diagonalization operator with respect to our orthogonal basis $B$. Given two orthogonal bases $B_1,B_2$ linked by an orthogonal change of basis matrix $P$, the induced change of basis in the tensor product (in coordinates) would be:
\[
e_i e_j^T \mapsto P e_i e_j^T P^T
\]
Hence, a diagonalized tensor in the new basis can be expressed as:
\[
Diag_{B_1}(\sum_{i,j} a_{ij} e_i e_j^T) = \sum_i a_{ii} e_i e_i^T \mapsto  \sum_i a_{ii} (P e_i) (P e_j)^T = P \sum_i (a_{ii} e_i e_i^T )P^T 
\]
Applying this to our generalized graph Laplacian $L(G)$ gives:
\[
L(G) = Diag_V(G) - G = PDP^T - PAP^T  = P(D - A)P^T = PLP^T
\]
Our generalized graph Laplacian is in fact the usual graph Laplacian, up to an orthogonal change of basis. It has the same spectrum as the usual Laplacian, and its eigenspaces are related by a distance-preserving transformation. We note that the graph Laplacian of a graph embedding might be expensive to compute, as we shall see in the next section when computing the diagonal term for tensor-spherical embeddings. Instead, we can perform the appropriate change of basis to convert our graph embedding into an adjacency matrix before computing its Laplacian, where diagonalization just requires us to copy the diagonal entries. 

If we drop the assumption of a exact orthonormal code and instead use a psuedo-orthonormal code, the only difference is finding an appropriate analogue to the diagonalization operator. Since we will have many more codes than dimensions, we cannot use the change-of-basis trick mentioned above and instead must resort to a more intensive computation. For any vertex code $v$, let $P_v = vv^T$ denote its associated projection matrix. Conjugation by $P_v$ has the following property:
\[
P_v (uw^T) P_v = \dprod{v}{u}\dprod{v}{w} vv^T = \begin{cases}
    vv^T & u=w=v\\
    \approx 0 & u\neq v \text{ or } w \neq v
\end{cases}
\]
Hence, for a pseudo-orthornomal code $V$ we have the following diagonalization operator:
\[
Diag_V(G) = \sum_{v \in V} P_v G P_v
\]
This involves multiple matrix conjugations and can be quite expensive. However, tensor-spherical embeddings are compressed relative to an adjacency matrix, and for sparse matrices computing the spectrum of the approximate Laplacian can be cheaper than computing the spectrum of the exact Laplacian. We will discuss this in the next subsection.

\subsection{Compressing Sparse Adjacency Matrices}
Our graph embeddings are generalizations of adjacency matrices in the idealized case of exact orthonormal codes. However, in practice we use pseudo-orthonormal codes to greatly reduce the dimension; hence, our method may also be viewed as a way to compress adjacency matrices while still retaining their functionality.

Usually, the dimension of the adjacency matrix grows as $d^2$ with the size of the vertex set $d$. For sparse matrices, where the number of edges $k$ is much lower than number of total possible connections $d^2$, this is very inefficient from a memory perspective since we are using $d^2$ parameters to represent $k << d^2$ edges. Instead, we may view adjacency matrices as special cases of our graph embeddings, representing a superposition of $k$ edges. In Section \ref{EdgeQuer} we showed our graph embeddings only need to scale with the number of edges $k$ in order to preserve accurate first-order graph operations. Hence, rather than using $d^2$ parameters to represent a $k$-sparse graph, tensor-spherical embeddings use the intuitively correct number of $k$ parameters.

The compression offered by tensor-spherical embeddings not only gives memory savings but can also decrease the computational cost of common graph operations. In the previous subsection we discussed an approximate graph Laplacian for tensor-spherical embeddings of the following form:
\[
L(G) = Diag_V(G) - G \qquad ; \qquad Diag_V(G) = \sum_{v \in V} P_v G P_v
\]
For a graph with $k$ edges, our graph embeddings need be roughly $\sqrt{k} \times \sqrt{k}$ matrices. The naive complexity of both matrix multiplication and eigenvalue decomposition for $n \times n$ matrices is $O(n^3)$ \cite{MatComplex}. Hence, computing the diagonal term in the Laplacian has complexity $O(d k^{3/2})$, and performing an eigenvalue decomposition has complexity $O(k^{3/2})$. In total, computing the graph Laplacian for a tensor-spherical embedding has complexity $O((d+1)k^{3/2})$. Since $k \propto d$ for sparse matrices, the complexity of the graph Laplacian is approximately $O(d^{5/2})$. In contrast, while the diagonal term is easy to compute for an adjacency matrix, its Laplacian will be a $d \times d$ matrix and finding its eigenvalues will have complexity $O(d^3)$. A similar analysis shows that edge composition and other multiplication procedures are cheaper for tensor-spherical embeddings than adjacency matrices.

\subsubsection{Comparison to Other Sparse Representations}
Tensor-spherical embeddings need to scale with the number of edges $k$ in order to preserve accurate graph representations. How does this compare to other compressed sparse matrix representations?

First, we introduce several common sparse matrix representations and their memory cost. One intuitive represention is Dictionary of Keys (DoK), where each non-zero entry has its row-column index as a key with its corresponding value. This method scales $O(k)$ since the number of key-value pairs equals the number of non-zero entries. The coordinate list representation is similar to DoK, except we now store 3-tuples $(row,column,value)$ for each non-zero entry. Finally, the Compressed Sparse Row (CSR) represents a sparse matrix using three arrays containing the non-zero values, their column indices, and the number of non-zero entries above each row respectively. The first two arrays are of length $k$, while the third is of length $d+1$ where we have $d$ total vertices. Since $d \propto k$ in a sparse matrix, we see that we also need $O(k)$ parameters. Tensor-spherical embeddings match the memory requirements of these sparse matrix representations.

Now, we analyze the complexity of matrix addition for both methods. For each of the sparse matrix representations discussed, matrix addition is $O(k)$ since we iterate through each value, check if they have matching indices, and sum their values if they do \cite{CormenAlgo} \cite{MatComplex}. Similarly, matrix addition of dense matrices of size $n$ is $O(n)$. Since tensor-spherical embeddings are of size $k$, their addition also has complexity $O(k)$.

Finally, we analyze the complexity of matrix multiplications for each representation. Consider two sparse $d \times d$ matrices $A$ and $B$ with $k_1$ and $k_2$ non-zero entries respectively. The worst-case scenario is when all the non-zero entries are concentrated on a single row and column. This gives an intuitive upper bound of $O(k_1 k_2)$, and this upper bound holds when $d^2 >> k_1,k_2$ \cite{SparseCompBorna}. As for tensor-spherical embeddings, the naive algorithm for the multiplication of two $d \times d$ matrices has complexity $O(d^3)$, while Strassen's algorithm improves this to $O(d^{2.372})$ \cite{CormenAlgo}. In order to retain accuracy for a second-order operation like edge composition and matrix multiplication, we saw in Section \ref{EdgeComp} that a $d \times d$ tensor-spherical embedding can store at most $d^{3/2}$ edges. Thus, in order to store $k$ edges we need them to be roughly $k^{2/3} \times k^{2/3}$ matrices, so their multiplication would be $O(k^2)$ assuming both embeddings have at most $k$ edges. This matches the complexity of the other representations when $k_1 \propto k_2$. 

In both memory cost and complexity of linear algebra operations, tensor-spherical embeddings match those of the considered sparse representations - DoK, coordinate list, and CSR format. However, tensor-spherical embeddings have at least one advantage over these methods. As vector representations, they enjoy all of the hardware and software optimizations developed for linear algebraic operations, including advances in parallel computing. Moreover, there has been extensive research in specialized hardware for hyperdimensional computing \cite{kleykoHDChard} \cite{HDCShearer} \cite{Li2016HyperdimensionalCW}, which can further augment the computational advantages of vector graph representations.

\subsection{Proximity to Adjacency Matrices}

In this section, we establish a rough bound on how close our graph embeddings are to adjacency matrices. Let us assume we $d$-dimensional vertex codes for $m$ vertices $\set{v_1,\cdots,v_m}$ where $m \leq d$; moreover, this vertex code is $\epsilon$-orthonormal, which holds with high probability for spherical codes from Theorem \ref{Optimality}. Then, we see that a graph embedding generated from such a code is close to some adjacency matrix.
\begin{theorem}
For a $d$-dimensional, $\epsilon$-orthonormal code $V = \set{v_1,\cdots,v_m}$ where $m \leq d$, let $G$ be any graph embedding using $V$ with $n$ distinct edges and $A$ be the corresponding adjacency matrix induced from this vertex ordering. Then, there exists an orthogonal matrix $P$ such that:
\[
||G-PAP^T||_F = ||P^T G P - A||_F < O(n m \epsilon)
\]
\begin{proof}
We use the Gram-Schmidt process to compute an orthogonal set of vectors $u_i$ from the vertex code $\set{v_i}$:
\[
u_i = v_i - \sum_{j=1}^{i-1} \frac{\dprod{v_i}{u_j}}{\dprod{u_j}{u_j}} u_j = v_i - \sum_{j=1}^{i-1} \dprod{v_i}{\overline{u}_j}\overline{u}_j
\]
where $\overline{u}_j$ is the unit-length version of $u_j$. By $\epsilon$-orthonormality, we know that $|\dprod{v_i}{v_j}| < \epsilon$ for all $i \neq j$. We know that $u_1 = \overline{u}_1 = v_1$. Then, for $u_2$: 
\[
1- \epsilon \leq ||u_2|| = ||v_2 - \dprod{v_2}{\overline{u}_1} \overline{u}_1|| \leq 1 + \epsilon
\]
and
\[
||v_2 - u_2|| = ||\dprod{v_2}{\overline{u}_1} \overline{u}_1|| < \epsilon
\]
Hence,
\[
||v_2 - \overline{u}_2|| \leq ||v_2 - u_2|| + ||u_2 - \overline{u}_2|| \leq 2\epsilon
\]
Similarly, for $||u_3|| = ||v_3 - \dprod{v_3}{\overline{u}_1} \overline{u}_1 - \dprod{v_3}{\overline{u}_2}\overline{u}_2||$, let us unravel the third term. We first analyze $\dprod{v_3}{u_2}$:
\[
|\dprod{v_3}{u_2}| = |\dprod{v_3}{v_2} - \dprod{v_2}{\overline{u}_1}\dprod{v_3}{\overline{u}_1}| < \epsilon + \epsilon^2
\]
This means that $|\dprod{v_3}{\overline{u}_2}| \leq \frac{\epsilon + \epsilon^2}{1-2\epsilon} = O(\epsilon)$.
This gives:
\[
1 - O(2\epsilon) = 1-\frac{O(2\epsilon)}{1-\epsilon} \leq ||u_3|| \leq 1 + O(2\epsilon)
\]
and so:
\[
||v_3 - \overline{u}_3|| \leq O(3 \epsilon)
\]
Repeating a similar analysis, we see that for $m$ $\epsilon$-orthogonal codes $\set{v_1,\cdots, v_m}$, there exists an orthonormal basis $\set{\overline{u}_1,\cdots,\overline{u}_m}$ such that $||v_i - \overline{u}_i|| < O(m \epsilon)$. Hence, let $G_V$ be any graph embedding using the vertex codes $\set{v_i}$ and $G_U$ be the corresponding matrix derived from swapping $v_i$ with $\overline{u}_i$. Letting $\delta_i = v_i - \overline{u}_i$, we compute the distance between the two matrices:
\begin{align*}
||G_V - G_U||_F &= ||\sum^n_{k=1} v_{i_k} v_{j_k}^T - \overline{u}_{i_k} \overline{u}_{j_k}^T||_F\\
&\leq \sum^n_{k=1} ||v_{i_k} v_{j_k}^T - \overline{u}_{i_k} \overline{u}_{j_k}^T || \\
&= \sum^n_{k=1} ||v_{i_k} v_{j_k}^T - (v_{i_k}- \delta_{i_k}) (v_{j_k} - \delta_{j_k})^T|| \\
&= \sum^n_{k=1} ||v_{i_k} v_{j_k}^T -[v_{i_k} v_{j_k}^T + \delta_{i_k}\delta_{j_k}^T - v_{i_k}\delta_{j_k}^T - \delta_{i_k}v_{j_k}^T] ||\\
&\leq \sum^n_{k=1} ||\delta_{i_k}\delta_{j_k}^T|| + ||v_{i_k}\delta_{j_k}^T|| + ||\delta_{i_k}v_{j_k}^T||\\
&\leq O(n m \epsilon)
\end{align*}
As $G_U$ is a sum of the outer products between the orthonormal $\overline{u}_i$, we see that $U^T G_U U$ is an adjacency matrix. Since the Frobenius norm is preserved under orthogonal basis change, we see that $G_V$ is $O(n m \epsilon)$ close to an adjacency matrix with respect to the Frobenius norm.
\end{proof}
\end{theorem}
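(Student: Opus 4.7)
The plan is to reduce the problem to an adjacency matrix by replacing the $\epsilon$-orthonormal vertex codes with a genuine orthonormal basis close to them, and then track how much the graph embedding moves under this replacement in Frobenius norm. First, I would apply Gram-Schmidt to $\set{v_1, \ldots, v_m}$ to produce an orthonormal set $\set{\overline{u}_1, \ldots, \overline{u}_m}$, and prove inductively that $\|v_i - \overline{u}_i\| \leq O(i \epsilon)$. The induction has two ingredients at each step: the projection coefficients $\dprod{v_i}{\overline{u}_j}$ must remain small (they start at $|\dprod{v_i}{v_j}| < \epsilon$ by assumption and inherit only lower-order corrections from the closeness of $\overline{u}_j$ to $v_j$), and the normalization factor $\|u_i\|$ must stay close to $1$ (which follows from subtracting small projections from a unit vector, using $\epsilon \ll 1/m$). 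Combining these bounds gives $\|v_i - \overline{u}_i\| = O(i\epsilon) \leq O(m\epsilon)$ uniformly for $i \leq m$.

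Second, letting $\delta_i = v_i - \overline{u}_i$ with $\|\delta_i\| = O(m\epsilon)$, I would define $G_U$ by replacing each occurrence of $v_i$ in $G$ with its orthonormal counterpart $\overline{u}_i$. Expanding each rank-one difference by bilinearity gives
\[
v_i v_j^T - \overline{u}_i \overline{u}_j^T \;=\; v_i \delta_j^T + \delta_i v_j^T - \delta_i \delta_j^T,
\]
and each summand has Frobenius norm $O(m\epsilon)$ since $\|v_i\| = 1$. Summing over the $n$ edges via the triangle inequality yields $\|G - G_U\|_F \leq O(n m \epsilon)$. To finish, I invoke the observation from the start of the section: a sum of simple tensors over an orthonormal basis equals $P A P^T$ for an adjacency matrix $A$ and the orthogonal matrix $P$ whose columns extend $\set{\overline{u}_1, \ldots, \overline{u}_m}$ to a full basis. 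Since the Frobenius norm is invariant under orthogonal conjugation, the bound transfers to $\|G - PAP^T\|_F$, proving the claim.

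The main obstacle is controlling the Gram-Schmidt perturbation without letting errors compound uncontrollably. At step $i$, the estimate for $\|v_i - \overline{u}_i\|$ depends on the previous estimates for $\|v_j - \overline{u}_j\|$ with $j < i$, and naively those could accumulate multiplicatively through the subtractions and the normalization. The saving grace is that each new projection coefficient is itself an $\epsilon$-small quantity, so cross-corrections like $\dprod{v_i}{\overline{u}_j} - \dprod{v_i}{v_j}$ are of order $\epsilon \cdot (j\epsilon) = O(j \epsilon^2)$, a strictly lower-order contribution. Carefully separating these first-order and higher-order terms and collapsing them into a single linear-in-$i$ bound is where the technical care lies; once that is in place, the remainder of the proof is routine bilinearity and triangle inequality.
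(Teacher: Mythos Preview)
Your proposal is correct and follows essentially the same argument as the paper: Gram--Schmidt on the $\epsilon$-orthonormal code to obtain $\overline{u}_i$ with $\|v_i - \overline{u}_i\| = O(i\epsilon) \le O(m\epsilon)$, then the bilinear expansion $v_i v_j^T - \overline{u}_i \overline{u}_j^T = v_i\delta_j^T + \delta_i v_j^T - \delta_i\delta_j^T$ summed over the $n$ edges, finishing with the orthogonal change-of-basis observation. Your discussion of the Gram--Schmidt error control is in fact more careful than the paper's, which simply works out the first three steps and then asserts the pattern.
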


\section{Experiments}\label{Experiments}
The interested reader is encouraged to look at the accompanying code if they wish to replicate the experiments of this section. We perform several experiments confirming several theoretical results. Firstly, we validate our claim that the dimension of tensor-spherical embeddings needs to scale with the number of edges $k$ rather than the size of the vertex set $n$ to retain accurate graph operations, even in the presence of vertex overlap as in Sections \ref{EQ:ConnCorrection} and \ref{EC:ConnCorr}. This supports our earlier claim that tensor-spherical embeddings offer a drastic compression of adjacency matrices, especially for sparse matrices. Secondly, we compare the accuracy of the query and edge between the tensor-spherical and Hadamard-Rademacher scheme when increasing the number of edges in the embeddings. 

\subsection{Effects of Vertex Set Size and Vertex Connectivity}
We look at the effect of the number of vertices and vertex connectivity on the accuracy of a edge query and edge composition for tensor-spherical embeddings. Specifically, for a given codebook size we generate a graph by sampling vertex pairs, without replacement, from the codebook. This procedure will cause some of the nuisance edges share a vertex with the query edge: for a graph with $k$ edges generated from a codebook of size $n$, on average approximately $\frac{2k}{n}$ of the edges will share one vertex with the query edge. For both the edge query and edge composition, we looked at the performance in both a positive query - where query edge was present in the graph - and a spurious query - where the query edge was absent in the graph.

Here are the details of the experiment. We fixed the number of edges at 64 and varied the vertex codebook size from 64 to 2048. The vertex code dimension was fixed to 16, meaning our graph embeddings were $16 \times 16$ matrices. Each codebook was generated by independently sampling uniformly from the 64-dimensional unit hypersphere, each graph was generated by randomly selecting 64 pairs of vertices from the codebook without replacement. For the edge query, the query edge was subtracted in the spurious case. For the edge composition, in the positive case an extra edge was added whose source vertex was the query edge's target vertex and target vertex was separately generated; this was done to prevent a repeat composable edge. At each codebook size, we repeated the edge query/edge composition 200 times, recording both the average query score and its standard deviation. The experiment results are shown in Table \ref{tab:CB_VerConn}.

The results show that the size of the vertex set has a positive impact on the accuracy of tensor-spherical embeddings, as the query scores' standard deviation decreases as the number of vertices increases. This is driven by decreasing vertex overlap with the query edge as the vertex set increases. This confirms that in order to retain accuracy, tensor-spherical embeddings do not scale with the size of the vertex set. Interestingly, our graph embeddings were $16 \times 16$ matrices yet were able to accurately store 64 edges in superposition. A corresponding adjacency matrix would need to be a $128 \times 128$ matrix - an 64-fold increase in the number of parameters used.

\begin{table}[!htb]
    \centering
    \begin{tabular}{|c|c|c|c|c|c|c|c|}
    \hline
    \multicolumn{8}{|c|}{\textbf{Effect of Vertex Set Size and Vertex  Connectivity}} \\
    \hline
    \hline
    \multicolumn{8}{|c|}{\textbf{Edge Query}} \\
    \hline
    Vertex Set Size & 32 & 64 & 128 & 264 & 512 & 1024 & 2048\\
    \hline
    Positive Query & .99 & 1.04 & .97 & .99& 1.03 & 1.03 &1.02\\
    & (.60) & (.60) & (.54) & (.51) & (.53) & (.50) & (.51)\\
    \hline
    Spurious Query & -.11 & .06 & -.02 & -.01 & .03 & -.03 & -.06\\
    & (.64) & (.60) & (.54) & (.49) & (.51) & (.51) & (.52)\\
    \hline
    \multicolumn{8}{|c|}{} \\
    \hline
    \multicolumn{8}{|c|}{\textbf{Edge Composition}} \\
    \hline
    Positive Query & 1.03 & 1.26 & 1.13 & 1.06 & 1.09 & .78 & 1.05\\
    & (1.14) & (1.68) & (1.30) & (1.32) & (1.28) & (1.17) & (1.21)\\
    \hline
    Spurious Query & .03 & .10 & .05 & -.03 & .04 & .04 & .05\\
    & (.92) & (1.40) & (.99) & (1.06) & (1.03) & (1.03) & (.95)\\
    \hline
    \end{tabular}
    

    \caption{Each cell contains the average of 200 trials, with the standard deviation in parentheses. The positive query has an ideal score of 1, while the spurious query has an ideal score of 0. For codebook size $n$, on average approximately $\frac{128}{n}$ of the edges in the generated graph share a vertex with the query edge.}
    \label{tab:CB_VerConn}
\end{table}

\clearpage
\subsection{Effect of Edgeset Size on Accuracy}
We also ran simulations to confirm the memory-capacity results of Sections \ref{EdgeQuer} and \ref{EdgeComp}. For both tensor-spherical embeddings and Hadamard-Rademacher embeddings, we looked their performance in the edge query and edge composition task. The experimental procedure was the same as that of the last section with some key differences. Firstly, we now fix the vertex codebook size to 64 and instead vary the number of edges in superposition from 8 to 512. Secondly, while we fixed the vertex code dimension to 16 for tensor-spherical embeddings, for Hadamard-Rademacher embeddings we set the vertex code dimension to $16^2 =256$. This was done so each embedding used the same number of parameters. Finally, note that from Sections \ref{EdgeQuer} and \ref{EdgeQuer} the query scores of Hadmard-Rademacher embeddings are scaled by their dimension. Hence, we divided the Hadamard-Rademacher query scores by 256 to normalize them and make them comparable to the tensor-spherical scores. The results are shown in Table \ref{tab:MemCapComparison}.

The results coincide with our theoretical analyses, and in particular they demonstrate the differing efficiency of the tensor and Hadamard products in edge composition.  From the results of Section \ref{EdgeComp}, we showed that Hadmard-Rademacher embeddings of size $n$ can store at most $O(\sqrt{n})$ edges; for embeddings of size 256 in this experiment, the number of edges they can store is proportional to 16. We see that for embeddings that contain 8 edges, the positive query scores already significantly deviate from the ideal value of 1, and the standard deviation magnitudes dwarf the average when the embeddings contain at least 48 edges. Conversely, tensor-spherical of size $n$ can store $O(n^{\frac{3}{4}})$ edges; for embeddings of size 256 in this experiment, the number of edges they can accurately store is proportional to 64. Correspondingly, tensor-spherical edge composition scores do not significantly deviate from the ideal value of 1 until 264 edges, and the error standard deviations do not scale as poorly as those of Hadamard-Rademacher embeddings.

\begin{table}[!htb]
    \centering
    \begin{tabular}{|c|c|c|c|c|c|c|c|c|}
    \hline
    \multicolumn{9}{|c|}{\textbf{Effect of Edgeset Size}} \\
    \hline
    \hline
    \multicolumn{9}{|c|}{\textbf{Tensor-Spherical: Edge Query}} \\
    \hline
    Edgeset Size & 8 & 16 & 32 & 48 & 64 & 128 & 264 & 512\\
    \hline
    Positive Query & 1.02 & .98  & 1.05  & 1.00  & 1.03 & 1.01  & 1.14 & 1.06 \\
    &  (.22) & (.30) & (.52) & (.48) & (.56) & (.78) &  (1.21) & (1.78)\\
    \hline
    Spurious Query & .01  & .01  & .04  & .01  & .01  & .08  & .01  & -.13 \\
    & (.22) & (.30) & (.38) & (.55) & (.59) & (.86) &  (1.26) & (1.81)\\
    \hline
    \multicolumn{9}{|c|}{} \\
    \hline
    \multicolumn{9}{|c|}{\textbf{Hadamard-Rademacher: Edge Query}} \\
    \hline
    Positive Query & 1.02  & .99  & 1.03  & 1.01  & .98  & 1.05  & 1.12  & 1.35 \\
    & (.17) & (.29) & (.32) & (.42) & (.52) & (.70) & (1.30) & (1.98)\\
    \hline
    Spurious Query & .00  & .01  & .04  & -.03  & .03  & .02  & .15  & .36 \\
    & (.18) & (.26) & (.36) & (.42) & (.54) & (.81) & (1.15) & (1.83)\\
    \hline
    \multicolumn{9}{|c|}{} \\
    \hline
    \hline
    \multicolumn{9}{|c|}{} \\
    \hline
    \multicolumn{9}{|c|}{\textbf{Tensor-Spherical: Edge Composition}} \\
    \hline
    Positive Query & 1.05  & 1.10  & 1.11  & 1.09  & 1.20 & 1.04  & 2.53  & 7..92 \\
    & (.36) & (.54) & (.91) & (1.13) & (1.53) &  (3.12) & (6.59) & (14.9)\\
    \hline
    Spurious Query & .01  & .00  & .07  & -.02  & .11  & .32  & .54  & 3.87 \\
    & (.13) & (.29) & (.59) & (.97) & (1.34) & (2.61) & (5.94) & (14.36)\\
    \hline
    \multicolumn{9}{|c|}{} \\
    \hline
    \multicolumn{9}{|c|}{\textbf{Hadamard-Rademacher: Edge Composition}} \\
    \hline
    Positive Query & 2.23  & 2.14  & 3.05  & 3.01  & 3.80  & 7.91  & 11.34  & 45.95 \\
    & (.89) & (1.85) &  (3.65) & (5.15) & (6.75) & (13.39) & (32.18) & (79.99)\\
    \hline
    Spurious Query & .09  & .38  & .52  & .76  & 1.77  & 2.12  & 11.46  & 43.12 \\
    & (.71) & (1.41) & (.34) & (5.06) & (6.97) & (15.16) & (33.62) & (74.72)\\
    \hline
    \end{tabular}
    

    \caption{Edge query and edge composition scores for tensor-spherical embeddings and Hadamard-Rademacher embeddings as the number of stored edges is increased. The average score of 200 trials is recorded, with the standard deviation shown in parentheses. Both embeddings had dimension 256. For both graph operations, the positive query has an ideal score of 1 and the spurious query has an ideal score of 0.}
    \label{tab:MemCapComparison}
\end{table}

\clearpage
\section{Conclusion}
In this paper, we showcased the many nice theoretical properties of tensor-spherical graph embeddings. As compressed generalizations of adjacency matrices, they are capable of a wide range of graph operations while sidestepping the scaling issues that plague adjacency matrices. In particular, we demonstrated their utility in representing sparse matrices, showing they match other sparse matrix representations (dictionary of keys, coordinate list, compressed sparse row) in both parameter usage and complexity of graph operations. Our theoretical and experimental results confirmed that tensor-spherical embeddings can drastically compress adjacency matrices while still retaining accurate graph functionality. The results contained in this paper are incomplete, and the utility of tensor-spherical embeddings as sparse matrix representations is an interesting direction of inquiry.

From the hyperdimensional computing view, we also gave several theoretical properties of the tensor product that make it an attractive binding operation for graph embeddings. First, we established a mathematical link between the tensor product and the superposition principle, showing that it is the most general and hence most expressive binding operation among all binding operations that respect superposition. Our analysis also showed that tensor-spherical embeddings have the same memory-capacity scaling and superior memory-capacity ratio as Hadamard-Rademacher embeddings, suggesting the Hadamard product and its related binding operations do not offer any actual memory savings. We demonstrated that this is due in part to the tensor product's ability to leverage pseudo-orthogonality to control error terms, which the Hadamard product lacks. 

\section{Acknowledgements}
We look like to thank Bruno Olshausen, Giles Hooker, and Denis Kleyko for their help and many fruitful discussions. Their suggestions helped clarify the results of this work and contextualize it, as well as raise interesting directions of inquiry. We would also like to thank the Redwood Center for Theoretical Neuroscience, whose members' research and talks inspired this work.
\clearpage

\bibliography{refs.bib}
\bibliographystyle{plain}

\end{document}